\providecommand{\U}[1]{\protect\rule{.1in}{.1in}}
\newtheoremstyle{example}{\topsep}{\topsep}
{}
{}
{\bfseries}
{}
{  }
{\thmname{#1}\thmnumber{ #2}. \thmnote{ (#3)}}
\newtheorem{theorem}{Theorem}
\newtheorem{conjecture}[theorem]{Conjecture}
\newtheorem{corollary}[theorem]{Corollary}
\newtheorem{lemma}[theorem]{Lemma}
\theoremstyle{example}
\newtheorem{example}[theorem]{Example}
\newcommand{\nc}{\newcommand}
\nc{\rnc}{\renewcommand}
\nc{\beq}{\begin{equation}}
\nc{\eeq}{{\end{equation}}}
\nc{\beqa}{\begin{eqnarray}}
\nc{\eeqa}{\end{eqnarray}}
\nc{\lbar}[1]{\overline{#1}}
\nc{\bra}[1]{\langle#1|}
\nc{\ket}[1]{|#1\rangle}
\nc{\ketbra}[2]{|#1\rangle\!\langle#2|}
\nc{\braket}[2]{\langle#1|#2\rangle}
\nc{\proj}[1]{| #1\rangle\!\langle #1 |}
\nc{\avg}[1]{\langle#1\rangle}
\nc{\smfrac}[2]{\mbox{$\frac{#1}{#2}$}}
\nc{\tr}{\operatorname{tr}}
\nc{\tracedist}[1]{\Delta_{}\!\left( #1 \right)}
\nc{\fid}[1]{F\!\left( #1 \right)}
\nc{\ox}{\otimes}
\nc{\dg}{\dagger}
\nc{\dn}{\downarrow}
\nc{\cA}{{\cal A}}
\nc{\cB}{{\cal B}}
\nc{\cC}{{\cal C}}
\nc{\cD}{{\cal D}}
\nc{\cE}{{\mathcal E}}
\nc{\cF}{{\cal F}}
\nc{\cG}{{\cal G}}
\nc{\cH}{{\cal H}}
\nc{\cI}{{\cal I}}
\nc{\cJ}{{\cal J}}
\nc{\cK}{{\cal K}}
\nc{\cL}{{\cal L}}
\nc{\cM}{{\cal M}}
\nc{\cN}{{\cal N}}
\nc{\cO}{{\cal O}}
\nc{\cP}{{\cal P}}
\nc{\cR}{{\cal R}}
\nc{\cS}{{\cal S}}
\nc{\cT}{{\cal T}}
\nc{\cU}{{\cal U}}
\nc{\cX}{{\cal X}}
\nc{\cZ}{{\cal Z}}
\nc{\entI}{{\bf I}}
\nc{\entIarrow}{{\bf I}^{\leftarrow}}
\nc{\entH}{{\bf H}}
\nc{\entS}{{\bf S}}
\nc{\entHmin}{H_{\min}}
\nc{\entF}{{\bf E}_f}
\nc{\isom}{\simeq}
\nc{\rank}{\operatorname{rank}}
\nc{\rar}{\rightarrow}
\nc{\lrar}{\longrightarrow}
\nc{\polylog}{\operatorname{polylog}}
\nc{\poly}{\operatorname{poly}}
\nc{\weight}{\textbf{w}}
\nc{\hamdist}{d_{H}}
\def\e{\epsilon}
\def\U{\Upsilon}
\nc{\Sp}{{{\mathbb S}}}
\nc{\RR}{{{\mathbb R}}}
\nc{\CC}{{{\mathbb C}}}
\nc{\FF}{{{\mathbb F}}}
\nc{\NN}{{{\mathbb N}}}
\nc{\ZZ}{{{\mathbb Z}}}
\nc{\PP}{{{\mathbb P}}}
\nc{\QQ}{{{\mathbb Q}}}
\nc{\UU}{{{\mathbb U}}}
\nc{\OO}{{{\mathbb O}}}
\nc{\EE}{{{\mathbb E}}}
\nc{\id}{{\operatorname{id}}}
\nc{\qubitchannel}{\id_2}
\nc{\bitchannel}{\overline{\id}_2}
\nc{\be}{\begin{equation}}
\nc{\ee}{{\end{equation}}}
\nc{\bea}{\begin{eqnarray}}
\nc{\eea}{\end{eqnarray}}
\nc{\Hom}[2]{\mbox{Hom}(\CC^{#1},\CC^{#2})}
\nc{\rU}{\mbox{U}}
\nc{\ob}[1]{#1}
\nc{\red}[1]{{\color{red} #1}}
\nc{\rem}[1]{{\color{gray} #1}}
\begin{document}

\title{Classical communication over a quantum interference channel}

\author{Omar Fawzi, Patrick Hayden, Ivan Savov, Pranab Sen, and Mark M. Wilde\thanks{Omar Fawzi, Patrick Hayden, Ivan Savov, and Mark M. Wilde are with the School of Computer Science, McGill University, Montr\'{e}al, Qu\'{e}bec, Canada H3A 2A7. Pranab Sen is with the School of Technology and Computer Science, Tata Institute of Fundamental Research, Mumbai, India. Patrick Hayden was a visitor with the Perimeter Institute for Theoretical Physics, Waterloo, Ontario, Canada while conducting this research, and Pranab Sen was a visitor with the School of Computer Science, McGill University.

This paper was presented in part at the Forty-Ninth Annual Allerton Conference, Sept.~28-30, 2011 and the
2012 Quantum Information Processing Conference, Dec.~12-16, 2011.}}
\maketitle

\begin{abstract}
Calculating the capacity of interference channels is a notorious open
problem in classical information theory. Such channels have two senders and
two receivers, and each sender would like to communicate with a partner
receiver. The capacity of such channels is known exactly in the settings of
\textquotedblleft very strong\textquotedblright\ and \textquotedblleft
strong\textquotedblright\ interference, while the Han-Kobayashi coding strategy
gives the best known achievable rate region in the general case.

Here, we introduce and study the quantum interference channel, a natural
generalization of the interference channel to the setting of quantum
information theory. We restrict ourselves for the most part to channels with
two classical inputs and two quantum outputs in order to simplify the
presentation of our results (though generalizations of our results to channels
with quantum inputs are straightforward). We are able to determine the exact
classical capacity of this channel in the settings of \textquotedblleft very
strong\textquotedblright\ and ``strong'' interference, by exploiting Winter's
successive decoding strategy and a novel two-sender
quantum simultaneous decoder, respectively. We provide a proof that a
Han-Kobayashi strategy is achievable with Holevo information rates, up to a
conjecture regarding the existence of a three-sender quantum simultaneous decoder.
This conjecture holds for a special class of quantum multiple access
channels with average output states that commute, and we discuss some
other variations of the conjecture that hold. Finally, we detail a connection
between the quantum interference channel and prior work on the capacity of
bipartite unitary gates.

\end{abstract}

\begin{IEEEkeywords}quantum Shannon theory, classical
communication, quantum interference channel, quantum simultaneous decoding, 
quantum successive decoding, unitary gate capacity\end{IEEEkeywords}

\section{Introduction}

Classical information theory came as a surprise to the communication engineers
of the 1940s and '50s~\cite{P73,V98}. It was astonishing that two-terminal
noisy communication channels generally have a non-zero capacity at which two
parties can communicate error-free in the asymptotic limit of many channel
uses, and furthermore, that the computation of this capacity is a
straightforward convex optimization problem~\cite{bell1948shannon}---many
consider the achievements of Shannon to be among the great scientific
accomplishments of the last century. Soon after this accomplishment, Shannon
laid the foundations for multi-user information theory, and he claimed that a
three-terminal communication channel with two senders and one receiver also
has a simple, elegant solution \cite{S61,V98}.\ Some time later, Liao and
Ahlswede provided a formal proof of the capacity of this multiple access
channel without any knowledge of Shannon's unpublished solution~\cite{L72,A74}%
. The beauty of information theory in these two settings is that it offers
elementary solutions to problems that, at the outset, seem to be
extraordinarily difficult to solve.

The situation for more general communication scenarios in multi-user
information theory is not as simple and elegant as it is for single-sender,
single-receiver channels and multiple access channels~\cite{el2010lecture}.
For example, the capacity of the interference channel is one of the notorious
open problems in classical information theory~\cite{K06}. The interference
channel refers to the setting in which a noisy communication channel connects
two senders to two receivers, and each sender's goal is to communicate with a
partner receiver. Each sender's transmission can interfere with the other's,
and this is one reason (among many) that the problem is difficult to solve in
the general case. This channel arises naturally in the context of data
transmission over interfering wireless links or digital subscriber
lines~\cite{K06}. Shannon himself introduced the problem and attempted to
solve it~\cite{S61}, but it is the later work of others that would
provide ongoing improvements to the inner and outer bounds for the capacity of
the interference
channel~\cite{carleial1975case,Sato77,Sato1978,S78,sato1981capacity,HK81,K04}.

Carleial offered the first surprising result for the interference
channel~\cite{carleial1975case}, by demonstrating that each sender can
achieve the same rates of communication as if there is no interference at all if the interference from the
other sender's transmission is \textquotedblleft very
strong.\textquotedblright\ Carleial's solution is to have each receiver decode
the other sender's message first and follow by decoding the partner sender's
message, rather than each receiver simply treating the other sender's
transmission as noise. Thus, Carleial's strategy demonstrates that we can
achieve improved communication rates by taking advantage of interference
rather than treating it as an obstacle. Sato then gave a full characterization
of the capacity of the Gaussian interference channel in the setting of
\textquotedblleft strong\textquotedblright%
\ interference~\cite{sato1981capacity}, by appealing to an earlier result of
Ahlswede regarding the capacity of a compound multiple access
channel~\cite{A74}. Han and Kobayashi independently found Sato's result,
and they built on these insights and applied them
to the most general setting (not necessarily \textquotedblleft
strong\textquotedblright\ or \textquotedblleft very strong\textquotedblright%
\ interference) %
by allowing for each decoder to partially decode the other sender's message 
and use this information to better decode the message intended for them~\cite{HK81}. 
The resulting achievable rate region
is known as the Han-Kobayashi rate region, and it is currently the best known
inner bound on the capacity of the interference channel.\footnote{Chong,
Motani, and Garg subsequently proposed another achievable rate region
originally thought to improve the Han-Kobayashi rate region \cite{CMG06}, but
later work demonstrated that the Chong-Motani-Garg achievable rate region is
equivalent to the Han-Kobayashi region \cite{CMGE08,kobayashi2007further}.}

The model of the interference channel as stated in the above works is an
important practical model for data transmission in a noisy two-input,
two-output network, but it ignores a fundamental aspect of the physical
systems employed to transmit this data. At bottom, these physical systems
operate according to the laws of quantum mechanics~\cite{book2000mikeandike},
and ultimately, at some level, these laws govern how noise can affect these
systems. Now, for many systems (macroscopic ones in particular), these laws
are not necessary to describe the dynamics of encoding, transmission, and
decoding, and one could argue in this case that there is not any benefit to
recasting information theory as a \textit{quantum} information theory because
it would only add a layer of complexity to the theory. However, there are
examples of natural physical systems, such as fiber optic cables or free space
channels, for which quantum information theory offers a boost in capacity if
the coding scheme makes clever use of quantum
mechanics~\cite{PhysRevLett.92.027902}. Thus, it is important to determine the
information capacities of quantum channels, given that the physical carriers
of information are quantum and quantum effects often give a boost in capacity.
In Ref.~\cite{PhysRevLett.92.027902}, it is shown that a receiver making use of a collective
measurement operating on all of the channel outputs has an improvement in
performance over a receiver decoding with single-channel-output measurements.
Additionally, there are existential arguments for examples of channels in
which entanglement at the encoder can improve performance, leading to
superadditive effects that simply cannot occur in classical information
theory~\cite{H09}.

The quantum-mechanical approach to information theory has shed a new light on
the very nature of information, and researchers have made much progress on
this front in the past few decades \cite{book2000mikeandike}. Perhaps the most
fundamental problem in quantum information theory is the task of transmitting
bits over a quantum channel. Holevo and Schumacher-Westmoreland (HSW)
offered independent proofs that the Holevo information, one generalization of
Shannon's mutual information, is an achievable rate for classical data transmission over
a quantum channel \cite{ieee1998holevo,PhysRevA.56.131}. Many researchers
thought for some time that the Holevo information of a quantum channel would
be equal to its classical capacity, but recent work has demonstrated that the
answer to the most fundamental question of the classical capacity of a quantum
channel remains wide open in the general case \cite{H09,HW08}.

Soon after the HSW\ result, quantum information theorists began exploring other avenues, one of
which is multi-user quantum information theory. Winter proved that the
capacity region of a quantum multiple access channel is a natural
generalization of the classical solution, in which we can replace Shannon
information rates with Holevo information rates~\cite{winter2001capacity}. It
was not obvious at the outset that this solution would be possible---after
all, any retrieval of data from a quantum system inevitably disturbs the state
of the system, suggesting that successive decoding strategies employed in the
classical case might not work for quantum systems~\cite{book1991cover}. But
Winter overcame this obstacle by realizing that a so-called \textquotedblleft
gentle\textquotedblright\ or \textquotedblleft tender\textquotedblright%
\ measurement, a measurement with an outcome that succeeds with high
probability, effectively causes no disturbance to the state in the asymptotic
limit of many channel uses. Later, Yard \textit{et al}.~considered various
capacities of a quantum broadcast channel~\cite{YHD2006}, and they found
results that are natural generalizations of results from classical multi-user
information theory~\cite{B73,el2010lecture}. In parallel with these
developments, researchers have considered many generalizations of the above
settings, depending on the form of the transmitted information
\cite{PhysRevA.55.1613,capacity2002shor,ieee2005dev,qcap2008first,nature2005horodecki,Horodecki:2007:107}%
, whether assisting resources are available
\cite{ieee2002bennett,SSW08,itit2008hsieh,DHL10}, or whether the sender and
receiver would like to trade off different resources against each other
\cite{cmp2005dev,DHW08,HW08a}.

\section{Summary of Results}

In this paper, we introduce the quantum interference channel, a natural
generalization of the interference channel to the quantum domain. We at first
restrict our discussion to a particular \textit{ccqq} quantum interference
channel, which has two classical inputs and two quantum outputs. 
This restriction simplifies the presentation,
and a straightforward extension of our results leads to results for a general
quantum interference channel with quantum inputs and quantum outputs. We summarize
our main results below:

\begin{itemize}

\item Our first
contribution is an exact characterization of the capacity region of a
\textit{ccqq} quantum interference channel with \textquotedblleft very
strong\textquotedblright\ interference---the result here is a straightforward
generalization of Carleial's result from Ref.~\cite{carleial1975case}.

\item Our second contribution
is a different exact characterization of the capacity of a \textit{ccqq} channel that
exhibits ``strong interference.'' This result employs a novel quantum simultaneous decoder
for quantum multiple access channels with two classical inputs and one quantum output.

\item Our next contribution is a quantization of the Han-Kobayashi achievable rate
region, up to a conjecture regarding the existence of a quantum simultaneous
decoder for quantum multiple access channels with three classical inputs and one quantum output.
We prove that a three-sender quantum
simultaneous decoder exists in the special case where the induced channel to
each receiver has average output states that commute, but we have not been able to prove the
existence of such a decoder in the general case (neither is it clear how to leverage the proof
of the two-sender simultaneous decoder). 
We prove that a certain rate region described in terms of min-entropies~\cite{R60,R05} is achievable for the general non-commuting case,
and our suspicion is that a proof for the most
general case should exist and will bear similarities to these proofs. The
existence of such a simultaneous decoder immediately implies that the senders
and receivers can achieve the rates  on the Han-Kobayashi inner bound. This conjecture
is also closely related to the the ``multiparty typicality'' conjecture formulated in \cite{D11}.

\item 
We also describe an achievable rate region for the quantum interference channel based on 
a successive decoding and rate splitting strategy~\cite{sasoglu2008successive}.

\item
We supply an outer bound on the capacity of the quantum
interference channel, similar to Sato's outer bound from Ref.~\cite{Sato1978}.

\item
Finally, we discuss the connection between prior work on the capacity of
unitary gates \cite{BennettHLS03,HarrowLeung05,HarrowLeung08,HarrowShor10}%
\ and the capacity of the quantum interference channel. The quantum
interference channel that we consider in this last contribution is an
isometry, in which the two inputs and two outputs are quantum and the channel
acts as a noiseless evolution from the senders to the receivers.

\end{itemize}

We structure this paper as follows. We first introduce the notation used in the
rest of the paper. We then detail the general information processing task that
two senders and two receivers are trying to accomplish using the quantum
interference channel. Section~\ref{sec:MAC} discusses the connection
between the multiple access channel and the interference channel, and we
prove the existence of a quantum simultaneous decoder for the multiple
access channel with two classical inputs and one quantum output. This section also
 states a conjecture regarding the existence of a quantum simultaneous decoder
with three classical inputs and one quantum output, and we prove that it
exists for a special case. We also discuss an achievable rate
region in terms of min-entropies, and we remark briefly on many avenues that
we pursued in an attempt to prove this conjecture. Section~\ref{sec:QIC}%
\ presents our results regarding the quantum interference channel. We first
determine the capacity of the quantum interference channel if the channel has
\textquotedblleft very strong\textquotedblright\ interference and follow with the capacity
when the channel exhibits ``strong'' interference. We next show
how to achieve the Han-Kobayashi inner bound, by exploiting the conjecture
regarding the existence of a three-sender quantum simultaneous decoder.
We then present a set of achievable rates obtained using successive decoding
and rate splitting. This section ends with an outer bound on the capacity of the
quantum interference channel. Section~\ref{sec:gate-capacities}\ presents our
final contribution regarding the connection to unitary gate capacities, and
the conclusion summarizes our findings and states open lines of pursuit for the quantum interference channel.

\section{Notation}

We denote quantum systems as $A$, $B$, and $C$ and their corresponding Hilbert
spaces as $\mathcal{H}^{A}$, $\mathcal{H}^{B}$, and $\mathcal{H}^{C}$ with
respective dimensions $d_{A}$, $d_{B}$, and $d_{C}$. We denote pure states of
the system $A$ with a \emph{ket} $\left\vert \phi\right\rangle ^{A}$ and the
corresponding density operator as $\phi^{A}=\left\vert \phi\right\rangle\!
\left\langle \phi\right\vert ^{A}$. All kets that are quantum states have unit
norm, and all density operators are positive semi-definite with unit trace. We
model our lack of access to a quantum system with the partial trace operation.
That is, given a two-qubit state $\rho^{AB}$ shared between Alice and Bob, we
can describe Alice's state with the reduced density operator:%
\[
\rho^{A}=\text{Tr}_{B}\left\{  \rho^{AB}\right\}  ,
\]
where Tr$_{B}$ denotes a partial trace over Bob's system. Let%
\[
H(A)_{\rho}\equiv-\text{Tr}\left\{  \rho^{A}\log\rho^{A}\right\}
\]
be the von Neumann entropy of the state $\rho^{A}$. For a state
$\sigma^{ABC}$, we define the quantum conditional entropy%
\[
H(A|B)_{\sigma}\equiv H(AB)_{\sigma}-H(B)_{\sigma},
\]
the quantum mutual information%
\[
I(A;B)_{\sigma}\equiv H(A)_{\sigma}+H(B)_{\sigma}-H(AB)_{\sigma},
\]
and the conditional quantum mutual information%
\[
I(A;B|C)_{\sigma}\equiv H(A|C)_{\sigma}+H(B|C)_{\sigma}-H(AB|C)_{\sigma}.
\]
Quantum operations are completely positive trace-preserving (CPTP) maps
$\mathcal{N}^{A^{\prime}\rightarrow B}$, which accept input states in
$A^{\prime}$ and output states in $B$. In order to describe the
\textquotedblleft distance\textquotedblright\ between two quantum states, we
use the notion of \emph{trace distance}. %
The trace distance between states $\sigma$ and $\rho$ is%
\[
\Vert\sigma-\rho\Vert_{1}=\mathrm{Tr}\left\vert \sigma-\rho\right\vert ,
\]
where $|X|=\sqrt{X^{\dagger}X}$. 
Two states that are similar have trace distance close to zero,  
whereas states that are perfectly distinguishable have trace distance equal to two.
Throughout this paper, logarithms and
exponents are taken base two unless otherwise specified.
Appendix~\ref{sec:typ-review} reviews several important properties of typical
sequences and typical subspaces.

\section{The Information Processing Task}

We first discuss the information processing task that two senders and two
receivers are trying to accomplish with the quantum interference channel. We
assume that they have access to many independent uses of a particular type of
channel with two classical inputs and two quantum outputs. A \textit{ccqq}
quantum interference channel is the following map:%
\begin{equation}
x,y\rightarrow\rho_{x,y}^{B_{1}B_{2}}\text{,} \label{eq:ccqq-int-channel}%
\end{equation}
where the inputs $x$ and $y$ produce a density operator $\rho_{x,y}%
^{B_{1}B_{2}}$ that exists on quantum systems $B_{1}$ and $B_{2}$. Receiver~1
has access to system $B_{1}$, and Receiver~2 has access to system $B_{2}$. An
$\left(  n,R_{1}-\delta,R_{2}-\delta,\epsilon\right)  $ quantum interference
channel code consists of three steps:\ encoding, transmission, and decoding.

\textbf{Encoding}. Sender~1 chooses a message $l$\ from a message set
$\mathcal{L}=\left\{  1,2, \ldots, |\mathcal{L}|\right\}  $ where
$|\mathcal{L}|=2^{n(R_{1}-\delta)}$, and Sender~2 similarly chooses a message
$m$ from a message set $\mathcal{M}=\left\{  1,2,\ldots,|\mathcal{M}|\right\}
$ where $|\mathcal{M}|=2^{n(R_{2}-\delta)}$, where $\delta$ is some
arbitrarily small positive number. Senders~1 and 2 then encode their messages
as codewords of the following form:%
\begin{align*}
x^{n}\!\left(  l\right)   &  \equiv x_{1}\!\left(  l\right)  \ x_{2}\!\left(
l\right)  \ \cdots\ x_{n}\!\left(  l\right)  ,\\
y^{n}\!\left(  m\right)   &  \equiv y_{1}\!\left(  m\right)  \ y_{2}\!\left(
m\right)  \ \cdots\ y_{n}\!\left(  m\right)  .
\end{align*}

\textbf{Transmission}. They both input each letter of their codewords to a
single use of the channel in~(\ref{eq:ccqq-int-channel}), leading to an
$n$-fold tensor product state of the following form at the output:%
\[
\rho_{x^{n}\left(  l\right)  ,y^{n}\left(  m\right)  }^{B_{1}^{n}B_{2}^{n}%
}\equiv\rho_{x_{1}\left(  l\right)  ,y_{1}\left(  m\right)  }^{B_{1,1}B_{2,1}%
}\otimes\rho_{x_{2}\left(  l\right)  ,y_{2}\left(  m\right)  }^{B_{1,2}%
B_{2,2}}\otimes\cdots\otimes\rho_{x_{n}\left(  l\right)  ,y_{n}\left(
m\right)  }^{B_{1,n}B_{2,n}}.
\]
Receiver~1 has access to systems $B_{1,i}$ for all $i \in\{1, \ldots, n\}$,
and Receiver~2 has access to systems $B_{2,i}$.

\textbf{Decoding}. Receiver~1 performs a measurement on his systems in order
to determine the message of Sender~1, and Receiver~2 similarly
performs a measurement to obtain Sender~2's message. More specifically,
Receiver~1 performs a positive operator-valued measure (POVM) $\left\{
\Lambda_{l}\right\}  _{l\in\left\{  1,\ldots,|\mathcal{L}|\right\}  }$ where
$\Lambda_{l}$ is a positive operator for all $l$ and $\sum_{l}\Lambda_{l}=I$,
and Receiver~2 performs a POVM $\left\{  \Gamma_{m}\right\}  _{m\in\left\{
1,\ldots,|\mathcal{M}|\right\}  }$ with similar conditions holding for the
operators in this set. Figure~\ref{fig:info-task}\ depicts all of these steps.

\begin{figure}[ptb]
\begin{center}
\includegraphics[
natheight=1.932900in,
natwidth=4.427000in,
width=3.4705in
]{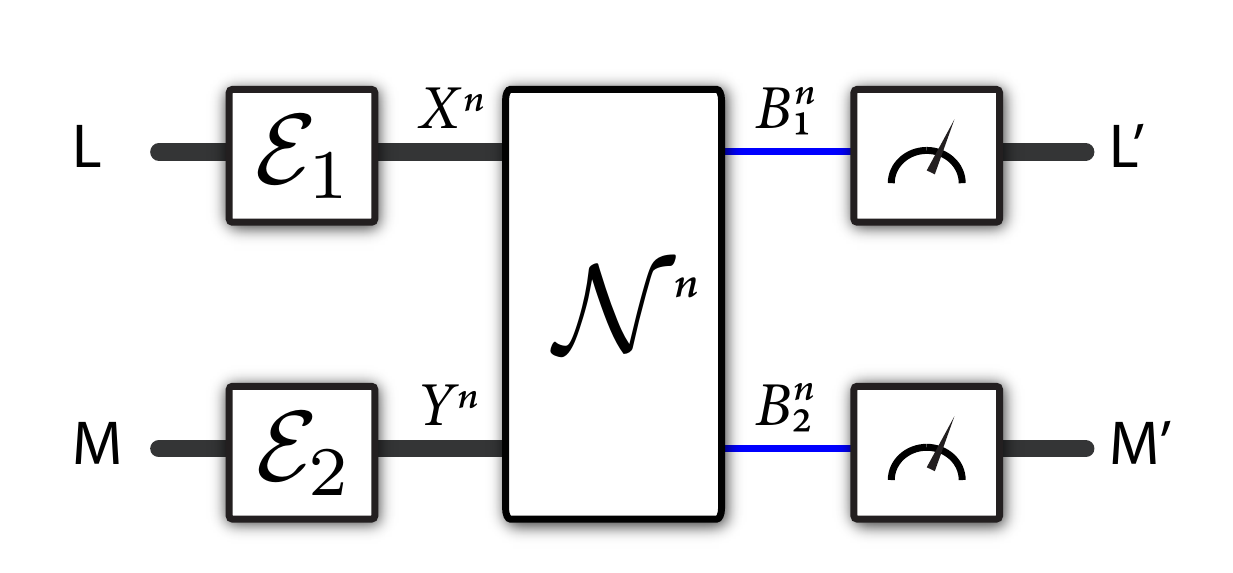}
\end{center}
\caption{The information processing task for the quantum interference channel.
Let $\mathcal{N}$ represent the quantum interference channel with two
classical inputs $X$ and $Y$ and two quantum outputs $B_{1}$ and $B_{2}$.
Sender~1 selects a message $l$ to transmit (modeled by a random variable $L$),
and Sender~2 selects a message $m$ to transmit (modeled by $M$). Each sender
encodes their message as a codeword and transmits the codeword over many
independent uses of a quantum interference channel. The receivers each receive
the quantum outputs of the channel and perform a measurement to determine the
message that their partner sender transmitted.}%
\label{fig:info-task}%
\end{figure}

The probability of the receivers correctly decoding a particular message pair
$\left(  l,m\right)  $ is as follows:
\begin{multline*}
\Pr\left\{  L^{\prime}=l,\ M^{\prime}=m\ |\ L=l,\ M=m\right\} = \\
\text{Tr}\left\{  \left(  \Lambda_{l}\otimes\Gamma_{m}\right)  \rho
_{x^{n}\left(  l\right)  ,y^{n}\left(  m\right)  }^{B_{1}^{n}B_{2}^{n}%
}\right\}  ,
\end{multline*}
and so the probability of incorrectly decoding that message pair is%
\begin{align*}
p_{e}\left(  l,m\right)   &  \equiv\Pr\left\{  (L^{\prime},M^{\prime}%
)\neq(l,m)\ |\ L=l,\ M=m\right\}  \\
&  =\text{Tr}\left\{  \left(  I-\Lambda_{l}\otimes\Gamma_{m}\right)
\rho_{x^{n}\left(  l\right)  ,y^{n}\left(  m\right)  }^{B_{1}^{n}B_{2}^{n}%
}\right\}  ,
\end{align*}
where $L$ and $M$ indicate random variables corresponding to the senders'
choice of messages and the primed random variables correspond to the classical
outputs of the receivers' measurements. The quantum interference channel code
is $\epsilon$-good if the average probability of error $\overline{p}_{e}$ is
bounded from above by $\epsilon$:
\begin{align*}
\overline{p}_{e} &  \equiv\frac{1}{|\mathcal{L}||\mathcal{M}|}\sum_{l,m}%
p_{e}\left(  l,m\right)  \\
&  =\frac{1}{|\mathcal{L}||\mathcal{M}|}\sum_{l,m}\text{Tr}\left\{  \left(
I-\Lambda_{l}\otimes\Gamma_{m}\right)  \rho_{x^{n}\left(  l\right)
,y^{n}\left(  m\right)  }^{B_{1}^{n}B_{2}^{n}}\right\}  \leq\epsilon.
\end{align*}

A rate pair $\left(  R_{1},R_{2}\right)  $ is \textit{achievable} if there
exists an $\left(  n,R_{1}-\delta,R_{2}-\delta,\epsilon\right)  $ quantum
interference channel code for all $\delta,\epsilon>0$ and sufficiently large
$n$. The \textit{capacity region} of the quantum interference channel is the
closure of the set of all achievable rates.

\section{Classical Communication over the Quantum Multiple Access Channel}

\label{sec:MAC}There is a strong connection between the multiple access
channel and the interference channel. In fact, inner bounds for
the capacity of an interference channel can be obtained
by requiring the two receivers to decode both messages.
Such a strategy naturally defines two multiple access channels
that share the same senders~\cite{HK81,el2010lecture}.\footnote{The setting in which
both receivers decode both messages of the two senders is the same as the setting for the
compound multiple access channel \cite{A74}.} It is thus 
important to understand two different coding approaches for obtaining the capacity of
the multiple access channel.

\subsection{Successive Decoding}
\label{sec:mac-succ-decoding}

A first approach to achieve the capacity of the multiple access channel is to exploit a successive
decoding strategy~\cite{book1991cover,el2010lecture}, where the receiver first
decodes the message of one sender while treating the other sender's
transmission as noise. The receiver then decodes the message of the other
sender by exploiting the decoded information as side information. This
strategy achieves one \textquotedblleft corner point\textquotedblright\ of the
capacity region, and a symmetric strategy, where the receiver decodes in the
opposite order, achieves the other corner point. They can achieve any rate
pair between these two corner points with a time-sharing strategy, in which
they exploit successive decoding in one order for a fraction of the channel
uses and they exploit successive decoding in the opposite order for the remaining
fraction of the channel uses. They can achieve the other boundary points and the interior of
the capacity region by resource wasting.

Winter exploited this approach for the quantum multiple access
channel~\cite{winter2001capacity}, essentially by using a random coding
argument and by showing that a measurement to determine the first sender's
message causes a negligible disturbance of the channel output state. Hsieh
\textit{et al}.~followed up on this result by showing how to perform
entanglement-assisted classical communication over a quantum multiple access
channel~\cite{itit2008hsieh}.

\begin{theorem}
[Successive Decoding \cite{winter2001capacity}]\label{thm:successive-decoder}%
Let $x,y\rightarrow\rho_{x,y}$ be a ccq channel from two senders to a single
receiver. Let $p_{X}\left(  x\right)  $ and $p_{Y}\left(  y\right)  $ be
respective input distributions that each sender uses to create random
codebooks of the form $\left\{  X^{n}\left(  l\right)  \right\}  _{l\in\left[
1,\ldots,L\right]  }$ and $\left\{  Y^{n}\left(  m\right)  \right\}
_{m\in\left[  1,\ldots,M\right]  }$. Suppose that the rates $R_{1}=
\frac{1}{n}\log_{2}\left(  L\right)  +\delta $ and $R_{2}=\frac{1}{n}\log
_{2}\left(  M\right)  +\delta $ (where $\delta>0$) satisfy%
\begin{align*}
R_{1}  &  \leq I\left(  X;B\right)  _{\rho},\\
R_{2}  &  \leq I\left(  Y;B|X\right)  _{\rho},
\end{align*}
where the Holevo information quantities are with respect to a
classical-quantum state of the form%
\begin{equation}
\rho^{XYB}\equiv\sum_{x,y}p_{X}\left(  x\right)  p_{Y}\left(  y\right)
\left\vert x\right\rangle \left\langle x\right\vert ^{X}\otimes\left\vert
y\right\rangle \left\langle y\right\vert ^{Y}\otimes\rho_{x,y}^{B}.
\label{eq:state-for-entropies}%
\end{equation}
Then there exist two POVMs $\left\{  \Lambda_{l}\right\}  $ and $\left\{
\Gamma_{m}^{(l)}\right\}  $ acting in successive order such that the expectation of
the average probability of correct detection is arbitrarily close to one:%
\begin{multline*}
\mathbb{E}\left\{  \frac{1}{LM}\sum_{l,m}\text{Tr}\left\{
\sqrt{\Gamma_{m}^{(l)}}\sqrt{\Lambda_{l}}\rho_{X^{n}\left(  l\right)
Y^{n}\left(  m\right)  }\sqrt{\Lambda_{l}}\sqrt{\Gamma_{m}^{(l)}}\right\}
\right\}  \\
\geq1-\epsilon,
\end{multline*}
where the expectation is with respect to $X^{n}$ and $Y^{n}$.

\end{theorem}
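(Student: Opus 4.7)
The plan is to use a random coding argument combined with a two-stage decoding strategy, where the first stage is an HSW-type decoder that identifies Sender~1's message while treating Sender~2's transmission as averaged noise, and the second stage is a conditional HSW-type decoder that identifies Sender~2's message given the codeword of Sender~1. The key technical device binding the two stages together is the gentle measurement lemma, which guarantees that a measurement succeeding with high probability barely disturbs the underlying state, so that the second decoder can be analyzed essentially as if the first measurement had not taken place.

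Concretely, I would generate the codebooks $\{X^n(l)\}$ and $\{Y^n(m)\}$ with entries drawn i.i.d.\ from $p_X$ and $p_Y$ respectively. For the first stage, consider the effective ccq channel $x\mapsto\bar\rho_x \equiv \sum_y p_Y(y)\,\rho_{x,y}$, whose Holevo information is exactly $I(X;B)_\rho$ for the state in (\ref{eq:state-for-entropies}). A standard HSW-style construction based on the average typical projector for $\bar\rho^B = \sum_x p_X(x)\bar\rho_x$ together with the conditionally typical projectors for each $\bar\rho_{x^n(l)}$, analyzed via the Hayashi--Nagaoka operator inequality, produces a POVM $\{\Lambda_l\}$ whose error probability, averaged over codebooks, satisfies
$$\mathbb{E}\!\left\{\frac{1}{L}\sum_{l}\text{Tr}\{(I-\Lambda_l)\bar\rho_{X^n(l)}\}\right\}\le \epsilon_1$$
whenever $R_1 \le I(X;B)_\rho-\delta$. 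Since $\bar\rho_{X^n(l)}=\mathbb{E}_{Y^n}[\rho_{X^n(l),Y^n}]$, this bound transfers, after an additional expectation over $Y^n$, to the actual channel outputs $\rho_{X^n(l),Y^n(m)}$.

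For the second stage, I would, for each realized codeword $X^n(l)=x^n$, apply the HSW construction to the now-single-sender ccq channel $y\mapsto\rho_{x^n,y}$, using projectors onto the sequences $y^n$ that are conditionally typical given $x^n$ and onto the support of the conditionally typical $\rho_{x^n,y^n}$. A routine conditional-typicality calculation shows that the achievable rate is $I(Y;B|X)_\rho$, producing a POVM $\{\Gamma_m^{(l)}\}$ (depending on $X^n(l)$) with
$$\mathbb{E}\!\left\{\frac{1}{M}\sum_{m}\text{Tr}\{(I-\Gamma_m^{(l)})\rho_{X^n(l),Y^n(m)}\}\right\}\le \epsilon_2$$
whenever $R_2 \le I(Y;B|X)_\rho-\delta$.

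The two stages are then glued using the gentle measurement lemma: if $\text{Tr}\{\Lambda_l \rho_{X^n(l),Y^n(m)}\}\ge 1-\epsilon_1'$, then $\|\sqrt{\Lambda_l}\rho_{X^n(l),Y^n(m)}\sqrt{\Lambda_l} - \rho_{X^n(l),Y^n(m)}\|_1 \le 2\sqrt{\epsilon_1'}$. Combining this with the second-stage bound via the triangle inequality for trace distance yields, for each pair $(l,m)$,
$$\text{Tr}\!\left\{\Gamma_m^{(l)}\sqrt{\Lambda_l}\,\rho_{X^n(l),Y^n(m)}\sqrt{\Lambda_l}\right\}\ge 1 - \epsilon_2 - 2\sqrt{\epsilon_1'},$$
and averaging over $(l,m)$ and over the random codebooks delivers the claimed lower bound on the expected average success probability. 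The main obstacle is this last step: the gentle measurement lemma demands a pointwise lower bound on the first-stage success probability, while random coding gives only a bound on average. I would circumvent this by invoking the ensemble form of the gentle measurement lemma, which applies directly to expected trace distances, and by using concavity of $\sqrt{\cdot}$ (Jensen's inequality) to absorb the resulting $\sqrt{\mathbb{E}\{\epsilon_1'\}}$ term without losing the asymptotic vanishing.
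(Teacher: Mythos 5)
The paper does not spell out a proof of Theorem~\ref{thm:successive-decoder}; it cites Winter~\cite{winter2001capacity} and sketches exactly the strategy you propose: random coding, a first-stage HSW decoder for the averaged channel $x\mapsto\sum_y p_Y(y)\rho_{x,y}$, a second-stage conditional HSW decoder for $y\mapsto\rho_{x^n,y}$, and the gentle-measurement (``tender'') lemma to show the first measurement's disturbance vanishes. Your proposal is correct and follows this same approach, and you correctly identify and resolve the one delicate point---that random coding gives only an average first-stage guarantee---via the ensemble form of the Gentle Operator Lemma together with Jensen's inequality.
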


\subsection{Quantum Simultaneous Decoding}

Another approach to achieve the capacity of the multiple access channel is for the receiver to 
use a simultaneous decoder (sometimes referred to as a jointly typical
decoder in the IID\ setting), which decodes the messages of all senders
at the same time rather than in succession \cite{book1991cover,el2010lecture}.
On the one hand, simultaneous decoding is more complex than successive decoding
because it considers all tuples
of messages, but on the other hand, it is more powerful than a successive decoding
strategy because it can decode at any rates provided that the rates are in the capacity region
(also there is no need for time sharing).

With such a strategy and for two senders, there are four different types of errors that can
occur---one of these we can bound with a standard typicality argument and the
other three correspond to the bounds on the capacity region of the channel.
This strategy is our approach below, and we can prove that a quantum simultaneous decoder
exists for multiple access channels with two classical inputs and one quantum output.
Though, for a three-sender quantum multiple access channel,
we are only able to prove that a quantum simultaneous decoder
exists in the special case where the averaged output states commute. Thus,
we leave the general case stated as a conjecture.

\subsubsection{Two-Sender Quantum Simultaneous Decoding}

This section contains the proof of the two-sender quantum simultaneous decoder.
We should mention that Sen
arrived at this result with a different technique \cite{S11a}.

\begin{theorem}[Two-Sender Quantum Simultaneous Decoding]\label{thm:two-sender-QSD}
Let $x,y\rightarrow\rho_{x,y}$ be a ccq channel from two senders to a single
receiver. Let $p_{X}\left(  x\right)  $ and $p_{Y}\left(  y\right)  $ be
respective input distributions that each sender uses to create random
codebooks of the form $\left\{  X^{n}\left(  l\right)  \right\}  _{l\in\left[
1,\ldots,L\right]  }$ and $\left\{  Y^{n}\left(  m\right)  \right\}
_{m\in\left[  1,\ldots,M\right]  }$. Suppose that the rates $R_{1}=
\frac{1}{n}\log_{2}\left(  L\right)  +\delta $ and $R_{2}=\frac{1}{n}\log
_{2}\left(  M\right)  +\delta $ (where $\delta>0$) satisfy the following inequalities:
\begin{align}
R_{1}  &  \leq I\left(  X;B|Y\right)  ,\\
R_{2}  &  \leq I\left(  Y;B|X\right)  _{\rho},\\
R_{1}+R_{2}  &  \leq I\left(  XY;B\right)  _{\rho},
\end{align}
where the entropies are with respect to a state of the form in (\ref{eq:state-for-entropies}).
Then there exists a simultaneous decoding POVM $\left\{  \Lambda_{l,m}\right\}  $
such that the expectation of the average probability of error is bounded from above
by $\epsilon$ for all $\epsilon>0$ and sufficiently large $n$.
\end{theorem}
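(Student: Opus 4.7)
The plan is to build a pretty good (square-root) measurement from sandwiched typical projectors, apply the Hayashi-Nagaoka operator inequality, and bound each of the resulting error events by the three rate constraints using standard properties of typical subspaces.

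First, introduce four typical projectors: the unconditional typical projector $\Pi$ for the average output state $\rho^B = \sum_{x,y} p_X(x)p_Y(y)\rho_{x,y}^B$, the conditionally typical projectors $\Pi_{X^n(l)}$ and $\Pi_{Y^n(m)}$ for the marginals $\rho_x^B = \sum_y p_Y(y)\rho_{x,y}^B$ and $\rho_y^B = \sum_x p_X(x)\rho_{x,y}^B$, and the doubly conditional projector $\Pi_{X^n(l),Y^n(m)}$ for the state $\rho_{x,y}^B$. For each pair $(l,m)$ define the positive operator
$$
\gamma_{l,m} \equiv \Pi\,\Pi_{X^n(l)}\,\Pi_{Y^n(m)}\,\Pi_{X^n(l),Y^n(m)}\,\Pi_{Y^n(m)}\,\Pi_{X^n(l)}\,\Pi,
$$
and declare the decoding POVM to be the square-root measurement $\Lambda_{l,m} \equiv S^{-1/2}\gamma_{l,m}S^{-1/2}$, where $S = \sum_{l',m'}\gamma_{l',m'}$.

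Next, apply the Hayashi-Nagaoka inequality $I-\Lambda_{l,m}\leq 2(I-\gamma_{l,m}) + 4\sum_{(l',m')\neq(l,m)} \gamma_{l',m'}$ and take the expectation of $\mathrm{Tr}\{(I-\Lambda_{l,m})\rho_{X^n(l),Y^n(m)}\}$ over the random codebooks. This splits the error into one "true codeword" term, handled by iterated applications of the Gentle Operator Lemma together with the fact that each projector has expected overlap at least $1-\epsilon$ with the true state, and three "confusion" terms corresponding to $(l',m)$ with $l'\neq l$, $(l,m')$ with $m'\neq m$, and $(l',m')$ with both $l'\neq l$ and $m'\neq m$. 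For each confusion term, the independence of the offending fresh codewords from $\rho_{X^n(l),Y^n(m)}$ lets us push the codebook expectation past the relevant projectors, replacing them by their averages, which are bounded in operator norm by $\rho^{\otimes n}$-type objects. Combining the standard dimension bounds $\mathrm{Tr}\{\Pi_{x^n}\}\leq 2^{n[H(B|X)+\epsilon]}$, $\mathrm{Tr}\{\Pi_{y^n}\}\leq 2^{n[H(B|Y)+\epsilon]}$, $\mathrm{Tr}\{\Pi_{x^n,y^n}\}\leq 2^{n[H(B|XY)+\epsilon]}$ with the operator inequalities $\Pi\rho^{\otimes n}\Pi \leq 2^{-n[H(B)-\epsilon]}\Pi$ and its conditional analogues yields exponential decay of each confusion term precisely when $R_1\leq I(X;B|Y)$, $R_2\leq I(Y;B|X)$, and $R_1+R_2\leq I(XY;B)$, respectively.

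The main obstacle is the "both wrong" error, where $l'\neq l$ and $m'\neq m$. Here both $X^n(l')$ and $Y^n(m')$ are independent of the true state, so one would like to simultaneously replace $\Pi_{X^n(l')}$ and $\Pi_{Y^n(m')}$ by their averages under $p_X$ and $p_Y$; but these two projectors do not commute, so a naive swap is not allowed. The fix is to use cyclicity of the trace together with the operator bound $\Pi_{X^n(l'),Y^n(m')} \leq 2^{n[H(B|XY)+\epsilon]}\,\rho_{X^n(l'),Y^n(m')}^{\otimes}$-type inequalities to absorb the innermost projector into a scalar, after which the remaining sandwich $\Pi\Pi_{X^n(l')}\Pi_{Y^n(m')}\Pi_{X^n(l')}\Pi$ can be estimated by taking the codebook expectation separately over each fresh codeword: the outer $\Pi_{X^n(l')}\cdots\Pi_{X^n(l')}$ expectation produces an operator upper-bounded by $\Pi\,\rho^{\otimes n}\Pi/\|\cdot\|$ on the relevant subspace, and the $\Pi_{Y^n(m')}$ expectation similarly collapses. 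This is the step where the quantum argument departs most visibly from the classical jointly-typical decoder and requires careful use of the typical-subspace operator norm bounds, and it is where any hidden difficulty would lie.
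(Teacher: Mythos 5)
Your proposed decoder $\gamma_{l,m} = \Pi\,\Pi_{X^n(l)}\,\Pi_{Y^n(m)}\,\Pi_{X^n(l),Y^n(m)}\,\Pi_{Y^n(m)}\,\Pi_{X^n(l)}\,\Pi$, nesting all four typical projectors symmetrically, is precisely the construction the paper says \emph{fails}, and the failure is not cosmetic. The paper's actual $\Pi'_{l,m}$ deliberately omits $\Pi_{Y^n(m)}$ from the square-root POVM, keeping only $\Pi\,\Pi_{X^n(l)}\,\Pi_{X^n(l),Y^n(m)}\,\Pi_{X^n(l)}\,\Pi$, and instead ``smooths'' the output state once as $\Pi_{Y^n(1)}\,\rho_{X^n(1),Y^n(1)}\,\Pi_{Y^n(1)}$ before invoking Hayashi--Nagaoka. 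That asymmetric placement is the content of the theorem, not a stylistic choice.

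To see why your version cannot close, track the $(l\neq 1, m=1)$ confusion term. After pushing $\mathbb{E}_{X(1)}$ onto the state you obtain $\rho_{Y^n(1)}$, and you need the operator bound $\Pi_{Y^n(1)}\,\rho_{Y^n(1)}\,\Pi_{Y^n(1)}\leq 2^{-n[H(B|Y)-\delta]}\Pi_{Y^n(1)}$ to extract the factor of $I(X;B\,|\,Y)$. But in your construction the operator adjacent to $\rho_{Y^n(1)}$ (after cycling the trace) is the \emph{unconditional} projector $\Pi$, not $\Pi_{Y^n(1)}$: the $\Pi_{Y^n(1)}$'s are buried inside the sandwich next to $\Pi_{X^n(l),Y^n(1)}$, where they do nothing useful because $\Pi_{Y^n(1)}\,\Pi_{X^n(l),Y^n(1)}\,\Pi_{Y^n(1)}$ is not a conditionally typical sandwich of any state. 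There is no legitimate operator inequality that converts $\Pi\,\rho_{Y^n(1)}\,\Pi$ into something $\approx 2^{-nH(B|Y)}$ without first averaging over $Y^n(1)$, and you cannot do that because $\Pi_{X^n(l),Y^n(1)}$ and $\Pi_{Y^n(1)}$ also depend on $Y^n(1)$. The paper's smoothing puts $\Pi_{Y^n(1)}$ directly on the state, so that after averaging over $X^n(1)$ you get exactly $\Pi_{Y^n(1)}\,\rho_{Y^n(1)}\,\Pi_{Y^n(1)}$, and everything goes through.

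Your sketch of the ``both wrong'' term also betrays the problem: you claim that after absorbing the innermost projector the remaining sandwich is $\Pi\,\Pi_{X^n(l')}\,\Pi_{Y^n(m')}\,\Pi_{X^n(l')}\,\Pi$, but your $\gamma_{l',m'}$ has \emph{two} copies of $\Pi_{Y^n(m')}$, and replacing $\Pi_{X^n(l'),Y^n(m')}$ by $2^{n[H(B|XY)+\delta]}\rho_{X^n(l'),Y^n(m')}$ leaves $\Pi_{Y^n(m')}\,\rho_{X^n(l'),Y^n(m')}\,\Pi_{Y^n(m')}$ in the middle, not a scalar. You then cannot pull $\mathbb{E}_{X^n(l')}$ inside to average $\rho_{X^n(l'),Y^n(m')}$ down to $\rho_{Y^n(m')}$, because the flanking $\Pi_{X^n(l')}$ depends on the same variable. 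The paper resolves this by having only the single $\Pi_{Y^n(1)}$-smoothing on the state side, which is independent of the confusable codewords, and by treating the four error terms asymmetrically; there is no symmetric variant of the argument that has been made to work.
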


\begin{proof}
Suppose that the channel is a ccq channel of the form $x,y\rightarrow
\rho_{x,y}$ and that the two senders have independent distributions
$p_{X}\left(  x\right)  $ and $p_{Y}\left(  y\right)  $. These distributions
induce the following averaged output states:%
\begin{align}
\rho_{x}  &  \equiv\sum_{y}p_{Y}\!\left(  y\right)  \rho_{x,y}%
,\label{eq:rho_x}\\
\rho_{y}  &  \equiv\sum_{x}p_{X}\!\left(  x\right)  \rho_{x,y}%
,\label{eq:rho_y}\\
\rho &  \equiv\sum_{x,y}p_{X}\!\left(  x\right)  p_{Y}\!\left(  y\right)
\rho_{x,y}. \label{eq:rho}%
\end{align}

\textbf{Codeword Selection}. Senders~1 and 2 choose codewords $\left\{
X^{n}\left(  l\right)  \right\}  _{l\in\left\{  1,\ldots,L\right\}  }$ and
$\left\{  Y^{n}\left(  m\right)  \right\}  _{m\in\left\{  1,\ldots,M\right\}
}$ independently and randomly according to the product distributions $p_{X^{n}}\left(
x^{n}\right)  $ and $p_{Y^{n}}\left(  y^{n}\right)  $.

\textbf{POVM\ Construction}. Let $\Pi_{\rho,\delta}^{n}$ be the typical
projector for the tensor power state $\rho^{\otimes n}$ defined
by~(\ref{eq:rho}). Let $\Pi_{\rho_{y^{n}},\delta}^{n}$ be the conditionally
typical projector for the tensor product state $\rho_{y^{n}}$ defined by
(\ref{eq:rho_y}) for $n$ uses of the channel. Let $\Pi_{\rho_{x^{n}},\delta
}^{n}$ be the conditionally typical projector for the tensor product state
$\rho_{x^{n}}$ defined by (\ref{eq:rho_x}) for $n$ uses of the channel. Let
$\Pi_{\rho_{x^{n},y^{n}},\delta}^{n}$ be the conditionally typical projector
for the tensor product state $\rho_{x^{n},y^{n}}$ defined as the output of the
$n$\ channels when codewords $x^{n}$ and $y^{n}$ are input. (We are using the
\textquotedblleft weak\textquotedblright\ definitions of these projectors as
defined in Appendix~\ref{sec:typ-review}.) In what follows, we make the following
abbreviations:%
\begin{align*}
\Pi &  \equiv\Pi_{\rho,\delta}^{n},\\
\Pi_{y^{n}}  &  \equiv\Pi_{\rho_{y^{n}},\delta}^{n},\\
\Pi_{x^{n}}  &  \equiv\Pi_{\rho_{x^{n}},\delta}^{n},\\
\Pi_{x^{n},y^{n}}  &  \equiv\Pi_{\rho_{x^{n},y^{n}},\delta}^{n}.
\end{align*}
The detection POVM\ $\left\{  \Lambda_{l,m}\right\}  $ has the following form:%
\begin{align}
\Lambda_{l,m}  &  \equiv\left(  \sum_{l^{\prime},m^{\prime}}\Pi_{l^{\prime
},m^{\prime}}^{\prime}\right)  ^{-\frac{1}{2}}\Pi_{l,m}^{\prime}\left(
\sum_{l^{\prime},m^{\prime}}\Pi_{l^{\prime},m^{\prime}}^{\prime}\right)
^{-\frac{1}{2}},\label{eq:square-root-POVM}\\
\Pi_{l,m}^{\prime}  &  \equiv\Pi\ \Pi_{X^{n}\left(  l\right)  }\ \Pi
_{X^{n}\left(  l\right)  ,Y^{n}\left(  m\right)  }\ \Pi_{X^{n}\left(
l\right)  }\ \Pi.\nonumber
\end{align}
Observe that the operator $\Pi_{l,m}^{\prime}$ is a positive operator and
thus $\left\{  \Lambda_{l,m}\right\}  $ is a valid POVM.

\textbf{Error Analysis}. The average error probability of the code has the
following form:%
\begin{equation}
\overline{p}_{e}\equiv\frac{1}{LM}\sum_{l,m}\text{Tr}\left\{  \left(
I-\Lambda_{l,m}\right)  \rho_{X^{n}\left(  l\right)  ,Y^{n}\left(  m\right)
}\right\}  .\label{eq:avg-error-prob}%
\end{equation}
We instead analyze the expectation of the average error probability, where the
expectation is with respect to the random choice of code:%
\begin{align*}
&  \mathbb{E}_{X^{n},Y^{n}}\left\{  \overline{p}_{e}\right\}  \\
&  \equiv\mathbb{E}_{X^{n},Y^{n}}\left\{  \frac{1}{LM}\sum_{l,m}%
\text{Tr}\left\{  \left(  I-\Lambda_{l,m}\right)  \rho_{X^{n}\left(  l\right)
,Y^{n}\left(  m\right)  }\right\}  \right\}  \\
&  =\frac{1}{LM}\sum_{l,m}\mathbb{E}_{X^{n},Y^{n}}\left\{  \text{Tr}\left\{
\left(  I-\Lambda_{l,m}\right)  \rho_{X^{n}\left(  l\right)  ,Y^{n}\left(
m\right)  }\right\}  \right\}  .
\end{align*}
Due to the symmetry of the code construction (the fact that the expectation
$\mathbb{E}_{X^{n},Y^{n}}\left\{  \text{Tr}\left\{  \left(  I-\Lambda
_{l,m}\right)  \rho_{X^{n}\left(  l\right)  ,Y^{n}\left(  m\right)  }\right\}
\right\}  $ is independent of the particular message pair $\left(  l,m\right)
$), it suffices to analyze the expectation of the average error probability
for the first message pair $\left(  1,1\right)  $:%
\[
\mathbb{E}_{X^{n},Y^{n}}\left\{  \overline{p}_{e}\right\}  =\mathbb{E}%
_{X^{n},Y^{n}}\left\{  \text{Tr}\left\{  \left(  I-\Lambda_{1,1}\right)
\rho_{X^{n}\left(  1\right)  ,Y^{n}\left(  1\right)  }\right\}  \right\}  .
\]

We now begin our error analysis. In what follows, we abbreviate $X^{n}$ as $X$
and $Y^{n}$ as $Y$ in order to save space. We first bound the above error
probability as%
\begin{align}
&  \mathbb{E}_{XY}\left\{  \overline{p}_{e}\right\}  \nonumber\\
&  \leq\mathbb{E}_{XY}\left\{  \text{Tr}\left\{  \left(  I-\Lambda
_{1,1}\right)  \ \Pi_{Y\left(  1\right)  }\ \rho_{X\left(  1\right)  ,Y\left(
1\right)  }\ \Pi_{Y\left(  1\right)  }\right\}  \right\}  \nonumber \\
&  \ \ +\mathbb{E}_{XY}\left\{  \left\Vert \Pi_{Y\left(  1\right)  }%
\ \rho_{X\left(  1\right)  ,Y\left(  1\right)  }\ \Pi_{Y\left(  1\right)
}-\rho_{X\left(  1\right)  ,Y\left(  1\right)  }\right\Vert _{1}\right\}  \\
&  \leq\mathbb{E}_{XY}\left\{  \text{Tr}\left\{  \left(  I-\Lambda
_{1,1}\right)  \ \Pi_{Y\left(  1\right)  }\ \rho_{X\left(  1\right)  ,Y\left(
1\right)  }\ \Pi_{Y\left(  1\right)  }\right\}  \right\}  \nonumber \\
& \,\,\, +2\sqrt{\epsilon
},\label{eq:first-bound}%
\end{align}
where the first inequality follows from the inequality%
\begin{equation}
\text{Tr}\left\{  \Lambda\rho\right\}  \leq\text{Tr}\left\{  \Lambda
\sigma\right\}  +\left\Vert \rho-\sigma\right\Vert _{1}%
,\label{eq:trace-inequality}%
\end{equation}
which holds for all $\rho$, $\sigma$, and $\Lambda$ such that $0\leq
\rho,\sigma,\Lambda\leq I$. The second inequality follows from the properties
of weak conditionally typical subspaces and the Gentle Operator Lemma for
ensembles, by taking $n$ to be sufficiently large (a discussion of these
properties is in Appendix~\ref{sec:typ-review}).
The idea behind this first bound on the error
probability is that we require the projector $\Pi_{Y\left(  1\right)}$ in order to
remove some of large eigenvalues of an averaged version of $\rho_{X\left(  1\right)  ,Y\left(
1\right)}$, and this point in the proof seems to be the most opportune time to insert it.

The Hayashi-Nagaoka operator inequality applies to a positive operator $T$ and
an operator $S$ where $0\leq S\leq I$ \cite{hayashi2003general,H06book}:%
\[
I-\left(  S+T\right)  ^{-\frac{1}{2}}S\left(  S+T\right)  ^{-\frac{1}{2}}%
\leq2\left(  I-S\right)  +4T.
\]
Choosing%
\begin{align*}
S &  =\Pi_{1,1}^{\prime},\\
T &  =\sum_{\left(  l,m\right)  \neq\left(  1,1\right)  }\Pi_{l,m}^{\prime},
\end{align*}
we can apply the above operator inequality to bound the first term in (\ref{eq:first-bound}) as%
\begin{multline}
\mathbb{E}_{XY}\left\{  \text{Tr}\left\{  \left(  I-\Lambda_{1,1}\right)
\ \Pi_{Y\left(  1\right)  }\ \rho_{X\left(  1\right)  ,Y\left(  1\right)
}\ \Pi_{Y\left(  1\right)  }\right\}  \right\}  \label{eq:error-bound-HN}\\
\leq2\ \mathbb{E}_{XY}\left\{  \text{Tr}\left\{  \left(  I-\Pi_{1,1}^{\prime
}\right)  \ \Pi_{Y\left(  1\right)  }\ \rho_{X\left(  1\right)  ,Y\left(
1\right)  }\ \Pi_{Y\left(  1\right)  }\right\}  \right\}  \\
\qquad \ \ +4\!\!\!\!\!\!\sum_{\left(  l,m\right)  \neq\left(  1,1\right)  }\mathbb{E}_{XY}\left\{
\text{Tr}\left\{  \Pi_{l,m}^{\prime}\ \Pi_{Y\left(  1\right)  }\ \rho
_{X\left(  1\right)  ,Y\left(  1\right)  }\ \Pi_{Y\left(  1\right)  }\right\}
\right\}  .
\end{multline}
We first consider bounding the term in the second line above. Consider that%
\begin{align}
&  \mathbb{E}_{XY}\left\{  \text{Tr}\left\{  \Pi_{1,1}^{\prime}\ \Pi_{Y\left(
1\right)  }\ \rho_{X\left(  1\right)  ,Y\left(  1\right)  }\ \Pi_{Y\left(
1\right)  }\right\}  \right\}  \nonumber\\
&  =\mathbb{E}_{XY}\{\text{Tr}\{\Pi\ \Pi_{X\left(  1\right)  }\ \Pi_{X\left(
1\right)  ,Y\left(  1\right)  }\ \Pi_{X\left(  1\right)  }\ \Pi\nonumber\\
&  \ \ \ \ \ \ \ \ \ \ \ \ \ \ \ \ \ \ \ \ \ \ \ \Pi_{Y\left(  1\right)
}\ \rho_{X\left(  1\right)  ,Y\left(  1\right)  }\ \Pi_{Y\left(  1\right)
}\}\}\nonumber\\
&  \geq\mathbb{E}_{XY}\left\{  \text{Tr}\left\{  \Pi_{X\left(  1\right)
,Y\left(  1\right)  }\ \rho_{X\left(  1\right)  ,Y\left(  1\right)  }\right\}
\right\}  \nonumber\\
&  \ \ \ -\mathbb{E}_{XY}\left\{  \left\Vert \Pi\ \rho_{X\left(  1\right)
,Y\left(  1\right)  }\ \Pi-\rho_{X\left(  1\right)  ,Y\left(  1\right)
}\right\Vert _{1}\right\}  \nonumber\\
&  \ \ \ \ -\mathbb{E}_{XY}\left\{  \left\Vert \Pi_{Y\left(  1\right)  }%
\ \rho_{X\left(  1\right)  ,Y\left(  1\right)  }\ \Pi_{Y\left(  1\right)
}-\rho_{X\left(  1\right)  ,Y\left(  1\right)  }\right\Vert _{1}\right\}
\nonumber\\
&  \ \ \ \  \  -\mathbb{E}_{XY}\left\{  \left\Vert \Pi_{X\left(  1\right)  }%
\ \rho_{X\left(  1\right)  ,Y\left(  1\right)  }\ \Pi_{X\left(  1\right)
}-\rho_{X\left(  1\right)  ,Y\left(  1\right)  }\right\Vert _{1}\right\}
\nonumber\\
&  \geq1-\epsilon-6\sqrt{\epsilon}.\label{eq:first-error-chain}%
\end{align}
The above inequalities follow by employing the Gentle Operator Lemma for
ensembles, (\ref{eq:trace-inequality}), and the below inequalities that follow
from the discussion in Appendix~\ref{sec:typ-review}:%
\begin{align}
\mathbb{E}_{XY}\left\{  \text{Tr}\{\Pi_{X\left(  1\right)  }\ \rho_{X\left(
1\right)  ,Y\left(  1\right)  }\}\right\}   &  \geq1-\epsilon
,\label{eq:typ-cond-typ-1}\\
\mathbb{E}_{XY}\left\{  \text{Tr}\{\Pi_{Y\left(  1\right)  }\ \rho_{X\left(
1\right)  ,Y\left(  1\right)  }\}\right\}   &  \geq1-\epsilon
,\label{eq:typ-cond-typ-2}\\
\mathbb{E}_{XY}\left\{  \text{Tr}\{\Pi\ \rho_{X\left(  1\right)  ,Y\left(
1\right)  }\}\right\}   &  \geq1-\epsilon.\label{eq:typ-cond-typ-3}\\
\mathbb{E}_{XY}\left\{  \text{Tr}\{\Pi_{X\left(  1\right)  ,Y\left(  1\right)
}\ \rho_{X\left(  1\right)  ,Y\left(  1\right)  }\}\right\}   &
\geq1-\epsilon.\label{eq:typ-cond-typ-4}%
\end{align}
This bound then implies that%
\begin{equation}
\mathbb{E}_{XY}\left\{  \text{Tr}\left\{  \left(  I-\Pi_{1,1}^{\prime}\right)
\ \Pi_{Y\left(  1\right)  }\ \rho_{X\left(  1\right)  ,Y\left(  1\right)
}\ \Pi_{Y\left(  1\right)  }\right\}  \right\}  \leq\epsilon+6\sqrt{\epsilon
}.\label{eq:first-type-bound}%
\end{equation}
The bound in (\ref{eq:error-bound-HN}) reduces to the following one after
applying (\ref{eq:first-type-bound}):%
\begin{multline*}
\mathbb{E}_{XY}\left\{\overline{p}_{e}\right\}\leq2\left(  \epsilon+6\sqrt{\epsilon}\right)  \\
+4\!\!\!\!\!\!\sum_{\left(  l,m\right)  \neq\left(  1,1\right)  }\!\!\!\mathbb{E}_{XY}\left\{
\text{Tr}\left\{  \Pi_{l,m}^{\prime}\ \Pi_{Y\left(  1\right)  }\ \rho
_{X\left(  1\right)  ,Y\left(  1\right)  }\ \Pi_{Y\left(  1\right)  }\right\}
\right\}  .
\end{multline*}
We can expand the doubly-indexed sum in the above expression as%
\begin{multline}
\sum_{\left(  l,m\right)  \neq\left(  1,1\right)  }\mathbb{E}_{XY}\left\{
\text{Tr}\left\{  \Pi_{l,m}^{\prime}\ \Pi_{Y\left(  1\right)  }\ \rho
_{X\left(  1\right)  ,Y\left(  1\right)  }\ \Pi_{Y\left(  1\right)  }\right\}
\right\}  =\\
\sum_{l\neq1}\mathbb{E}_{XY}\left\{  \text{Tr}\left\{  \Pi_{l,1}^{\prime}%
\ \Pi_{Y\left(  1\right)  }\ \rho_{X\left(  1\right)  ,Y\left(  1\right)
}\ \Pi_{Y\left(  1\right)  }\right\}  \right\}  \\
\ \ \ +\sum_{m\neq1}\mathbb{E}_{XY}\left\{  \text{Tr}\left\{  \Pi_{1,m}^{\prime
}\ \Pi_{Y\left(  1\right)  }\ \rho_{X\left(  1\right)  ,Y\left(  1\right)
}\ \Pi_{Y\left(  1\right)  }\right\}  \right\}  \\
\ \ \ \ +\!\!\!\!\!\sum_{l\neq1,\ m\neq1}\!\!\!\!\!\mathbb{E}_{XY}\left\{  \text{Tr}\left\{  \Pi
_{l,m}^{\prime}\ \Pi_{Y\left(  1\right)  }\ \rho_{X\left(  1\right)  ,Y\left(
1\right)  }\ \Pi_{Y\left(  1\right)  }\right\}  \right\}
.\label{eq:three-terms}%
\end{multline}
We begin by bounding the term in the second line above. Consider the following
chain of inequalities:%
\begin{align}
&  \sum_{l\neq1}\mathbb{E}_{XY}\left\{  \text{Tr}\left\{  \Pi_{l,1}^{\prime
}\Pi_{Y\left(  1\right)  }\ \rho_{X\left(  1\right)  ,Y\left(  1\right)
}\ \Pi_{Y\left(  1\right)  }\right\}  \right\}  \nonumber\\
&  =\sum_{l\neq1}\mathbb{E}_{Y}\left\{  \text{Tr}\left\{  \mathbb{E}%
_{X}\left\{  \Pi_{l,1}^{\prime}\right\}  \ \Pi_{Y\left(  1\right)  }%
\mathbb{E}_{X}\left\{  \rho_{X\left(  1\right)  ,Y\left(  1\right)  }\right\}
\Pi_{Y\left(  1\right)  }\right\}  \right\}  \nonumber\\
&  =\sum_{l\neq1}\mathbb{E}_{Y}\left\{  \text{Tr}\left\{  \mathbb{E}%
_{X}\left\{  \Pi_{l,1}^{\prime}\right\}  \ \Pi_{Y\left(  1\right)  }%
\ \rho_{Y\left(  1\right)  }\ \Pi_{Y\left(  1\right)  }\right\}  \right\}
\nonumber\\
&  \leq2^{-n\left[  H\left(  B|Y\right)  -\delta\right]  }\sum_{l\neq
1}\mathbb{E}_{Y}\left\{  \text{Tr}\left\{  \mathbb{E}_{X}\left\{  \Pi
_{l,1}^{\prime}\right\}  \ \Pi_{Y\left(  1\right)  }\right\}  \right\}
\nonumber\\
&  =2^{-n\left[  H\left(  B|Y\right)  -\delta\right]  }\sum_{l\neq1}%
\mathbb{E}_{XY}\left\{  \text{Tr}\left\{  \Pi_{l,1}^{\prime}\ \Pi_{Y\left(
1\right)  }\right\}  \right\} \label{eq:1st-error-term}
\end{align}
The first equality follows because $X\left(  l\right)  $ and $X\left(
1\right)  $ are independent---the senders choose the code randomly in such a
way that this is true. The second equality follows because $\mathbb{E}%
_{X}\left\{  \rho_{X\left(  1\right)  ,Y\left(  1\right)  }\right\}
=\rho_{Y\left(  1\right)  }$. The first inequality follows by applying the
following operator inequality for weak conditionally typical subspaces:%
\[
\Pi_{y^{n}}\ \rho_{y^{n}}\ \Pi_{y^{n}}\leq2^{-n\left[  H\left(  B|Y\right)
-\delta\right]  }\ \Pi_{y^{n}}.
\]
The last equality is from factoring out the expectation. We now focus on the
expression inside the expectation:%
\begin{align*}
& \text{Tr}\left\{  \Pi_{l,1}^{\prime}\ \Pi_{Y\left(  1\right)  }\right\}  \\
& \quad =\text{Tr}\left\{  \Pi\ \Pi_{X\left(  l\right)  }\ \Pi_{X\left(  l\right)
,Y\left(  1\right)  }\ \Pi_{X\left(  l\right)  }\ \Pi\ \Pi_{Y\left(  1\right)
}\right\}  \\
&\quad  =\text{Tr}\left\{  \Pi_{X\left(  l\right)  ,Y\left(  1\right)  }%
\ \Pi_{X\left(  l\right)  }\ \Pi\ \Pi_{Y\left(  1\right)  }\ \Pi
\ \Pi_{X\left(  l\right)  }\right\}  \\
&\quad  \leq\text{Tr}\left\{  \Pi_{X\left(  l\right)  ,Y\left(  1\right)  }\right\}
\\
& \quad \leq2^{n\left[  H\left(  B|XY\right)  +\delta\right]  }.
\end{align*}
The first equality is from substitution. The second equality is from cyclicity
of trace. The first inequality is from%
\[
\Pi_{x^{n}}\ \Pi\ \Pi_{y^{n}}\ \Pi\ \Pi_{x^{n}}\leq\Pi_{x^{n}}\ \Pi
\ \Pi_{x^{n}}\leq\Pi_{x^{n}}\leq I.
\]
The final inequality follows from the bound on the rank of the weak conditionally
typical projector (see Appendix~\ref{sec:typ-review}).

Substituting back into (\ref{eq:1st-error-term}), we have%
\begin{align*}
& \sum_{l\neq1}\mathbb{E}_{XY}\left\{  \text{Tr}\left\{  \Pi_{l,1}^{\prime}%
\Pi_{Y\left(  1\right)  }\ \rho_{X\left(  1\right)  ,Y\left(  1\right)  }%
\ \Pi_{Y\left(  1\right)  }\right\}  \right\}  \\
&\qquad \leq2^{-n\left[  H\left(  B|Y\right)  -\delta\right]  }\sum_{l\neq
1}2^{n\left[  H\left(  B|XY\right)  +\delta\right]  }\\
&\qquad \leq2^{-n\left[  H\left(  B|Y\right)  -\delta\right]  }\ 2^{n\left[
H\left(  B|XY\right)  +\delta\right]  }\ L\\
& \qquad=2^{-n\left[  I\left(  X;B|Y\right)  -2\delta\right]  }\ L.
\end{align*}

We employ a different argument to bound the term in the third line of
(\ref{eq:three-terms}). Consider the following chain of inequalities:%
\begin{align}
& \sum_{m\neq1}\mathbb{E}_{XY}\left\{  \text{Tr}\left\{  \Pi_{1,m}^{\prime}%
\ \Pi_{Y\left(  1\right)  }\ \rho_{X\left(  1\right)  ,Y\left(  1\right)
}\ \Pi_{Y\left(  1\right)  }\right\}  \right\}  \nonumber \\
& \quad=\sum_{m\neq1}\mathbb{E}_{X}\{  \text{Tr}\{  \mathbb{E}_{Y}\{
\Pi_{1,m}^{\prime}\}  \mathbb{E}_{Y}\{  \Pi_{Y\left(  1\right)
}\ \rho_{X\left(  1\right)  ,Y\left(  1\right)  }\ \Pi_{Y\left(  1\right)
}\}  \}  \}  \label{eq:second-error}%
\end{align}
This equality follows from the fact that $Y\left(  m\right)  $ and $Y\left(
1\right)  $ are independent. We now focus on bounding the operator $\mathbb{E}%
_{Y}\left\{  \Pi_{1,m}^{\prime}\right\}  $ inside the trace:%
\begin{align}
 \mathbb{E}_{Y}&\!\left\{  \Pi_{1,m}^{\prime}\right\}  \nonumber\\
& =\mathbb{E}_{Y}\left\{  \Pi\ \Pi_{X\left(  1\right)  }\ \Pi_{X\left(
1\right)  ,Y\left(  m\right)  }\ \Pi_{X\left(  1\right)  }\ \Pi\right\}
\nonumber\\
& \leq2^{n\left[  H\left(  B|XY\right)  +\delta\right]  }\ \mathbb{E}%
_{Y}\left\{  \Pi\ \Pi_{X\left(  1\right)  }\ \rho_{X\left(  1\right)
,Y\left(  m\right)  }\ \Pi_{X\left(  1\right)  }\ \Pi\right\}  \nonumber\\
& =2^{n\left[  H\left(  B|XY\right)  +\delta\right]  }\ \Pi\ \Pi_{X\left(
1\right)  }\ \mathbb{E}_{Y}\left\{  \rho_{X\left(  1\right)  ,Y\left(
m\right)  }\right\}  \ \Pi_{X\left(  1\right)  }\ \Pi\nonumber\\
& =2^{n\left[  H\left(  B|XY\right)  +\delta\right]  }\ \Pi\ \Pi_{X\left(
1\right)  }\ \rho_{X\left(  1\right)  }\ \Pi_{X\left(  1\right)  }%
\ \Pi\nonumber\\
& \leq2^{n\left[  H\left(  B|XY\right)  +\delta\right]  }\ 2^{-n\left[
H\left(  B|X\right)  -\delta\right]  }\ \Pi\ \Pi_{X\left(  1\right)  }%
\ \Pi\nonumber\\
& =2^{-n\left[  I\left(  Y;B|X\right)  -2\delta\right]  }\ \Pi\ \Pi_{X\left(
1\right)  }\ \Pi \nonumber \\
& =2^{-n\left[  I\left(  Y;B|X\right)  -2\delta\right]  }\ I \label{eq:second-type-error}%
\end{align}
The first equality follows by substitution. The first inequality follows from
the following operator inequality:%
\begin{align*}
\Pi_{x^{n},y^{n}}  & \leq2^{n\left[  H\left(  B|XY\right)  +\delta\right]
}\ \Pi_{x^{n},y^{n}}\ \rho_{x^{n},y^{n}}\ \Pi_{x^{n},y^{n}}\\
& = 2^{n\left[  H\left(  B|XY\right)  +\delta\right]
}\ \Pi_{x^{n},y^{n}}\ \sqrt{\rho_{x^{n},y^{n}}} \sqrt{\rho_{x^{n},y^{n}}} \ \Pi_{x^{n},y^{n}}\\
& = 2^{n\left[  H\left(  B|XY\right)  +\delta\right]
}\ \sqrt{\rho_{x^{n},y^{n}}} \Pi_{x^{n},y^{n}} \sqrt{\rho_{x^{n},y^{n}}} \\
& \leq2^{n\left[  H\left(  B|XY\right)  +\delta\right]  }\ \rho_{x^{n},y^{n}}.
\end{align*}
The second equality follows because $\Pi\ $\ and $\Pi_{X\left(  1\right)  }$
are constants with respect to the expectation over $Y$. The third equality
follows because $\mathbb{E}_{Y}\left\{  \rho_{X\left(  1\right)  ,Y\left(
m\right)  }\right\}  =\rho_{X\left(  1\right)  }$, and the second inequality
follows from the operator inequality%
\[
\Pi_{x^{n}}\ \rho_{x^{n}}\ \Pi_{x^{n}}\leq2^{-n\left[  H\left(  B|X\right)
-\delta\right]  }\Pi_{x^{n}}.
\]
The final inequality follows from
\[
\Pi\ \Pi_{x^n }\ \Pi \leq \Pi \leq I .
\]
Substituting the operator inequality in (\ref{eq:second-type-error}) into
(\ref{eq:second-error}), we have%
\begin{align*}
& \sum_{m\neq1}\mathbb{E}_{XY}\left\{  \text{Tr}\left\{  \Pi_{1,m}^{\prime
}\ \Pi_{Y\left(  1\right)  }\ \rho_{X\left(  1\right)  ,Y\left(  1\right)
}\ \Pi_{Y\left(  1\right)  }\right\}  \right\}  \\
& \ \ 
 \leq2^{-n\left[  I\left(  Y;B|X\right)  -2\delta\right]  }
\!\sum_{m\neq1}\mathbb{E}_{XY}\!\!\left\{  \text{Tr}\left\{    \Pi_{Y\left(  1\right)  }
 \rho_{X\left(  1\right)  ,Y\left(  1\right)  } \Pi_{Y\left(  1\right)
}  \right\}  \right\}  \\
& \ \  \leq2^{-n\left[  I\left(  Y;B|X\right)  -2\delta\right]  }\sum_{m\neq
1}\mathbb{E}_{XY}\left\{  \text{Tr}\left\{  \rho_{X\left(  1\right)  ,Y\left(
1\right)  }\right\}  \right\}  \\
& \ \ 
\leq2^{-n\left[  I\left(  Y;B|X\right)  -2\delta\right]  }\ M
\end{align*}
The second inequality follows because $\Pi_{y^{n}} \leq I$.

Finally, we obtain a bound on the term in the last line of (\ref{eq:three-terms}) with a
slightly different argument:%
\begin{align*}
&  \!\!\!\!\!\sum_{\ \ l\neq1,m\neq1}\mathbb{E}_{XY}\left\{  \text{Tr}\left\{  \Pi
_{l,m}^{\prime}\ \Pi_{Y\left(  1\right)  }\ \rho_{X\left(  1\right)  ,Y\left(
1\right)  }\ \Pi_{Y\left(  1\right)  }\right\}  \right\}  \\
&  =\sum_{\substack{l\neq1,\\m\neq1}}\mathbb{E}_{Y}\left\{  \text{Tr}\left\{
\mathbb{E}_{X}\left\{  \Pi_{l,m}^{\prime}\right\}  \Pi_{Y\left(  1\right)
}\mathbb{E}_{X}\left\{  \rho_{X\left(  1\right)  ,Y\left(  1\right)
}\right\}  \Pi_{Y\left(  1\right)  }\right\}  \right\}  \\
&  =\!\!\!\!\!\sum_{l\neq1,\ m\neq1}\mathbb{E}_{Y}\left\{  \text{Tr}\left\{
\mathbb{E}_{X}\left\{  \Pi_{l,m}^{\prime}\right\}  \ \Pi_{Y\left(  1\right)
}\ \rho_{Y\left(  1\right)  }\ \Pi_{Y\left(  1\right)  }\right\}  \right\}  \\
&  \leq\!\!\!\!\!\sum_{l\neq1,\ m\neq1}\mathbb{E}_{Y}\left\{  \text{Tr}\left\{
\mathbb{E}_{X}\left\{  \Pi_{l,m}^{\prime}\right\}  \ \rho_{Y\left(  1\right)
}\right\}  \right\}  \\
&  =\sum_{\substack{l\neq1,\\m\neq1}}\mathbb{E}_{XY}\left\{  \text{Tr}\left\{
\Pi\ \Pi_{X\left(  l\right)  }\ \Pi_{X\left(  l\right)  ,Y\left(  m\right)
}\ \Pi_{X\left(  l\right)  }\ \Pi\ \rho_{Y\left(  1\right)  }\right\}
\right\}  \\
&  =\sum_{\substack{l\neq1,\\m\neq1}}\mathbb{E}_{X}\{\text{Tr}\{\Pi
\Pi_{X\left(  l\right)  }\mathbb{E}_{Y}\{\Pi_{X\left(  l\right)  Y\left(
m\right)  }\}\Pi_{X\left(  l\right)  }\Pi\mathbb{E}_{Y}\{\rho_{Y\left(
1\right)  }\}\}\}\\
&  =\sum_{\substack{l\neq1,\\m\neq1}}\mathbb{E}_{X}\left\{  \text{Tr}\left\{
\Pi_{X\left(  l\right)  }\ \mathbb{E}_{Y}\left\{  \Pi_{X\left(  l\right)
,Y\left(  m\right)  }\right\}  \ \Pi_{X\left(  l\right)  }\ \Pi\ \rho^{\otimes
n}\ \Pi\right\}  \right\}
\end{align*}
The first equality follows from the independence of $X\left(  l\right)  $ and
$X\left(  1\right)  $. The second equality follows because $\mathbb{E}%
_{X}\left\{  \rho_{X\left(  1\right)  ,Y\left(  1\right)  }\right\}
=\rho_{Y\left(  1\right)  }$. The first inequality follows from the fact that
$\rho_{y^{n}}$ and $\Pi_{y^{n}}$ commute and thus $\Pi_{y^{n}}\ \rho_{y^{n}%
}\ \Pi_{y^{n}}=\sqrt{\rho_{y^{n}}}\ \Pi_{y^{n}}\ \sqrt{\rho_{y^{n}}}\leq
\rho_{y^{n}}$. The third equality follows from factoring out the expectation
and substitution of the definition of $\Pi_{l,m}^{\prime}$. The fourth
equality follows from the independence of $Y\left(  m\right)  $ and $Y\left(
1\right)  $. The last equality follows because $\mathbb{E}_{Y}\left\{
\rho_{Y\left(  1\right)  }\right\}  =\rho^{\otimes n}$ and from cyclicity of
trace. Continuing, we have%
\begin{align}
&  \leq2^{-n\left[  H\left(  B\right)  -\delta\right]  }\times\nonumber\\
&  \ \ \sum_{l\neq1,\ m\neq1}\text{Tr}\left\{  \mathbb{E}_{X}\left\{
\Pi_{X\left(  l\right)  }\ \mathbb{E}_{Y}\left\{  \Pi_{X\left(  l\right)
,Y\left(  m\right)  }\right\}  \ \Pi_{X\left(  l\right)  }\right\}
\ \Pi\right\}  \nonumber
\end{align}
\begin{align}
&  =2^{-n\left[  H\left(  B\right)  -\delta\right]  }\times\nonumber\\
&  \ \sum_{l\neq1,\ m\neq1}\mathbb{E}_{XY}\left\{  \text{Tr}\left\{
\Pi_{X\left(  l\right)  ,Y\left(  m\right)  }\ \Pi_{X\left(  l\right)  }%
\ \Pi\ \Pi_{X\left(  l\right)  }\right\}  \right\}  \nonumber\\
&  \leq2^{-n\left[  H\left(  B\right)  -\delta\right]  }\sum_{l\neq1,\ m\neq
1}\mathbb{E}_{XY}\left\{  \text{Tr}\left\{  \Pi_{X\left(  l\right)  ,Y\left(
m\right)  }\right\}  \right\}  \nonumber\\
&  \leq2^{-n\left[  H\left(  B\right)  -\delta\right]  }\ 2^{n\left[  H\left(
B|XY\right)  +\delta\right]  }\ LM\nonumber\\
&  =2^{-n\left[  I\left(  XY;B\right)  -2\delta\right]  }\ LM.
\end{align}
The first inequality is from the following operator inequality:%
\[
\Pi\ \rho^{\otimes n}\ \Pi\leq2^{-n\left[  H\left(  B\right)  -\delta\right]
}\Pi.
\]
The second equality is from cyclicity of trace and factoring out the
expectations. The second inequality is from the operator inequality%
\[
\Pi_{x^{n}}\ \Pi\ \Pi_{x^{n}}\leq\Pi_{x^{n}}\leq I.
\]
The final inequality is from the bound on the rank of the weak conditionally
typical projector.

Combining everything together, we get the following bound on the expectation
of the average error probability:%
\begin{multline*}
\mathbb{E}_{XY}\left\{  \overline{p}_{e}\right\}  \leq2\left(  \epsilon
+7\sqrt{\epsilon}\right)  +\\
4\ L\ 2^{-n\left[  I\left(  X;B|Y\right)  -2\delta\right]  }%
+4\ M\ 2^{-n\left[  I\left(  Y;B|X\right)  -2\delta\right]  }+\\
4\ LM\ 2^{-n\left[  I\left(  XY;B\right)  -2\delta\right]  }.
\end{multline*}
Thus, we can choose the message sizes to be as follows:%
\begin{align*}
L &  =2^{n\left[  R_{1}-3\delta\right]  },\\
M &  =2^{n\left[  R_{2}-3\delta\right]  },
\end{align*}
so that the expectation of the average error probability vanishes in the
asymptotic limit whenever the rates $R_{1}$ and $R_{2}$ obey the following
inequalities:%
\begin{align*}
R_{1}-\delta &  <I\left(  X;B|Y\right)  ,\\
R_{2}-\delta &  <I\left(  Y;B|X\right)  ,\\
R_{1}+R_{2}-4\delta &  <I\left(  XY;B\right)  .
\end{align*}

\end{proof}

A casual glance at the above proof might lead one to believe it is just a
straightforward extension of the ``usual'' proofs of the HSW theorem
\cite{ieee1998holevo,PhysRevA.56.131,ieee2005dev,itit2008hsieh,W11}, but it
differs from these and extends them non trivially in several regards. First,
we choose the square-root POVM in (\ref{eq:square-root-POVM}) in a particular
way---specifically, the layering of projectors is such that the projector of
size $\approx2^{n H(B|XY)}$ is surrounded by the projector of size
$\approx2^{H(B|X)}$, which itself is surrounded by the projector of size
$\approx2^{nH(B)}$. If one were to place the projector of size $\approx
2^{nH(B|Y)}$ somewhere in the square-root POVM, this leads to difficulties
with non-commutative projectors (discussed in earlier versions of this paper
on the arXiv). So, our second observation is to instead ``smooth'' the state
by the projector of size $\approx2^{nH(B|Y)}$ before applying the
Hayashi-Nagaoka operator inequality. The above combination seems to be just
the right trick for applying independence of the codewords after invoking the
Hayashi-Nagaoka operator inequality. The final way in which our proof differs
from earlier ones is that we analyze each of the four errors in a different
way (these four types of errors occur after the application of the
Hayashi-Nagaoka operator inequality). This asymmetry does not occur in the
error analysis of the classical multiple access channel (see page 4-15 of
Ref.~\cite{el2010lecture}), but for the moment, it seems to be necessary in
the quantum case due to the general non-commutativity of typical projectors.
Many of these observations are present in Sen's proof of the above
theorem \cite{S11a}, but his proof introduces several new techniques
(interestingly, he does not exploit the familiar square-root POVM or the
Hayashi-Nagaoka operator inequality).

We obtain the following simple corollary of Theorem~\ref{thm:two-sender-QSD}
by a technique called ``coded time-sharing'' \cite{HK81,el2010lecture}. The
main idea is to pick a sequence $q^{n}$ according to a product distribution
$p_{Q^{n}}(q^{n})$ and then pick the codeword sequences $x^{n}$ and $y^{n}$
according to $p_{X^{n}|Q^{n}}(x^{n}|q^{n})$ and $p_{Y^{n}|Q^{n}}(y^{n}|q^{n}%
)$, respectively (so that $x^{n}$ and $y^{n}$ are conditionally independent
when given $q^{n}$). In the proof, all typical projectors are conditional on
$q^{n}$, and we take the expectation over the time-sharing variable $Q$ as
well when bounding the expectation of the average error probability. Thus, we
omit the proof of the below corollary.

\begin{corollary}
\label{cor:two-sender-QSD} Suppose that the rates $R_{1}$ and $R_{2}$ satisfy
the following inequalities:%
\begin{align}
R_{1} &  \leq I\left(  X;B|YQ\right)  ,\\
R_{2} &  \leq I\left(  Y;B|XQ\right)  _{\rho},\\
R_{1}+R_{2} &  \leq I\left(  XY;B|Q\right)  _{\rho},
\end{align}
where the entropies are with respect to a state of the following form:
\begin{multline*}
\rho^{QXYB}\equiv\sum_{x,y,q}p_{Q}(q)\,p_{X|Q}\left(  x|q\right)
\,p_{Y|Q}\left(  y|q\right) \\[-1mm]
\left\vert q\right\rangle \left\langle q\right\vert ^{Q}\otimes
\left\vert x\right\rangle \left\langle x\right\vert ^{X}\otimes\left\vert
y\right\rangle \left\langle y\right\vert ^{Y}\otimes\rho_{x,y}^{B}.
\end{multline*}
Then, if the codebooks for Senders~1 and 2 are chosen as described above, 
there exists a 
corresponding simultaneous decoding POVM $\left\{  \Lambda_{l,m}\right\}  $
such that the expectation of the average probability of error is bounded above
by $\epsilon$ for all $\epsilon>0$ and sufficiently large $n$.
\end{corollary}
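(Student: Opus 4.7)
The plan is to mimic the proof of Theorem~\ref{thm:two-sender-QSD} verbatim, now working with typical projectors conditioned on the time-sharing sequence $q^n$ and taking an additional outer expectation over $Q^n$. First I would sample $q^n$ from the product distribution $\prod_{i=1}^n p_Q(q_i)$ and, conditionally on $q^n$, draw the codewords $X^n(l)$ and $Y^n(m)$ independently from $\prod_i p_{X|Q}(\cdot \mid q_i)$ and $\prod_i p_{Y|Q}(\cdot \mid q_i)$. The averaged states in (\ref{eq:rho_x})--(\ref{eq:rho}) are replaced by their $q$-dependent analogues
\begin{align*}
\rho_{x,q} &\equiv \sum_y p_{Y|Q}(y|q)\,\rho_{x,y},\\
\rho_{y,q} &\equiv \sum_x p_{X|Q}(x|q)\,\rho_{x,y},\\
\rho_q &\equiv \sum_{x,y} p_{X|Q}(x|q)\,p_{Y|Q}(y|q)\,\rho_{x,y}.
\end{align*}
The decoding POVM takes the same square-root form (\ref{eq:square-root-POVM}), but with $\Pi'_{l,m} \equiv \Pi_{q^n}\,\Pi_{X^n(l),q^n}\,\Pi_{X^n(l),Y^n(m),q^n}\,\Pi_{X^n(l),q^n}\,\Pi_{q^n}$, where each factor is the weak conditionally typical projector for the product state indexed by $q^n$ with the appropriate additional conditioning on $X^n(l)$, $Y^n(m)$, or both.

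The error analysis then runs through exactly the same sequence of steps as in Theorem~\ref{thm:two-sender-QSD}: initial smoothing (\ref{eq:first-bound}) by $\Pi_{Y^n(1),q^n}$, followed by the Hayashi--Nagaoka operator inequality, which reduces the problem to bounding the three sums analogous to those in (\ref{eq:three-terms}) plus an $O(\sqrt{\epsilon})$ remainder. Each sum is controlled using the conditional typical-subspace operator inequalities
\begin{align*}
\Pi_{y^n,q^n}\,\rho_{y^n,q^n}\,\Pi_{y^n,q^n} &\leq 2^{-n[H(B|YQ)-\delta]}\,\Pi_{y^n,q^n},\\
\Pi_{x^n,q^n}\,\rho_{x^n,q^n}\,\Pi_{x^n,q^n} &\leq 2^{-n[H(B|XQ)-\delta]}\,\Pi_{x^n,q^n},\\
\Pi_{q^n}\,\rho_{q^n}\,\Pi_{q^n} &\leq 2^{-n[H(B|Q)-\delta]}\,\Pi_{q^n},
\end{align*}
together with the rank bound $\text{Tr}\{\Pi_{x^n,y^n,q^n}\} \leq 2^{n[H(B|XYQ)+\delta]}$ and the expected-trace estimates that replace (\ref{eq:typ-cond-typ-1})--(\ref{eq:typ-cond-typ-4}) once $Q$ is appended to the classical register. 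The algebra is identical to the unconditional case, except that every entropy carries an extra $Q$ in the conditioning, so the three summations collapse to bounds of the form $L\cdot 2^{-n[I(X;B|YQ)-2\delta]}$, $M\cdot 2^{-n[I(Y;B|XQ)-2\delta]}$, and $LM\cdot 2^{-n[I(XY;B|Q)-2\delta]}$. Choosing $L = 2^{n(R_1-3\delta)}$ and $M = 2^{n(R_2-3\delta)}$ then drives the expected error probability below $\epsilon$ whenever the three inequalities of the corollary hold strictly.

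The hard part has in fact already been dealt with in Theorem~\ref{thm:two-sender-QSD}; the only item requiring a sanity check here is that the gentle-operator-for-ensembles lemma, the rank bound, and the asymptotic equipartition inequalities for weak conditional typicality in Appendix~\ref{sec:typ-review} continue to hold when the classical side-information register is enlarged from $(X,Y)$ to $(Q,X,Y)$ and the codewords are drawn from the product distribution conditioned on $q^n$. Since those lemmas are stated for arbitrary classical--quantum ensembles, this enlargement is automatic, which is why the proof can safely be omitted. A standard expurgation over the joint random choice of $q^n$ and of the codebooks finally yields a deterministic $(n, R_1 - \delta, R_2 - \delta, \epsilon)$ code of the required form.
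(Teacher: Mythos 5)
Your proposal is correct and takes essentially the same approach that the paper itself gestures at just before stating the corollary: coded time-sharing, with all typical projectors conditioned on the time-sharing sequence $q^n$ and the outer expectation taken over $Q^n$ as well. The paper omits the details, and you have simply filled them in faithfully, so there is nothing to flag.
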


\subsubsection{Conjecture for Three-Sender Quantum Simultaneous Decoding}

We now state our conjecture regarding the existence of a quantum simultaneous
decoder for a quantum multiple access channel with three classical inputs.
We state the conjecture for a
three-sender quantum multiple access channel because this form is the one
required for the proof of the Han-Kobayashi achievable rate region \cite{HK81}.

\begin{conjecture}
[Existence of a Three-Sender Quantum Simultaneous Decoder]\label{conj:sim-dec}Let
$x,y,z\rightarrow\rho_{x,y,z}$ be a cccq quantum multiple access channel,
where Sender~1 has access to the $x$ input, Sender~2 has access to the $y$
input, and Sender~3 has access to the $z$ input. 
Let $p_X, p_Y$ and $p_Z$ be distributions on the inputs. Define the following random code:
let $\{X^n(k)\}_{k \in \{1, \dots, K\}}$ be independent random variables distributed according to the
product distribution $p_{X^n}$ and similarly and independently let $\{Y^n(l)\}_{l \in \{1, \dots, L\}}$ and $\{Z^n(m)\}_{m \in \{1, \dots, M\}}$ be independent random variables distributed according to product distributions $p_{Y^n}$ and $p_{Z^n}$. The rates of communication are $R_{1}=\frac{1}%
{n}\log_{2}\left(  K\right) +\delta $, $R_{2}=\frac{1}{n}\log_{2}\left(
L\right)  +\delta$, and $R_{3}=\frac{1}{n}\log_{2}\left(  M\right)  +\delta$,
respectively, where $\delta > 0$. Suppose that these rates obey the following inequalities:%
\begin{align*}
R_{1}  &  \leq I\left(  X;B|YZ\right)  _{\rho},\\
R_{2}  &  \leq I\left(  Y;B|XZ\right)  _{\rho},\\
R_{3}  &  \leq I\left(  Z;B|XY\right)  _{\rho},\\
R_{1}+R_{2}  &  \leq I\left(  XY;B|Z\right)  _{\rho},\\
R_{1}+R_{3}  &  \leq I\left(  XZ;B|Y\right)  _{\rho},\\
R_{2}+R_{3}  &  \leq I\left(  YZ;B|X\right)  _{\rho},\\
R_{1}+R_{2}+R_{3}  &  \leq I\left(  XYZ;B\right)  _{\rho},
\end{align*}
where the Holevo information quantities are with respect to the following
classical-quantum state:%
\begin{multline}
\label{eq:in-out-3mac}
\rho^{XYZB} \equiv  \sum_{x,y,z}p_{X}\!\left(  x\right)  p_{Y}\!\left(  y\right)
p_{Z}\!\left(  z\right)  \left\vert x\right\rangle \left\langle x\right\vert
^{X}\otimes\left\vert y\right\rangle \left\langle y\right\vert ^{Y} \\
\otimes\left\vert z\right\rangle \left\langle z\right\vert ^{Z}\otimes
\rho_{x,y,z}^{B}.
\end{multline}
Then there exists a decoding POVM $\left\{  \Lambda_{l,m,k}\right\}_{l,m,k}  $ such that the
expectation of the average probability of error is bounded above by $\epsilon$
for all $\epsilon>0$ and sufficiently large $n$:%
\[
\mathbb{E}\left\{  \frac{1}{KLM}\sum_{k,l,m}%
\text{Tr}\left\{  \left(  I-\Lambda_{k,l,m}\right)  \rho_{X^{n}\left(
k\right)  ,Y^{n}\left(  l\right)  ,Z^{n}\left(  m\right)  }\right\}  \right\}
\leq\epsilon,
\]
where the expectation is with respect to $X^{n}$, $Y^{n}$, and $Z^{n}$.

\end{conjecture}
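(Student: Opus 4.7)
The plan is to mimic the proof of Theorem~\ref{thm:two-sender-QSD}. I would use the random codes described in the statement of the conjecture and build a Hayashi-Nagaoka square-root POVM $\{\Lambda_{k,l,m}\}$ from positive operators of the nested form
$\Pi'_{k,l,m} \equiv \Pi \, \Pi_{X(k)} \, \Pi_{X(k),Y(l)} \, \Pi_{X(k),Y(l),Z(m)} \, \Pi_{X(k),Y(l)} \, \Pi_{X(k)} \, \Pi$, which is the symmetric three-sender analogue of the operator in~(\ref{eq:square-root-POVM}). By the symmetry of the random coding, it suffices to analyze the expected average error probability for the transmitted triple $(1,1,1)$. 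Before applying the Hayashi-Nagaoka operator inequality, I would ``smooth'' $\rho_{X(1),Y(1),Z(1)}$ by one or more conditional-typical projectors $\widetilde{\Pi}$ and control the replacement error via the Gentle Operator Lemma for ensembles, exactly as the projector $\Pi_{Y(1)}$ was inserted in the derivation of~(\ref{eq:first-bound}).

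After the Hayashi-Nagaoka step, I would be left with one ``typicality'' error analogous to~(\ref{eq:first-type-bound}) (bounded by $O(\sqrt{\epsilon})$) and seven ``competing-codeword'' error terms, one for each nonempty subset $S \subseteq \{k,l,m\}$ of indices that differ from $(1,1,1)$, playing the role of the three terms in~(\ref{eq:three-terms}). For each $S$, I would take the partial expectation over the codewords indexed by $S$, use independence of the codewords to replace $\rho_{X(1),Y(1),Z(1)}$ by its natural average (for instance $\mathbb{E}_X\{\rho_{X(1),Y(1),Z(1)}\}=\rho_{Y(1),Z(1)}$ when $S=\{k\}$, or $\mathbb{E}_{XY}\{\rho_{X(1),Y(1),Z(1)}\}=\rho_{Z(1)}$ when $S=\{k,l\}$), and combine the operator bounds $\Pi_{\sigma}\rho_{\sigma}\Pi_{\sigma} \leq 2^{-n[H(\sigma)-\delta]}\Pi_{\sigma}$ and $\Pi_{\sigma} \leq 2^{n[H(\sigma)+\delta]}\rho_{\sigma}$ with the rank bound $\text{Tr}\,\Pi_{X(k),Y(l),Z(m)}\leq 2^{n[H(B|XYZ)+\delta]}$ to extract the corresponding mutual information. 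The full-support term $S=\{k,l,m\}$ would additionally use $\mathbb{E}_{XYZ}\{\rho_{X(1),Y(1),Z(1)}\}=\rho^{\otimes n}$ and $\Pi\rho^{\otimes n}\Pi\leq 2^{-n[H(B)-\delta]}\Pi$.

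The main obstacle---and the reason the conjecture has resisted a full proof---is choosing the smoothing $\widetilde{\Pi}$ together with the layering inside $\Pi'_{k,l,m}$ so that all seven competing-codeword terms can be controlled simultaneously. To produce $I(X;B|YZ)$ in the $S=\{k\}$ analysis, the natural smoother is $\Pi_{Y(1),Z(1)}$, since averaging over $X$ yields $\rho_{Y(1),Z(1)}$ and one then applies $\Pi_{Y(1),Z(1)}\rho_{Y(1),Z(1)}\Pi_{Y(1),Z(1)} \leq 2^{-n[H(B|YZ)-\delta]}\Pi_{Y(1),Z(1)}$. By symmetry, the $S=\{l\}$ and $S=\{m\}$ terms seem to call for smoothing by $\Pi_{X(1),Z(1)}$ and $\Pi_{X(1),Y(1)}$, respectively, to produce $I(Y;B|XZ)$ and $I(Z;B|XY)$. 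These three candidate smoothers do not commute with each other, so no single choice of $\widetilde{\Pi}$ replicates the role played by $\Pi_{Y(1)}$ in the two-sender case; and a product of them incurs Gentle Operator Lemma corrections that are not obviously negligible. An alternative restructuring of $\Pi'_{k,l,m}$ to expose each of $\Pi_{X(k),Y(l)}$, $\Pi_{X(k),Z(m)}$, $\Pi_{Y(l),Z(m)}$ at an appropriate layer meets the same wall, since the Hayashi-Nagaoka manipulations force a definite ordering of non-commuting projectors.

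A closely related difficulty arises in the ``double-wrong'' terms $|S|=2$: the two-sender argument crucially used the commutation identity $\Pi_{y^n}\rho_{y^n}=\rho_{y^n}\Pi_{y^n}$ (exploited just before bounding the third line of~(\ref{eq:three-terms})), but the three-sender analogue $\Pi_{Y(1),Z(1)}\rho_{Z(1)} = \rho_{Z(1)}\Pi_{Y(1),Z(1)}$ fails in general. This is precisely the non-commutativity that the paper dodges by specializing to channels with commuting averaged output states; I expect a resolution of the conjecture to require either a genuinely new POVM construction (one that is not of the nested Hayashi-Nagaoka form) or new typicality-style operator identities that are robust to non-commutativity.
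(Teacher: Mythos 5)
The statement you were asked to prove is explicitly labeled a \emph{Conjecture} in the paper, not a theorem; the paper offers no proof of it in general and only establishes two partial results: a proof under the extra hypothesis that the averaged output states $\rho_{x,y}$, $\rho_{x,z}$, $\rho_{y,z}$ pairwise commute, and an unconditional but weaker bound stated in terms of min-entropies (Theorem~\ref{thm:min-ent}). Your writeup is therefore best read not as a proof but as a diagnosis of why the natural Hayashi--Nagaoka strategy fails, and that diagnosis is correct and consonant with the paper's own account of the obstruction.

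Your nested-projector construction $\Pi'_{k,l,m}=\Pi\,\Pi_{X(k)}\,\Pi_{X(k),Y(l)}\,\Pi_{X(k),Y(l),Z(m)}\,\Pi_{X(k),Y(l)}\,\Pi_{X(k)}\,\Pi$ is, up to permuting which variable sits at which layer, exactly the one the paper uses in its min-entropy theorem (there the outer conditioning layer is $y^{n}$ rather than $x^{n}$). You correctly identify why this nesting can only produce min-entropy bounds on three of the seven error terms rather than the desired von Neumann conditional entropies: the three two-index smoothers $\Pi_{Y(1),Z(1)}$, $\Pi_{X(1),Z(1)}$, $\Pi_{X(1),Y(1)}$ do not commute, so no single pre-Hayashi--Nagaoka smoothing can simultaneously set up the operator bound $\Pi_{\sigma}\rho_{\sigma}\Pi_{\sigma}\le 2^{-n[H(\sigma)-\delta]}\Pi_{\sigma}$ for all three single-wrong-index terms; and the commutation $\Pi_{y^n}\rho_{y^n}=\rho_{y^n}\Pi_{y^n}$ that the two-sender proof exploits repeatedly has no useful three-sender analogue (e.g.\ $\Pi_{Y(1),Z(1)}$ need not commute with $\rho_{Z(1)}$). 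This is precisely the non-commutativity the paper sidesteps by its commuting-averages hypothesis. So you have not closed the gap---which is appropriate, since the statement is an open conjecture in the paper (the paper notes that the Han--Kobayashi region was subsequently obtained by Sen via different techniques, but states that the conjecture itself remained open)---and your assessment of where the standard machinery breaks and what a resolution would require (either a POVM not of the nested form or new operator identities robust to non-commutativity) accurately mirrors the paper's own position.
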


The importance of this conjecture stems not only from the fact that a proof of it would
be helpful in achieving a ``quantized'' version of the Han-Kobayashi achievable rate region,
but also because such a proof might more broadly be helpful for ``quantizing'' other results in network
classical information theory. Indeed, many coding theorems in network classical information
theory exploit a simultaneous decoding approach (sometimes known as jointly typical decoding)
\cite{el2010lecture}. Also, Dutil and Hayden have recently put forward a related conjecture
known as the ``multiparty typicality'' conjecture \cite{D11}, and it is likely that a proof of
Conjecture~\ref{conj:sim-dec} could aid in producing a proof of the multiparty typicality conjecture
or vice versa.

\subsubsection{Special Cases of the Conjecture}

\label{sec:min-entropy}We now offer two theorems that are variations of the above
conjecture that do hold for three-sender multiple access channels.
The first is a special case in which we assume that certain averaged output
states commute, and the second is one in which certain bounds contain min-entropies. 
It seems likely that an eventual proof of
Conjecture~\ref{conj:sim-dec}, should one be found,
will involve steps similar to those presented below,
albeit with some crucial additional ideas.

\paragraph{Commuting Case}

We prove a special case of Conjecture~\ref{conj:sim-dec} in which we assume that certain
averaged output states commute. First, let us define the following states%
\begin{align*}
\rho_{x,z}  &  \equiv\sum_{y}p_{Y}\left(  y\right)  \ \rho_{x,y,z}%
,\\
 \rho_{y,z}& \equiv\sum_{x}p_{X}\left(  x\right)  \ \rho_{x,y,z}%
,\\ 
\rho_{x,y}& \equiv\sum_{z}p_{Z}\left(  z\right)  \ \rho_{x,y,z},\\
\rho_{x}  &  \equiv\sum_{z}p_{Z}\left(  z\right)  \ \rho_{x,z},\\
\rho_{y} & \equiv\sum_{x}p_{X}\left(  x\right)  \ \rho_{x,y},\\
\rho_{z} & \equiv \sum_{y}p_{Y}\left(  y\right)  \ \rho_{y,z},\\
\rho &  \equiv\sum_{x,y,z}p_{X}\left(  x\right)  \ p_{Y}\left(  y\right)
\ p_{Z}\left(  z\right)  \ \rho_{x,y,z}.
\end{align*}

\begin{theorem}
[Averaged State Commuting Case]Consider the same setup as in Conjecture~\ref{conj:sim-dec}, with
the additional assumption that certain averaged states commute:$\ \left[
\rho_{x,z},\rho_{y,z}\right]  =\left[  \rho_{x,y},\rho_{y,z}\right]  =\left[
\rho_{x,y},\rho_{x,z}\right]  =0$ for all $x\in\mathcal{X}$, $y\in\mathcal{Y}%
$, and $z\in\mathcal{Z}$. Then there exists a quantum simultaneous decoder in
the sense described in Conjecture~\ref{conj:sim-dec}.
\end{theorem}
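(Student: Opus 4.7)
The plan is to follow the structure of the two-sender proof of Theorem~\ref{thm:two-sender-QSD}, enlarging both the decoding POVM and the error analysis to accommodate three senders, while using the commutativity hypothesis to justify operator manipulations that would otherwise be intractable. I would begin by generating codewords $\{X^n(k)\}, \{Y^n(l)\}, \{Z^n(m)\}$ independently according to the product distributions $p_{X^n}, p_{Y^n}, p_{Z^n}$, and by introducing the weak conditionally typical projectors $\Pi$ for $\rho^{\otimes n}$, the singly-averaged projectors $\Pi_{X(k)}, \Pi_{Y(l)}, \Pi_{Z(m)}$, the doubly-averaged projectors $\Pi_{X(k),Y(l)}, \Pi_{X(k),Z(m)}, \Pi_{Y(l),Z(m)}$, and the fully-conditioned projector $\Pi_{X(k),Y(l),Z(m)}$.

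Next, I would build a square-root POVM $\Lambda_{k,l,m} = (\sum_{k',l',m'} \Pi'_{k',l',m'})^{-1/2} \Pi'_{k,l,m} (\sum_{k',l',m'} \Pi'_{k',l',m'})^{-1/2}$ whose central positive operator nests the typical projectors in a symmetric, commutativity-friendly way, namely
\[
\Pi'_{k,l,m} \equiv \Pi \, P_{k,l,m} \, \Pi_{X(k),Y(l),Z(m)} \, P_{k,l,m} \, \Pi,
\]
with $P_{k,l,m} \equiv \Pi_{X(k),Y(l)} \Pi_{X(k),Z(m)} \Pi_{Y(l),Z(m)}$. Since $[\rho_{x,y}, \rho_{y,z}] = [\rho_{x,y}, \rho_{x,z}] = [\rho_{x,z}, \rho_{y,z}] = 0$ forces the corresponding conditionally typical projectors to pairwise (and hence mutually) commute, $P_{k,l,m}$ is itself a projector and the three pairwise factors may be freely reordered inside $\Pi'_{k,l,m}$. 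This preserves the algebraic cleanness that made the two-sender proof work, including the ability to invoke the trace inequality $\text{Tr}\{\Lambda\rho\} \leq \text{Tr}\{\Lambda\sigma\} + \|\rho-\sigma\|_1$ and the Gentle Operator Lemma layer by layer.

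The error analysis would then split into two phases. First, following the two-sender template, I would smooth $\rho_{X(1),Y(1),Z(1)}$ by the pairwise-conditional projectors that do not already touch the state (playing the role of the single $\Pi_{Y(1)}$ insertion in Theorem~\ref{thm:two-sender-QSD}), absorbing each swap into an $O(\sqrt{\epsilon})$ additive cost via the Gentle Operator Lemma. Then I would apply the Hayashi-Nagaoka inequality with $S = \Pi'_{1,1,1}$ and $T = \sum_{(k,l,m)\ne(1,1,1)}\Pi'_{k,l,m}$. The $\text{Tr}\{(I-S)\cdot\}$ term is bounded using the typicality inequalities $\mathbb{E}\{\text{Tr}\{\Pi_\ast \rho_{X(1),Y(1),Z(1)}\}\} \geq 1-\epsilon$ for each projector appearing in $\Pi'_{1,1,1}$ combined with repeated gentle-operator smoothing. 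The $4T$ term decomposes into seven classes indexed by the nonempty subsets of $\{X,Y,Z\}$ whose codeword index differs from $1$; for each class I would take iterated expectations over the offending codebooks and apply the operator inequalities of Appendix~\ref{sec:typ-review}, for instance $\Pi_{y^n}\rho_{y^n}\Pi_{y^n} \leq 2^{-n[H(B|Y)-\delta]}\Pi_{y^n}$ and $\Pi_{x^n,z^n}\rho_{x^n,z^n}\Pi_{x^n,z^n} \leq 2^{-n[H(B|XZ)-\delta]}\Pi_{x^n,z^n}$, together with the rank bound $\text{Tr}\{\Pi_{x^n,y^n,z^n}\} \leq 2^{n[H(B|XYZ)+\delta]}$, to produce an exponential factor matching exactly one of the seven Holevo inequalities in the theorem.

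The crux of the argument is the three double-error classes; for instance, the class $k=1$, $l\ne 1$, $m\ne 1$ forces one to average simultaneously over the independent codebooks $Y(l)$ and $Z(m)$ while shuffling $\Pi_{X(1),Y(l)}, \Pi_{X(1),Z(m)}$, and $\Pi_{Y(l),Z(m)}$ past one another so that the expectations can collapse the relevant pairwise-averaged states to a $\rho_X$-type operator bounded by the $I(YZ;B|X)$ eigenvalue estimate. Without commutativity these shuffles would generate uncontrolled cross terms---precisely the obstruction that leaves Conjecture~\ref{conj:sim-dec} open in the general case---but under the hypothesis of the theorem the shuffles are free, and a routine combinatorial bookkeeping exercise assembles the seven class-bounds into a single expected average error probability that vanishes as $n \to \infty$ for any rate choices $K = 2^{n[R_1-4\delta]}$, $L = 2^{n[R_2-4\delta]}$, $M = 2^{n[R_3-4\delta]}$ satisfying the seven Holevo rate inequalities in the statement.
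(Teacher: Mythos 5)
Your high-level plan is aligned with the paper's proof: build a square-root POVM from nested conditionally typical projectors, observe that the commutativity hypothesis makes those projectors mutually commute, apply the Hayashi--Nagaoka inequality, and reduce to seven error classes indexed by which of $\{X,Y,Z\}$ carries a wrong index. However, your explicit POVM construction has a concrete gap. You take
\[
\Pi'_{k,l,m} = \Pi\,P_{k,l,m}\,\Pi_{X(k),Y(l),Z(m)}\,P_{k,l,m}\,\Pi, \qquad P_{k,l,m}=\Pi_{X(k),Y(l)}\Pi_{X(k),Z(m)}\Pi_{Y(l),Z(m)},
\]
omitting the singly conditioned projectors $\Pi_{X(k)},\Pi_{Y(l)},\Pi_{Z(m)}$. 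The paper's POVM keeps them (its $M_{x^n,y^n,z^n}$ is the product of \emph{all eight} projectors, from $\Pi_{x^n,y^n,z^n}$ down through the pairwise ones, the singles, and $\Pi$). Those single projectors are doing essential work precisely in your ``crux'' cases, the three double-error classes. Take $k=1$, $l\neq1$, $m\neq1$: pulling the expectations over the independent $Y$ and $Z$ codebooks inside the trace collapses the channel output to $\rho_{X(1)}$, and to produce the $2^{-n[H(B|X)-\delta]}$ factor required by the sum-rate bound $R_2+R_3\le I(YZ;B|X)=H(B|X)-H(B|XYZ)$ you must sandwich $\rho_{X(1)}$ by $\Pi_{X(1)}$ via $\Pi_{x^n}\rho_{x^n}\Pi_{x^n}\le 2^{-n[H(B|X)-\delta]}\Pi_{x^n}$. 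That projector simply is not present anywhere in your $\Pi'_{k,l,m}$; the three pairwise projectors $\Pi_{X(1),Y(l)}$, $\Pi_{X(1),Z(m)}$, $\Pi_{Y(l),Z(m)}$ are conditionally typical for \emph{different} states (and themselves depend on $Y(l)$, $Z(m)$, which are being averaged away), so shuffling them cannot substitute for $\Pi_{X(1)}$.

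Your proposed ``smoothing by the pairwise-conditional projectors that do not already touch the state'' does not rescue this: all pairwise projectors \emph{do} appear in your $P_{k,l,m}$, so nothing new is introduced, and in any case the double-error classes need the singles, not the pairs. In fact the paper abandons the two-sender smoothing trick entirely in this theorem: since all eight projectors mutually commute under the hypothesis, the paper simply puts every projector into the POVM and applies Hayashi--Nagaoka directly, then for each error class commutes the one matching projector to abut the averaged state. The clean fix to your argument is the same: enlarge $\Pi'_{k,l,m}$ to include $\Pi_{X(k)},\Pi_{Y(l)},\Pi_{Z(m)}$ (which costs nothing in the ``good'' term by the typicality and Gentle Operator Lemma properties), and drop the pre-smoothing step altogether.
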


\begin{proof}
The proof exploits some ideas from Theorem~\ref{thm:two-sender-QSD}. Thus, we merely describe the key
points of the proof.

We randomly and independently choose codewords for the three senders according
to the respective product distributions $p_{X^{n}}\left(  x^{n}\right)  $,
$p_{Y^{n}}\left(  y^{n}\right)  $, and $p_{Z^{n}}\left(  z^{n}\right)  $. We
define the detection POVM\ to be of the following form:%
\begin{multline}
\Lambda_{k,l,m}\equiv \\ \left(  \sum_{k^{\prime},l^{\prime},m^{\prime}}%
\Pi_{k^{\prime},l^{\prime},m^{\prime}}^{\prime}\right)  ^{-1/2}\Pi
_{k,l,m}^{\prime}\left(  \sum_{k^{\prime},l^{\prime},m^{\prime}}\Pi
_{k^{\prime},l^{\prime},m^{\prime}}^{\prime}\right)  ^{-1/2}, \label{eq:sqrt-POVM}
\end{multline}
where%
\begin{align*}
\Pi_{k,l,m}^{\prime} &  \equiv M_{x^{n}\left(  k\right)  ,y^{n}\left(
l\right)  ,z^{n}\left(  m\right)  }^{\dag}M_{x^{n}\left(  k\right)
,y^{n}\left(  l\right)  ,z^{n}\left(  m\right)  },\\
M_{x^{n},y^{n},z^{n}} &  \equiv\Pi_{x^{n},y^{n},z^{n}}\ \Pi_{x^{n},y^{n}}%
\ \Pi_{x^{n},z^{n}}\ \Pi_{y^{n},z^{n}} \times \\ & \,\,\,\,\,\,\,\,\,\,\,\, \Pi_{x^{n}}\ \Pi_{y^{n}}\ \Pi_{z^{n}%
}\ \Pi,
\end{align*}
and each of the above projectors are conditionally typical projectors defined
with a similar shorthand from the proof of Theorem~\ref{thm:two-sender-QSD}. Observe that all of the
conditionally typical projectors $\Pi_{x^{n},y^{n}}$, $\Pi_{x^{n},z^{n}}$,
$\Pi_{y^{n},z^{n}}$, $\Pi_{x^{n}}$, $\Pi_{y^{n}}$, $\Pi_{z^{n}}$, and $\Pi$
are mutually commuting from the assumption of the theorem. We analyze the
expectation of the average error probability, and due to the symmetry of the
code construction, it suffices to analyze this error probability for the first
message triple $\left(  1,1,1\right)  $:%
\[
\mathbb{E}_{X^{n},Y^{n},Z^{n}}\left\{  \text{Tr}\left\{  \left(
I-\Lambda_{1,1,1}\right)  \ \rho_{X^{n}\left(  1\right)  ,Y^{n}\left(
1\right)  ,Z^{n}\left(  1\right)  }\right\}  \right\}  .
\]
Our first move is to \textquotedblleft unravel\textquotedblright\ the operator
$I-\Lambda_{1,1,1}$ by means of the Hayashi-Nagaoka operator inequality, so
that%
\[
I-\Lambda_{1,1,1}\leq2 \left(  I-\Pi_{1,1,1}^{\prime}\right)  +\ \ \ \ 4\!\!\!\!\!\!\!\!\!\!\!\sum_{\left(
k,l,m\right)  \neq\left(  1,1,1\right)  }\!\!\!\!\!\!\Pi_{k,l,m}^{\prime}.
\]
The first error with the operator $I-\Pi_{1,1,1}^{\prime}$ under the trace can
be bounded from above by some $f\left(  \epsilon\right)  $ where
$\lim_{\epsilon\rightarrow0}f\left(  \epsilon\right)  =0$, by employing the
trace inequality in (\ref{eq:trace-inequality}) and the Gentle Operator Lemma for ensembles. 
We can expand the triply-indexed sum for the second
error into seven different types of errors. We delineate the different errors in the following table:
\begin{equation}%
\begin{tabular}
[c]{c|c|c}\hline\hline
$k$ & $l$ & $m$\\\hline\hline
$\ast$ & $1$ & $1$\\
$1$ & $\ast$ & $1$\\
$1$ & $1$ & $\ast$\\
$\ast$ & $\ast$ & $1$\\
$\ast$ & $1$ & $\ast$\\
$1$ & $\ast$ & $\ast$\\
$\ast$ & $\ast$ & $\ast$\\\hline\hline
\end{tabular}
\ ,\label{eq:seven-errors}%
\end{equation}
where $\ast$ denotes some message other than the first one (implying an
incorrect decoding). Each of these we can bound by averaging over the state
$\rho_{X^{n}\left(  1\right)  ,Y^{n}\left(
1\right)  ,Z^{n}\left(  1\right)  }$
and commuting the appropriate projector to be closest to the state. For
example, consider the first error term. We have that $X^{n}\left(  k\right)  $
and $X^{n}\left(  1\right)  $ are independent. Bring the expectation over
$X^{n}$ inside of the trace and average over the state $\rho_{X^{n}\left(
1\right)  ,Y^{n}\left(  1\right)  ,Z^{n}\left(  1\right)  }$ to get
$\rho_{Y^{n}\left(  1\right)  ,Z^{n}\left(  1\right)  }$. Commute $\Pi
_{Y^{n}\left(  1\right)  ,Z^{n}\left(  1\right)  }$ to be closest to the state
on both sides and exploit the operator inequality $\Pi_{y^{n},z^{n}}%
\ \rho_{y^{n},z^{n}}\ \Pi_{y^{n},z^{n}}\leq2^{-n\left[  H\left(  B|YZ\right)
-\delta\right]  }\ \Pi_{y^{n},z^{n}}$. After a few steps, we end up with the
bound $2^{-n\left[  I\left(  X;B|YZ\right)  -2\delta\right]  }\ K$. The other six
bounds proceed in a similar fashion, demonstrating that Conjecture~\ref{conj:sim-dec} holds true
for this special case.
\end{proof}

\paragraph{Min-Entropy Case}A simple modification of the proof of
Theorem~\ref{thm:two-sender-QSD}\ allows us to achieve rates expressible in
terms of min-entropies \cite{R60,R05} for arbitrary quantum channels.
The min-entropy $H_{\min}\left(  B\right)  _{\rho}$ of a quantum
state $\rho^{B}$ is equal to the negative logarithm of its maximal eigenvalue:%
\[
H_{\min}\left(  B\right)  _{\rho}\equiv-\log\left(  \inf_{\lambda\in
\mathbb{R}}\left\{  \lambda:\rho\leq\lambda I\right\}  \right)  ,
\]
and the conditional min-entropy of a classical-quantum state $\rho^{XB}%
\equiv\sum_{x}p_{X}\left(  x\right)  \left\vert x\right\rangle \left\langle
x\right\vert ^{X}\otimes\rho_{x}^{B}$ with classical system $X$ and quantum
system $B$ is as follows~\cite{R05}:%
\[
H_{\min}\left(  B|X\right)  _{\rho}\equiv\inf_{x\in\mathcal{X}}H_{\min}\left(
B\right)  _{\rho_{x}}.
\]
This definition of conditional min-entropy, where the conditioning system is
classical, implies the following operator inequality:%
\begin{equation}
\forall x\ \ \ \rho_{x}^B \leq2^{-H_{\min}\left(  B|X\right)  _{\rho}}I^B.
\label{eq:cond-min-entropy-bound}%
\end{equation}

The following theorem gives an achievable rate region for a three-sender quantum
simultaneous decoder. The entropy differences in (\ref{eq:min-ent-first}-\ref{eq:min-ent-2})
and (\ref{eq:min-ent-3}-\ref{eq:min-ent-4}) of the following theorem
may not necessarily be positive for all states because the conditional quantum
min-entropy can be less than the conditional von Neumann entropy.
Nevertheless, there are some states for which these rates are positive,
and Example~\ref{ex:min-entropy} gives a channel for which the min-entropy rates are
equivalent to the von Neumann entropy rates.

\begin{theorem}
[Min-Entropy Case]\label{thm:min-ent}Consider the same setup as in Conjecture~\ref{conj:sim-dec}. There exists a
quantum simultaneous decoder in the sense described in Conjecture~\ref{conj:sim-dec} that
achieves the following rate region:%
\begin{align}
R_{1} &  \leq H_{\min}\left(  B|ZY\right)  -H\left(  B|XYZ\right)  , \label{eq:min-ent-first}\\
R_{2} &  \leq H_{\min}\left(  B|XZ\right)  -H\left(  B|XYZ\right)  , \label{eq:min-ent-2}\\
R_{3} &  \leq I\left(  Z;B|XY\right)  ,\\
R_{1}+R_{2} &  \leq H_{\min}\left(  B|Z\right)  -H\left(  B|XYZ\right)  ,\label{eq:min-ent-3}\\
R_{2}+R_{3} &  \leq H_{\min}\left(  B|X\right)  -H\left(  B|XYZ\right)  ,\label{eq:min-ent-4}\\
R_{1}+R_{3} &  \leq I\left(  XZ;B|Y\right)  ,\\
R_{1}+R_{2}+R_{3} &  \leq I\left(  XYZ;B\right)  \label{eq:min-ent-last}.
\end{align}
Other variations of the above achievable rate region are possible by permuting
the variables $X$, $Y$, and $Z$ in the above expressions.
\end{theorem}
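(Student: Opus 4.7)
The plan is to adapt the two-sender proof of Theorem~\ref{thm:two-sender-QSD} to three senders, but at the precise points where the naive extension is blocked by non-commutativity of conditionally typical projectors, to invoke the min-entropy operator inequality~(\ref{eq:cond-min-entropy-bound}) as a substitute for a Holevo-type projector squeezing. This is what causes certain rate bounds in the theorem to weaken from a Holevo information to a difference of a min-entropy and a von Neumann entropy.

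I draw codewords $\{X^{n}(k)\}$, $\{Y^{n}(l)\}$, $\{Z^{n}(m)\}$ independently and i.i.d.\ from the product distributions, and I construct the square-root POVM $\{\Lambda_{k,l,m}\}$ on
\[
\Pi'_{k,l,m} \equiv \Pi \, \Pi_{X(k),Y(l)} \, \Pi_{X(k),Y(l),Z(m)} \, \Pi_{X(k),Y(l)} \, \Pi ,
\]
where only $\Pi$, $\Pi_{X,Y}$, and $\Pi_{X,Y,Z}$ appear and the ``missing'' projectors $\Pi_{X}$, $\Pi_{Y}$, $\Pi_{Z}$, $\Pi_{X,Z}$, $\Pi_{Y,Z}$ are precisely those whose absence will force the min-entropy rates. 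Exactly as in the two-sender proof, I first smooth the state by $\Pi_{Y(1)}$ at an $O(\sqrt{\epsilon})$ cost via the Gentle Operator Lemma for ensembles, and then apply the Hayashi-Nagaoka inequality to split the error into the correct term $(I-\Pi'_{1,1,1})$ and the seven wrong-message sums indexed by the table~(\ref{eq:seven-errors}). The correct term is handled by the standard combination of the Gentle Operator Lemma with the four typicality inequalities for $\Pi$, $\Pi_{X(1),Y(1)}$, $\Pi_{X(1),Y(1),Z(1)}$, and $\Pi_{Y(1)}$ against $\rho_{X(1),Y(1),Z(1)}$.

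Each of the seven wrong-message sums is then treated by a common two-step recipe: first use independence of codewords across senders to push an appropriate expectation inside the trace and reduce $\mathbb{E}[\rho_{X(1),Y(1),Z(1)}]$ to the marginal dictated by the surviving indices; then either (a) apply the Holevo squeeze $\Pi_{S}\rho_{S}\Pi_{S} \leq 2^{-n[H(B|S)-\delta]}\Pi_{S}$ when the relevant projector $\Pi_{S}$ is present in $\Pi'_{k,l,m}$ (or is the pre-smoothing $\Pi_{Y(1)}$), or (b) apply~(\ref{eq:cond-min-entropy-bound}) in the form $\rho_{S} \leq 2^{-nH_{\min}(B|S)}I$ when $\Pi_{S}$ is absent, finishing by bounding the trace of $\Pi'_{k,l,m}$ using $\Pi_{X(k),Y(l),Z(m)} \leq 2^{n[H(B|XYZ)+\delta]}\rho_{X(k),Y(l),Z(m)}$ and the rank bound. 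The three error types $(1,1,\ast)$, $(\ast,1,\ast)$, $(\ast,\ast,\ast)$ have their needed projector in the POVM ($\Pi_{X,Y}$, the pre-smoothed $\Pi_{Y}$, and $\Pi$, respectively), producing the Holevo bounds $I(Z;B|XY)$, $I(XZ;B|Y)$, $I(XYZ;B)$; the remaining four types $(\ast,1,1)$, $(1,\ast,1)$, $(\ast,\ast,1)$, $(1,\ast,\ast)$ would require the missing $\Pi_{Y,Z}$, $\Pi_{X,Z}$, $\Pi_{Z}$, $\Pi_{X}$ and are handled via (b), giving the min-entropy bounds~(\ref{eq:min-ent-first})--(\ref{eq:min-ent-4}). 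Setting $K=2^{n(R_{1}-3\delta)}$, $L=2^{n(R_{2}-3\delta)}$, $M=2^{n(R_{3}-3\delta)}$ and sending $\epsilon,\delta \to 0$ yields the stated region; the other variations follow by permuting $X$, $Y$, $Z$, which correspondingly changes both the layering in $\Pi'_{k,l,m}$ and the choice of pre-smoothing projector.

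The hard part is not any single bound but the global choice of layering in $\Pi'_{k,l,m}$. Including additional conditionally typical projectors with the hope of upgrading the four min-entropy bounds to Holevo bounds immediately reintroduces exactly the kind of non-commuting products of conditionally typical projectors discussed in the commentary after Theorem~\ref{thm:two-sender-QSD}, which we do not know how to control in general and which is the very obstacle behind Conjecture~\ref{conj:sim-dec}. The layering above is essentially the maximal one for which all seven error analyses still go through, and the four min-entropy rates are the price paid for avoiding that obstacle.
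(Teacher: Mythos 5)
Your proof is correct and follows essentially the same strategy as the paper: a square-root POVM built from a nested chain of conditionally typical projectors, Hayashi-Nagaoka to split into seven wrong-message sums, Holevo-type projector squeezing where the relevant conditionally typical projector is present, and the min-entropy operator inequality~(\ref{eq:cond-min-entropy-bound}) as a substitute where it is absent; your error-to-bound assignments for the seven cases match the paper's exactly. The only (cosmetic) divergence is in where $\Pi_{Y}$ enters: the paper includes it as a fourth layer, taking $\Pi_{k,l,m}^{\prime} = \Pi\,\Pi_{y^n(l)}\,\Pi_{x^n(k),y^n(l)}\,\Pi_{x^n(k),y^n(l),z^n(m)}\,\Pi_{x^n(k),y^n(l)}\,\Pi_{y^n(l)}\,\Pi$ and then applies Hayashi-Nagaoka with no pre-smoothing, whereas you retain a three-layer POVM and pre-smooth by $\Pi_{Y(1)}$ exactly as in the two-sender proof. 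Both are valid: in the two-sender case the extra projector $\Pi_{Y}$ could not be nested among $\Pi \supset \Pi_{X} \supset \Pi_{X,Y}$, forcing pre-smoothing, but here $\Pi \supset \Pi_{Y} \supset \Pi_{X,Y} \supset \Pi_{X,Y,Z}$ is itself a nested chain, so either mechanism delivers the $\Pi_{y^n}\,\rho_{y^n}\,\Pi_{y^n} \le 2^{-n[H(B|Y)-\delta]}\Pi_{y^n}$ squeeze where it is needed (error types $(\ast,1,\ast)$ and, indirectly, $(\ast,\ast,\ast)$). Your observation that adding further projectors to upgrade the four min-entropy bounds reintroduces the non-commutativity obstruction of Conjecture~\ref{conj:sim-dec} is the same point the paper makes implicitly by stopping at this layering.
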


\begin{proof}
The main idea for this proof is to exploit a decoding POVM\ of the form in (\ref{eq:sqrt-POVM}),
with $\Pi_{k,l,m}^{\prime}$ chosen to be as follows:%
\begin{multline}
\Pi_{k,l,m}^{\prime}=\Pi\ \Pi_{y^{n}\left(  l\right)  }\ \Pi_{x^{n}\left(
k\right)  ,y^{n}\left(  l\right)  }\ \Pi_{x^{n}\left(  k\right)  ,y^{n}\left(
l\right)  ,z^{n}\left(  m\right)  } \times \\\,\,\,\,\,\,\,\,
 \Pi_{x^{n}\left(  k\right)  ,y^{n}\left(
l\right)  }\ \Pi_{y^{n}\left(  l\right)  }\ \Pi.
\end{multline}
We can bound the expectation of the average error probability again by
exploiting the Hayashi-Nagaoka operator inequality. After doing so, the first
error with the operator $I-\Pi_{1,1,1}^{\prime}$ under the trace can be
bounded from above by some $f\left(  \epsilon\right)  $ where $\lim
_{\epsilon\rightarrow0}f\left(  \epsilon\right)  =0$, by employing the trace
inequality in (\ref{eq:trace-inequality}) and the Gentle Operator Lemma for ensembles. The second error
again breaks into the seven errors of the form in (\ref{eq:seven-errors}). We
discuss below how to handle each of these errors:

\begin{enumerate}
\item $X^{n}\left(  k\right)  $ and $X^{n}\left(  1\right)  $ are independent.
Bring the expectation over $X^{n}$ inside of the trace and average over the
state $\rho_{X^{n}\left(  1\right)  ,Y^{n}\left(  1\right)  ,Z^{n}\left(
1\right)  }$ to get $\rho_{Y^{n}\left(  1\right)  ,Z^{n}\left(  1\right)  }$.
The state $\rho_{Y^{n}\left(  1\right)  ,Z^{n}\left(  1\right)  }$ is bounded
from above by $2^{-nH_{\min}\left(  B|YZ\right)  }$ and proceed to upper bound
this error by $2^{-n\left[  H_{\min}\left(  B|ZY\right)  -H\left(
B|XYZ\right)  \right]  }\ K$.

\item $Y^{n}\left(  l\right)  $ and $Y^{n}\left(  1\right)  $ are independent.
Bring the expectation over $Y^{n}$ inside of the trace and average over the
state $\rho_{X^{n}\left(  1\right)  ,Y^{n}\left(  1\right)  ,Z^{n}\left(
1\right)  }$ to get $\rho_{X^{n}\left(  1\right)  ,Z^{n}\left(  1\right)  }$.
The state $\rho_{X^{n}\left(  1\right)  ,Z^{n}\left(  1\right)  }$ is bounded
from above by $2^{-nH_{\min}\left(  B|XZ\right)  }$ and proceed to upper bound
this error by $2^{-n\left[  H_{\min}\left(  B|XZ\right)  -H\left(
B|XYZ\right)  \right]  }\ L$.

\item $Z^{n}\left(  m\right)  $ and $Z^{n}\left(  1\right)  $ are independent.
Exploit the operator inequality $\Pi_{x^{n},y^{n},z^{n}}\leq2^{n\left[
H\left(  B|XYZ\right)  +\delta\right]  }\ \rho_{x^{n},y^{n},z^{n}}$, bring the
expectation over $Z^{n}$ inside of the trace and average over the state
$\rho_{X^{n}\left(  1\right)  ,Y^{n}\left(  1\right)  ,Z^{n}\left(  M\right)
}$ to get $\rho_{X^{n}\left(  1\right)  ,Y^{n}\left(  1\right)  }$. Exploit
the operator inequality $\Pi_{x^{n},y^{n}}\ \rho_{x^{n},y^{n}}\ \Pi
_{x^{n},y^{n}}\leq2^{-n\left[  H\left(  B|XY\right)  -\delta\right]  }%
\ \Pi_{x^{n},y^{n}}$. We can then upper bound this error by $2^{-n\left[
I\left(  Z;B|XY\right)  -2\delta\right]  }\ M$.

\item $X^{n}\left(  k\right)  $ and $X^{n}\left(  1\right)  $ are independent,
and so are $Y^{n}\left(  l\right)  $ and $Y^{n}\left(  1\right)  $. Bring the
expectations over $X^{n}$ and $Y^{n}$ inside of the trace and average over the
state $\rho_{X^{n}\left(  1\right)  ,Y^{n}\left(  1\right)  ,Z^{n}\left(
1\right)  }$ to get $\rho_{Z^{n}\left(  1\right)  }$. The state $\rho
_{Z^{n}\left(  1\right)  }$ is bounded from above by $2^{-nH_{\min}\left(
B|Z\right)  }$ and proceed to upper bound this error by $2^{-n\left[  H_{\min
}\left(  B|Z\right)  -H\left(  B|XYZ\right)  \right]  }\ KL$.

\item $Y^{n}\left(  l\right)  $ and $Y^{n}\left(  1\right)  $ are independent,
and so are $Z^{n}\left(  m\right)  $ and $Z^{n}\left(  1\right)  $. Bring the
expectations over $Y^{n}$ and $Z^{n}$ inside of the trace and average over the
state $\rho_{X^{n}\left(  1\right)  ,Y^{n}\left(  1\right)  ,Z^{n}\left(
1\right)  }$ to get $\rho_{X^{n}\left(  1\right)  }$. The state $\rho
_{X^{n}\left(  1\right)  }$ is bounded from above by $2^{-nH_{\min}\left(
B|X\right)  }$ and proceed to upper bound this error by $2^{-n\left[  H_{\min
}\left(  B|X\right)  -H\left(  B|XYZ\right)  \right]  }\ LM$.

\item $X^{n}\left(  k\right)  $ and $X^{n}\left(  1\right)  $ are independent,
and so are $Z^{n}\left(  m\right)  $ and $Z^{n}\left(  1\right)  $. Exploit
the operator inequality $\Pi_{x^{n},y^{n},z^{n}}\leq2^{n\left[  H\left(
B|XYZ\right)  +\delta\right]  }\ \rho_{x^{n},y^{n},z^{n}}$, bring the
expectation over $Z^{n}$ inside of the trace and average over the state
$\rho_{X^{n}\left(  1\right)  ,Y^{n}\left(  1\right)  ,Z^{n}\left(  M\right)
}$ to get $\rho_{X^{n}\left(  1\right)  ,Y^{n}\left(  1\right)  }$. Exploit
the operator inequality $\Pi_{x^{n},y^{n}}\ \rho_{x^{n},y^{n}}\ \Pi
_{x^{n},y^{n}}\leq\rho_{x^{n},y^{n}}$. Bring the expectation over $X^{n}$
inside of the trace and average over the state $\rho_{X^{n}\left(  1\right)
,Y^{n}\left(  1\right)  }$ to get $\rho_{Y^{n}\left(  1\right)  }$. Exploit
the operator inequality $\Pi_{y^{n}}\ \rho_{y^{n}}\ \Pi_{y^{n}}\leq
2^{-n\left[  H\left(  B|Y\right)  -\delta\right]  }\ \Pi_{y^{n}}$. We can then
upper bound this error by $2^{-n\left[  I\left(  XZ;B|Y\right)  -2\delta
\right]  }\ KM$.

\item All variables are independent. Bring the expectations over $X^{n}$,
$Y^{n}$, and $Z^{n}$ inside of the trace and average over the state
$\rho_{X^{n}\left(  1\right)  ,Y^{n}\left(  1\right)  ,Z^{n}\left(  1\right)
}$ to get $\rho^{\otimes n}$. Exploit the operator inequality $\Pi
\ \rho^{\otimes n}\ \Pi\leq2^{-n\left[  H\left(  B\right)  -\delta\right]
}\ \Pi$ and proceed to upper bound this error by $2^{-n\left[  I\left(
XYZ;B\right)  -2\delta\right]  }\ KLM$.
\end{enumerate}
\end{proof}

\begin{example}
\label{ex:min-entropy} We now provide an example of a cccq quantum multiple
access channel for which a quantum simultaneous decoder can achieve its
capacity region. We show that the min-entropy rates in (\ref{eq:min-ent-first}%
-\ref{eq:min-ent-last}) of Theorem~\ref{thm:min-ent} are equal to the von
Neumann entropy rates from Conjecture~\ref{conj:sim-dec}. By Winter's results
in Ref.~\cite{winter2001capacity} for a cccq multiple access channel, this
implies that the min-entropy rate region is equivalent to the capacity region
for this particular channel. Consider a channel that takes three bits $x$,
$y$, and $z$ as input and outputs one of the four \textquotedblleft
BB84\textquotedblright\ states:
\begin{align*}
000 &  \rightarrow\left\vert 0\right\rangle ,\ \ \ \ \ 001\rightarrow
\left\vert +\right\rangle ,\ \ \ \ \ 010\rightarrow\left\vert 1\right\rangle
,\ \ \ \ \ 011\rightarrow\left\vert -\right\rangle ,\\
100 &  \rightarrow\left\vert 1\right\rangle ,\ \ \ \ \ 101\rightarrow
\left\vert -\right\rangle ,\ \ \ \ \ 110\rightarrow\left\vert 0\right\rangle
,\ \ \ \ \ 111\rightarrow\left\vert +\right\rangle .
\end{align*}
A classical-quantum state on which we evaluate information quantities is
\begin{multline*}
\rho^{XYZB}\equiv\sum_{x,y,z=0}^{1}p_{X}\left(  x\right)  p_{Y}\left(
y\right)  p_{Z}\left(  z\right)  \left\vert x\right\rangle \left\langle
x\right\vert ^{X}\otimes\left\vert y\right\rangle \left\langle y\right\vert
^{Y}\otimes\\
\left\vert z\right\rangle \left\langle z\right\vert ^{Z}\otimes\psi
_{x,y,z}^{B},
\end{multline*}
where $\psi_{x,y,z}^{B}$ is one of $\left\vert 0\right\rangle $, $\left\vert
1\right\rangle $, $\left\vert +\right\rangle $, or $\left\vert -\right\rangle
$ depending on the choice of the bits $x$, $y$, and $z$. The conditional
entropy $H\left(  B|XYZ\right)  _{\rho}$ vanishes for this state because the
state is pure when conditioned on the classical registers $X$, $Y$, and $Z$.
So it is only necessary to compare $H_{\min}\left(  B|ZY\right)  $ with
$H\left(  B|ZY\right)  $, $H_{\min}\left(  B|XZ\right)  $ with $H\left(
B|XZ\right)  $, $H_{\min}\left(  B|Z\right)  $ with $H\left(  B|Z\right)  $,
and $H_{\min}\left(  B|X\right)  $ with $H\left(  B|X\right)  $. We choose
$p_{X}\left(  x\right)  $, $p_{Y}\left(  y\right)  $, and $p_{Z}\left(
z\right)  $ to be the uniform distribution. This gives the following reduced
state on $Z$, $Y$, and $B$:
\begin{multline*}
\frac{1}{4}\left\vert 00\right\rangle \left\langle 00\right\vert ^{ZY}%
\otimes\frac{1}{2}\left(  \left\vert 0\right\rangle \left\langle 0\right\vert
^{B}+\left\vert 1\right\rangle \left\langle 1\right\vert ^{B}\right)  \\
+\frac{1}{4}\left\vert 01\right\rangle \left\langle 01\right\vert ^{ZY}%
\otimes\frac{1}{2}\left(  \left\vert +\right\rangle \left\langle +\right\vert
^{B}+\left\vert -\right\rangle \left\langle -\right\vert ^{B}\right)  \\
+\frac{1}{4}\left\vert 10\right\rangle \left\langle 10\right\vert ^{ZY}%
\otimes\frac{1}{2}\left(  \left\vert 0\right\rangle \left\langle 0\right\vert
^{B}+\left\vert 1\right\rangle \left\langle 1\right\vert ^{B}\right)  \\
+\frac{1}{4}\left\vert 11\right\rangle \left\langle 11\right\vert ^{ZY}%
\otimes\frac{1}{2}\left(  \left\vert +\right\rangle \left\langle +\right\vert
^{B}+\left\vert -\right\rangle \left\langle -\right\vert ^{B}\right)  ,
\end{multline*}
for which it is straightforward to show that certain entropies take their
maximal value of one bit: $H_{\min}\left(  B|ZY\right)  =H\left(  B|ZY\right)
=1$ and $H_{\min}\left(  B|Z\right)  =H\left(  B|Z\right)  =1$. We also have
the following reduced state on $X$, $Z$, and $B$:%
\begin{multline*}
\frac{1}{4}\left\vert 00\right\rangle \left\langle 00\right\vert ^{XZ}%
\otimes\frac{1}{2}\left(  \left\vert 0\right\rangle \left\langle 0\right\vert
^{B}+\left\vert 1\right\rangle \left\langle 1\right\vert ^{B}\right)  \\
+\frac{1}{4}\left\vert 01\right\rangle \left\langle 01\right\vert ^{XZ}%
\otimes\frac{1}{2}\left(  \left\vert +\right\rangle \left\langle +\right\vert
^{B}+\left\vert -\right\rangle \left\langle -\right\vert ^{B}\right)  \\
+\frac{1}{4}\left\vert 10\right\rangle \left\langle 10\right\vert ^{XZ}%
\otimes\frac{1}{2}\left(  \left\vert 0\right\rangle \left\langle 0\right\vert
^{B}+\left\vert 1\right\rangle \left\langle 1\right\vert ^{B}\right)  \\
+\frac{1}{4}\left\vert 11\right\rangle \left\langle 11\right\vert ^{XZ}%
\otimes\frac{1}{2}\left(  \left\vert +\right\rangle \left\langle +\right\vert
^{B}+\left\vert -\right\rangle \left\langle -\right\vert ^{B}\right)  ,
\end{multline*}
for which the other entropies take their maximal value of one bit: $H_{\min
}\left(  B|XZ\right)  =H\left(  B|XZ\right)  =1$ and $H_{\min}\left(
B|X\right)  =H\left(  B|X\right)  =1$. Furthermore, we can show that the
conditional entropy $H\left(  B|XY\right)  _{\rho}$ takes it maximum value of
$H_{2}\left(  \cos^{2}\left(  \pi/8\right)  \right)  $ when $p_{X}\left(
x\right)  $ and $p_{Y}\left(  y\right)  $ are uniform (where $H_{2}\left(
p\right)  \equiv-p\log_{2}p-\left(  1-p\right)  \log_{2}\left(  1-p\right)
$). Thus, the region achievable with min-entropies in (\ref{eq:min-ent-first}%
-\ref{eq:min-ent-last}) of Theorem~\ref{thm:min-ent} is equivalent to the
capacity region for this channel:
\begin{align*}
R_{1} &  \leq1,\\
R_{2} &  \leq1,\\
R_{3} &  \leq H_{2}\left(  \cos^{2}\left(  \pi/8\right)  \right)  ,\\
R_{1}+R_{2} &  \leq1,\\
R_{2}+R_{3} &  \leq1,\\
R_{1}+R_{3} &  \leq1,\\
R_{1}+R_{2}+R_{3} &  \leq1.
\end{align*}

\end{example}

\subsubsection{Other Attempts at Proving Conjecture~\ref{conj:sim-dec}}

We have attempted to prove Conjecture~\ref{conj:sim-dec}\ in many different
ways, 
and this section briefly summarizes these attempts. We again mention that
our quantum simultaneous decoding conjecture seems related to the multiparty
typicality conjecture from Ref.~\cite{D11}.

We have attempted to prove Conjecture~\ref{conj:sim-dec} by exploiting
the asymmetric hypothesis testing techniques from
Refs.~\cite{WR10,mosonyi:072104}. The problem with these approaches in the
multiple access setting is that the POVM\ selected in the operational
definitions of the quantum relative entropy is optimal for one type of error
in (\ref{eq:three-terms}), but it is not necessarily optimal for the other two
types of errors. The hypothesis testing approaches from
Refs.~\cite{BP10,BDKSSS04}\ also do not appear to be of much help for our
goals here because they involve an infimum over the choice of the second state
in the quantum relative entropy.

Another attempt is to improve the achievable rate region of
Theorem~\ref{thm:min-ent}, by replacing
min-entropies with \emph{smooth} min-entropies \cite{R05}. In fact, the smooth min-entropy
is known to approach the von Neumann entropy in the case of a large number of independent 
and identically distributed random variables~\cite{R05, TRR09}. 
To prove the conjecture, it would be sufficient to find a state $\tilde{\rho}^{X^nY^nZ^nB^n}$ that is close to $\rho^{X^nY^nZ^nB^n}$---which corresponds to $n$ independent copies of the state $\rho^{XYZB}$ in \eqref{eq:in-out-3mac}---that simultaneously satisfies $\entHmin(B|ZY)_{\tilde{\rho}} \geq \entHmin^{\e}(B|ZY)_{\rho}$, $\entHmin(B|XZ)_{\tilde{\rho}} \geq \entHmin^{\e}(B|XZ)_{\rho}$, $\entHmin(B|Z)_{\tilde{\rho}} \geq \entHmin^{\e}(B|Z)_{\rho}$, $\entHmin(B|X)_{\tilde{\rho}} \geq \entHmin^{\e}(B|X)_{\rho}$. 
Here, $\entHmin^{\e}(B|X)_{\rho}$ refers to the $\e$-smooth min-entropy, which is the maximum of $\entHmin(B|X)_{\rho'}$ over all states $\rho'$ on $XB$ that are $\e$-close to $\rho$; see \cite{R05} for a precise definition. In the proof of Theorem \ref{thm:min-ent}, we would replace the output of the channel $\rho$ by $\tilde{\rho}$ before applying the Hayashi-Nagaoka operator inequality and the min-entropy terms would approach the von Neumann entropy terms we are looking for.

\bigskip 
\section{The Quantum Interference Channel}

\label{sec:QIC}This section contains some of the main results of this
paper,\ the inner and outer bounds on the capacity of a \textit{ccqq} quantum
interference channel of the following form:%
\begin{equation}
x_{1},x_{2}\rightarrow\rho_{x_{1},x_{2}}^{B_{1}B_{2}}, \label{eq:ccqq-int}%
\end{equation}
where Sender~1 has access to the classical $x_{1}$ input, Sender~2 has access
to the classical $x_{2}$ input, Receiver~1 has access to the $B_{1}$ quantum
system, and Receiver~2 has access to the $B_{2}$ quantum system. The first
inner bound that we prove is similar to the result of Carleial for
\textquotedblleft very strong\textquotedblright\ interference. We then prove
a quantum simultaneous decoding inner bound and give the capacity of the channel
whenever it exhibits ``strong'' interference.
The main inner
bound is the Han-Kobayashi achievable rate region with Shannon information
quantities replaced by Holevo information quantities, and this inner bound
relies on Conjecture~\ref{conj:sim-dec} for its proof. The outer bound in
Section~\ref{sec:outer-bound}\ is similar to an outer bound in the
classical case due to Sato~\cite{Sato77}.

\subsection{Inner Bounds}

As mentioned earlier, the interference channel 
naturally induces two multiple access channels with the same senders. Thus, 
one possible coding strategy for the interference channel is to build a codebook for each 
multiple access channel that is decodable for \emph{both} receivers. In fact, most---if not all---known coding
strategies for the interference channel are based on this idea. 
It is important to say here that we have to use the \emph{same} codebook for both 
multiple access channels. For this reason, using the \emph{existence} of 
good codes achieving all tuples in the capacity region is not sufficient.

\subsubsection{Very Strong Interference}

\label{sec:very-strong}A setting for which we can determine the capacity of a
ccqq interference channel is the setting of \textquotedblleft very
strong\textquotedblright\ interference (see page 6-11 of
Ref.~\cite{el2010lecture}). The conditions for \textquotedblleft very
strong\textquotedblright\ interference are that the following information
inequalities should hold for all distributions $p_{X_{1}}\left(  x_{1}\right)
$ and $p_{X_{2}}\left(  x_{2}\right)  $:%
\begin{align}
I\left(  X_{1};B_{1}|X_{2}\right)  _{\rho}  &  \leq I\left(  X_{1}%
;B_{2}\right)  _{\rho},\label{eq:VSI-1}\\
I\left(  X_{2};B_{2}|X_{1}\right)  _{\rho}  &  \leq I\left(  X_{2}%
;B_{1}\right)  _{\rho}, \label{eq:VSI-2}%
\end{align}
where $\rho^{X_{1}X_{2}B_{1}B_{2}}$ is a state of the following form:%
\begin{multline}
\rho^{X_{1}X_{2}B_{1}B_{2}}\equiv\sum_{x_{1},x_{2}}p_{X_{1}}\left(
x_{1}\right)  p_{X_{2}}\left(  x_{2}\right)  \left\vert x_{1}\right\rangle
\left\langle x_{1}\right\vert ^{X_{1}}\otimes \\ \left\vert x_{2}\right\rangle
\left\langle x_{2}\right\vert ^{X_{2}}\otimes\rho_{x_{1},x_{2}}^{B_{1}B_{2}}. \label{eq:int-code-state}
\end{multline}
The information inequalities in (\ref{eq:VSI-1}-\ref{eq:VSI-2}) imply that the
interference is so strong that it is possible for each receiver to decode the
other sender's message before decoding the message intended for him. These conditions
are a generalization of Carleial's conditions for a classical Gaussian
interference channel~\cite{carleial1975case}.

\begin{theorem}[Very Strong Interference]
\label{thm:carleial}Let a ccqq quantum interference channel as in
(\ref{eq:ccqq-int}) be given, and suppose that it has \textquotedblleft very
strong\textquotedblright\ interference as in (\ref{eq:VSI-1}-\ref{eq:VSI-2}).
Then the channel's capacity region is the union of all rates $R_{1}$ and $R_{2}$ satisfying the below inequalities:
\begin{align*}
R_{1}  &  \leq I\left(  X_{1};B_{1}|X_{2} Q\right)  _{\rho},\\
R_{2}  &  \leq I\left(  X_{2};B_{2}|X_{1} Q\right)  _{\rho},
\end{align*}
where the union is over input distributions $p_Q(q) \, p_{X_{1} | Q}\left(x_{1}|q\right)\,
p_{X_{2}|Q}\left(  x_{2}|q\right)$.
\end{theorem}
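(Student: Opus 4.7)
The plan is to establish both achievability and a matching converse, following the template of Carleial's classical argument.

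For achievability, I would fix input distributions $p_Q(q)\,p_{X_1|Q}(x_1|q)\,p_{X_2|Q}(x_2|q)$ and generate codebooks $\{x_1^n(l)\}$ and $\{x_2^n(m)\}$ independently in an i.i.d.\ fashion conditioned on a shared time-sharing sequence $q^n$. Each receiver is made to decode \emph{both} messages via successive decoding. Invoking Theorem~\ref{thm:successive-decoder} (with a coded time-sharing extension analogous to Corollary~\ref{cor:two-sender-QSD}) on the induced ccq channel $x_1,x_2 \to \rho_{x_1,x_2}^{B_1}$, Receiver~1 first decodes Sender~2's message, treating Sender~1's codeword as averaged noise, at any rate $R_2 \leq I(X_2;B_1|Q)_\rho$, and then decodes Sender~1's message using the recovered $x_2^n$ as side information at any rate $R_1 \leq I(X_1;B_1|X_2 Q)_\rho$. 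Receiver~2 performs the symmetric successive decoding on $\rho_{x_1,x_2}^{B_2}$, yielding $R_1 \leq I(X_1;B_2|Q)_\rho$ and $R_2 \leq I(X_2;B_2|X_1 Q)_\rho$. The very strong interference hypotheses (\ref{eq:VSI-1})--(\ref{eq:VSI-2}), which hold for every input distribution and hence conditionally on $Q$ by averaging, make the two cross constraints redundant, leaving precisely the stated region.

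For the converse, I would run a genie-aided argument. Given a sequence of $(n,R_1,R_2,\epsilon_n)$ codes with $\epsilon_n \to 0$, supply Receiver~1 with Sender~2's message $M_2$ as side information; this cannot hurt, so the modified code still has vanishing error. Conditioned on $M_2$ (equivalently on the deterministic codeword $x_2^n(M_2)$), Receiver~1 faces a single-sender cq channel with codewords $\{x_1^n(l)\}$ and output $B_1^n$. Fano's inequality combined with the HSW converse (as in~\cite{winter2001capacity}) then yields
\[
nR_1 \leq I(X_1^n;B_1^n|X_2^n)_\rho + n\delta_n,
\]
with $\delta_n \to 0$. Because the joint state in (\ref{eq:int-code-state}) is in tensor-product form across channel uses with the classical register $X_2^n$ factoring, introducing a time-sharing variable $Q$ uniform on $\{1,\ldots,n\}$ and setting $X_1 = X_{1,Q}$, $X_2 = X_{2,Q}$, $B_1 = B_{1,Q}$ single-letterizes the bound to $R_1 \leq I(X_1;B_1|X_2 Q)_\rho$ evaluated on a state of the stated form. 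The bound on $R_2$ is symmetric.

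I expect the main subtlety to lie in the converse's single-letterization step, which relies on the classical nature of the conditioning system $X_2^n$ so that the chain rule for quantum conditional mutual information tensorizes cleanly; this is essentially the standard argument but merits explicit verification. Achievability, by contrast, is almost immediate once Theorem~\ref{thm:successive-decoder} is in hand and one observes that (\ref{eq:VSI-1})--(\ref{eq:VSI-2}) kill the cross constraints.
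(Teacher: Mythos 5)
Your proposal is correct and takes essentially the same approach as the paper: achievability via Winter's successive decoder (with the coded time-sharing extension), each receiver first decoding the other sender's message and then its partner's, with the very-strong-interference inequalities rendering the cross constraints redundant; and a genie-aided converse in which each receiver is handed the other sender's message and a single-sender HSW-type bound is applied. Your explicit Fano/single-letterization step fleshes out what the paper, which merely cites Carleial's reasoning, leaves implicit.
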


\begin{proof}
Our proof technique is to apply Winter's successive decoder from
Lemma~\ref{thm:successive-decoder}, so that each receiver first decodes the
message of the other sender, followed by decoding the message of the partner
sender. More specifically, Senders~1 and 2 randomly choose a codebook of size
$L \approx 2^{nI\left(  X_{1};B_{1}|X_{2} Q\right)  }$ and $M \approx 2^{nI\left(  X_{2}%
;B_{2}|X_{1} Q\right)  }$, respectively. The choice of random code is such that
Receiver~1 can first decode the message $m$ because the message $m$ is
distinguishable whenever the message set size $M$ is less than $2^{nI\left(
X_{2};B_{1} | Q\right)  }$ and the very strong interference condition in
(\ref{eq:VSI-1}) guarantees that this holds. Receiver~1 then uses $X_{2}$ as
side information to decode message $l$ from Sender~1. Receiver~2 performs
similar steps by exploiting the very strong interference condition in
(\ref{eq:VSI-2}). The random choice of code guarantees that the expectation of
the average error probability is arbitrarily small, and this furthermore
guarantees the existence of a particular code with arbitrarily small average
error probability. The converse of this theorem follows by the same reasoning as
Carleial \cite{carleial1975case,el2010lecture}---the outer bound follows by considering that
the conditional mutual information rates in the statement of the theorem
are what they could achieve if Senders 1 and 2 maximize their rates individually.
\end{proof}

\begin{example}
\label{ex:theta-SWAP}
We now consider an example of a ccqq quantum interference channel with two
classical inputs and two quantum outputs:%
\begin{align}
00  &  \rightarrow\left\vert 00\right\rangle ^{B_{1}B_{2}}%
,\label{eq:patrick-example-1}\\
01  &  \rightarrow\cos\left(  \theta\right)  \left\vert 01\right\rangle
^{B_{1}B_{2}}+\sin\left(  \theta\right)  \left\vert 10\right\rangle
^{B_{1}B_{2}},\\
10  &  \rightarrow-\sin\left(  \theta\right)  \left\vert 01\right\rangle
^{B_{1}B_{2}}+\cos\left(  \theta\right)  \left\vert 10\right\rangle
^{B_{1}B_{2}},\\
11  &  \rightarrow\left\vert 11\right\rangle ^{B_{1}B_{2}}.
\label{eq:patrick-example-4}%
\end{align}
The first classical input is for Sender~1, and the second classical input is
for Sender~2. This transformation results if the two senders input one of the
four classical states $\left\{  \left\vert 00\right\rangle ,\left\vert
01\right\rangle ,\left\vert 10\right\rangle ,\left\vert 11\right\rangle
\right\}  $ to a \textquotedblleft$\theta$-SWAP\textquotedblright\ unitary
transformation that takes this computational basis to the output basis in
(\ref{eq:patrick-example-1}-\ref{eq:patrick-example-4}).

\begin{figure}
[ptb]
\begin{center}
\includegraphics[
natheight=3.806000in,
natwidth=5.460400in,
width=3.4411in
]%
{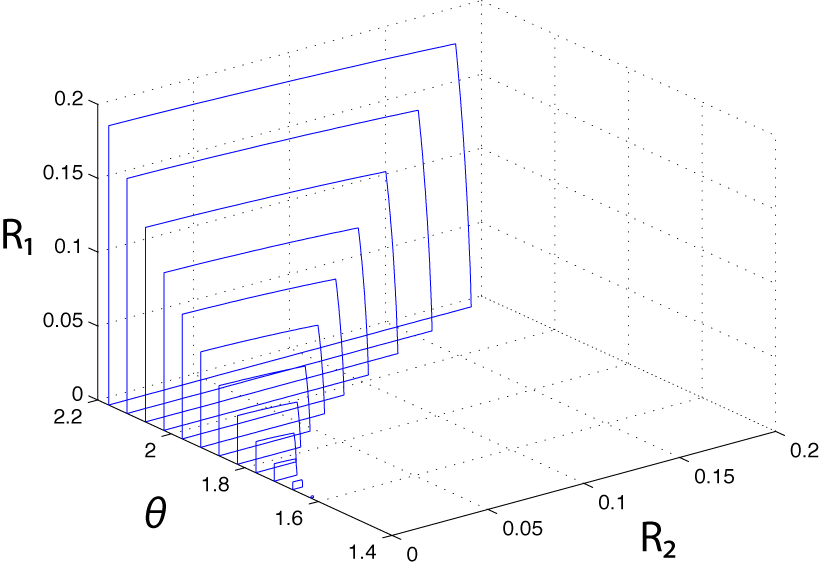}%
\caption{The capacity region of the \textquotedblleft$\theta$%
-SWAP\textquotedblright\ interference channel for various values of $\theta$
such that the channel exhibits \textquotedblleft very strong\textquotedblright%
\ interference. The capacity region is largest when $\theta$ gets closer to
2.18, and it vanishes when $\theta=\pi/2$ because the channel becomes a full
SWAP (at this point, Receiver~$i$ gets no information from Sender~$i$, where
$i\in\left\{  1,2\right\}  $). }%
\label{fig:patrick-example}%
\end{center}
\end{figure}
We would like to determine an interval for the parameter $\theta$ for which
the channel exhibits \textquotedblleft very strong\textquotedblright%
\ interference. In order to do so, we need to consider classical-quantum
states of the following form:%
\begin{multline}
\rho^{X_1 X_2 B_{1}B_{2}}\equiv\sum_{x_1,x_2=0}^{1}p_{X_1}\left(  x_1\right)  p_{X_2}\left(
x_2\right)  \left\vert x_1\right\rangle \left\langle x_1\right\vert ^{X_1}%
\otimes\\\left\vert x_2\right\rangle \left\langle x_2\right\vert ^{X_2}\otimes
\psi_{x_1,x_2}^{B_{1}B_{2}},\label{eq:patrick-example-cq-state}%
\end{multline}
where $\psi_{x_1,x_2}^{B_{1}B_{2}}$ is one of the pure output states in
(\ref{eq:patrick-example-1}-\ref{eq:patrick-example-4}). We should then check
whether the conditions in (\ref{eq:VSI-1}-\ref{eq:VSI-2}) hold for all
distributions $p_{X_1}\left(  x_1\right)  $ and $p_{X_2}\left(  x_2\right)  $. We can
equivalently express these conditions in terms of von Neumann entropies as
follows:%
\begin{align*}
H\left(  B_{1}|X_2\right)  _{\rho}-H\left(  B_{1}|X_1 X_2\right)  _{\rho} &  \leq
H\left(  B_{2}\right)  _{\rho}-H\left(  B_{2}|X_1\right)  _{\rho},\\
H\left(  B_{2}|X_1\right)  _{\rho}-H\left(  B_{2}| X_1 X_2\right)  _{\rho} &  \leq
H\left(  B_{1}\right)  _{\rho}-H\left(  B_{1}|X_2\right)  _{\rho},
\end{align*}
and thus, it suffices to calculate six entropies for states of the form in
(\ref{eq:patrick-example-cq-state}). After some straightforward calculations,
we find the results in (\ref{eq:entropies-1}-\ref{eq:entropies-last})%
\begin{figure*}
\begin{align}
H\left(  B_{1}|X_1 X_2\right)  _{\rho} &  =H\left(  B_{2}|X_1 X_2\right)  _{\rho
}=\left(  p_{X_1}\left(  0\right)  p_{X_2}\left(  1\right)  +p_{X_1}\left(
1\right)  p_{X_2}\left(  0\right)  \right)  H_{2}\left(  \cos^{2}\left(
\theta\right)  \right)  , \label{eq:entropies-1}\\
H\left(  B_{1}\right)_{\rho}   &  =H_{2}\left(  p_{X_1}\left(  0\right)  +\left(
p_{X_1}\left(  1\right)  p_{X_2}\left(  0\right)  -p_{X_1}\left(  0\right)
p_{X_2}\left(  1\right)  \right)  \sin^{2}\left(  \theta\right)  \right)  ,\\
H\left(  B_{2}\right)_{\rho}   &  =H_{2}\left(  p_{X_2}\left(  0\right)  +\left(
p_{X_1}\left(  0\right)  p_{X_2}\left(  1\right)  -p_{X_1}\left(  1\right)
p_{X_2}\left(  0\right)  \right)  \sin^{2}\left(  \theta\right)  \right)  ,\\
H\left(  B_{2}|X_1\right)_{\rho}   &  =p_{X_1}\left(  0\right)  H_{2}\left(  p_{X_2}\left(
1\right)  \cos^{2}\left(  \theta\right)  \right)  +p_{X_1}\left(  1\right)
H_{2}\left(  p_{X_2}\left(  0\right)  \cos^{2}\left(  \theta\right)  \right)
,\\
H\left(  B_{1}|X_2\right)_{\rho}   &  =p_{X_2}\left(  0\right)  H_{2}\left(  p_{X_1}\left(
1\right)  \cos^{2}\left(  \theta\right)  \right)  +p_{X_2}\left(  1\right)
H_{2}\left(  p_{X_1}\left(  0\right)  \cos^{2}\left(  \theta\right)  \right)  ,\label{eq:entropies-last}
\end{align}
\begin{equation}
\mathbb{E}_{X_{1}^{n},X_{2}^{n},Q^n}\left\{  \frac{1}{M_{1}M_{2}}\sum_{m_{1}%
,m_{2}}\text{Tr}\left\{  \left(  I-\Lambda_{m_{1},m_{2}}\right)  \rho
_{X_{1}^{n}\left(  m_{1}\right)  ,X_{2}^{n}\left(  m_{2}\right)  }^{B_{1}^{n}%
}\right\}  \right\}  \leq\frac{\epsilon}{2}, \label{eq:strong-int-dec-1}
\end{equation}
\begin{equation}
\mathbb{E}_{X_{1}^{n},X_{2}^{n},Q^n}\left\{  \frac{1}{M_{1}M_{2}}\sum_{m_{1}%
,m_{2}}\text{Tr}\left\{  \left(  I-\Gamma_{m_{1},m_{2}}\right)  \rho
_{X_{1}^{n}\left(  m_{1}\right)  ,X_{2}^{n}\left(  m_{2}\right)  }^{B_{2}^{n}%
}\right\}  \right\}  \leq\frac{\epsilon}{2}, \label{eq:strong-int-dec-2}
\end{equation}
\begin{equation}
\mathbb{E}_{X_{1}^{n},X_{2}^{n},Q^n}\left\{  \frac{1}{M_{1}M_{2}}\sum_{m_{1}%
,m_{2}}\text{Tr}\left\{  \left[  \left(  I-\Lambda_{m_{1},m_{2}}^{B_{1}^{n}%
}\right)  +\left(  I-\Gamma_{m_{1},m_{2}}^{B_{2}^{n}}\right)  \right]
\rho_{X_{1}^{n}\left(  m_{1}\right)  ,X_{2}^{n}\left(  m_{2}\right)  }%
^{B_{1}^{n}B_{2}^{n}}\right\}  \right\}  \leq\epsilon. \label{eq:strong-int-both}
\end{equation}
\end{figure*}
where $H_{2}\left(  p\right)  $ is the binary entropy function. We numerically
checked for particular values of $\theta$ whether the conditions
(\ref{eq:VSI-1}-\ref{eq:VSI-2}) hold for all distributions $p_{X_1}\left(
x_1\right)  $ and $p_{X_2}\left(  x_2\right)  $, and we found that they hold when
$\theta\in\left[  0.96,2.18\right]  \cup\left[  4.10,5.32\right]  $ (the latter
interval in the union is approximately a shift of the first interval by $\pi
$). The interval $\left[  0.96,2.18\right]  $ contains $\theta=\pi/2$, the
value of $\theta$ for which the capacity should vanish because the
transformation is equivalent to a full SWAP (the channel at this point has
\textquotedblleft too strong\textquotedblright\ interference). We 
compute the capacity region given in Theorem~\ref{thm:carleial}\ for several
values of $\theta$ in the interval $\theta\in\left[  \pi/2,2.18\right]  $ (it
is redundant to evaluate for other intervals because the capacity region is
symmetric about $\pi/2$ and it is also equivalent for the two $\pi$-shifted
intervals $\left[  0.96,2.18\right]  $ and $\left[  4.1,5.32\right]  $).
Figure~\ref{fig:patrick-example}\ plots these capacity regions for several
values of $\theta$ in the interval $\left[  \pi/2,2.18\right]  $.%
\end{example}

\subsubsection{The Quantum Simultaneous Decoding Inner Bound}

The two-sender quantum simultaneous decoder from Theorem~\ref{thm:two-sender-QSD}
and Corollary~\ref{cor:two-sender-QSD}
allows us to establish a non-trivial
inner bound on the capacity of the quantum interference channel. The strategy is simply
to consider the induced multiple access channels to each receiver and choose the rates low enough
such that each receiver can 
decode the messages from \textit{both} senders \cite{A74,el2010lecture}.
This gives us the following theorem:

\begin{theorem}[Simultaneous Decoding Inner Bound]
\label{thm:sim-inner-bound}Let a ccqq quantum interference channel as in
(\ref{eq:ccqq-int}) be given.
Then an achievable rate region is the union of all rates $R_{1}$ and $R_{2}$ satisfying the below inequalities:
\begin{align*}
R_{1}  &  \leq \min\left\{ I\left(  X_{1};B_{1}|X_{2} Q\right)_{\rho} ,
I\left(  X_{1};B_{2}|X_{2} Q\right)_{\rho} \right\},\\
R_{2}  &  \leq \min\left\{ I\left(  X_{2};B_{2}|X_{1} Q\right)_{\rho} ,
I\left(  X_{2};B_{1}|X_{1} Q\right)_{\rho} \right\},\\
R_1 + R_{2}  &  \leq \min\left\{ I\left(  X_1 X_{2};B_{1}|Q\right)_{\rho} ,
I\left(  X_1 X_{2};B_{2}|Q\right)_{\rho} \right\},
\end{align*}
where the union is over input distributions $p_Q(q) \, p_{X_{1} | Q}\left(x_{1}|q\right)\,
p_{X_{2}|Q}\left(  x_{2}|q\right)$.
\end{theorem}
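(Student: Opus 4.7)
The plan is to treat the quantum interference channel as inducing two simultaneous multiple access channels---$x_1,x_2\to \rho_{x_1,x_2}^{B_1}$ at Receiver~1 and $x_1,x_2\to \rho_{x_1,x_2}^{B_2}$ at Receiver~2---and to build a single random codebook that both receivers can decode with the two-sender quantum simultaneous decoder of Corollary~\ref{cor:two-sender-QSD}. The delicate point (emphasized in the preamble to this subsection) is that we cannot merely invoke the \emph{existence} of good codes for each induced MAC separately; we must exhibit a single codebook that works for both induced MACs at once. Fortunately, this follows from a standard expected-error / union-bound / derandomization argument, precisely as in the classical compound MAC analysis of \cite{A74}.

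Concretely, first I would fix a distribution of the form $p_Q(q)\,p_{X_1|Q}(x_1|q)\,p_{X_2|Q}(x_2|q)$ and sample a time-sharing sequence $q^n$ in an i.i.d.\ fashion. Conditioned on $q^n$, draw codebooks $\{X_1^n(m_1)\}_{m_1\in\{1,\dots,M_1\}}$ and $\{X_2^n(m_2)\}_{m_2\in\{1,\dots,M_2\}}$ independently from the respective product distributions $p_{X_1^n|Q^n}(\cdot|q^n)$ and $p_{X_2^n|Q^n}(\cdot|q^n)$, with $M_1 = 2^{n(R_1-\delta)}$ and $M_2 = 2^{n(R_2-\delta)}$. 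Applying Corollary~\ref{cor:two-sender-QSD} to the induced ccq MAC $x_1,x_2\to \rho_{x_1,x_2}^{B_1}$ yields a simultaneous-decoding POVM $\{\Lambda_{m_1,m_2}\}$ satisfying the bound in (\ref{eq:strong-int-dec-1}) whenever
\[
R_1 \leq I(X_1;B_1|X_2Q)_\rho,\quad R_2 \leq I(X_2;B_1|X_1Q)_\rho,\quad R_1+R_2 \leq I(X_1X_2;B_1|Q)_\rho.
\]
Applying the same corollary to the induced MAC $x_1,x_2\to \rho_{x_1,x_2}^{B_2}$ gives a second POVM $\{\Gamma_{m_1,m_2}\}$ satisfying the analogous bound (\ref{eq:strong-int-dec-2}) whenever the $B_2$-versions of the three inequalities hold. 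Crucially, both statements refer to expectations over the \emph{same} random codebook.

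Next, adding the two inequalities gives (\ref{eq:strong-int-both}): the expectation of the sum of the two average error probabilities is at most $\epsilon$. By a standard derandomization argument, there exists a particular realization $(q^n, \{x_1^n(m_1)\}, \{x_2^n(m_2)\})$ and corresponding POVMs $\Lambda_{m_1,m_2}^{B_1^n}$ and $\Gamma_{m_1,m_2}^{B_2^n}$ for which both average error probabilities are simultaneously at most $\epsilon$ (in fact $2\epsilon$ by Markov, but we can start with $\epsilon/2$ in each corollary). Receiver~1 then performs $\{\Lambda_{m_1,m_2}^{B_1^n}\}$ on $B_1^n$ and declares the first index as his estimate of Sender~1's message; Receiver~2 performs $\{\Gamma_{m_1,m_2}^{B_2^n}\}$ on $B_2^n$ and declares the second index as his estimate of Sender~2's message. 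The marginalization only lowers error, so each receiver recovers his intended message with average error at most $\epsilon$.

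The only mild subtlety is the derandomization: since the two expected-error bounds hold for the joint law of $(Q^n,X_1^n,X_2^n)$, one must invoke linearity of expectation on the sum and then pick a single realization, rather than trying to combine two independently chosen codebooks. I do not expect this to be an obstacle---it is the standard argument that makes the compound-MAC-to-interference-channel reduction go through---and no further typicality analysis is required beyond what Corollary~\ref{cor:two-sender-QSD} already provides. Taking the union over all admissible distributions $p_Q\,p_{X_1|Q}\,p_{X_2|Q}$ completes the proof of the claimed inner bound.
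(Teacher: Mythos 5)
Your proof is correct and follows essentially the same route as the paper's: fix a coded time-sharing distribution, build a single random codebook, apply Corollary~\ref{cor:two-sender-QSD} to each of the two induced multiple access channels (to $B_1$ and to $B_2$) for that same codebook, sum the two expected-error bounds, and derandomize. The only cosmetic difference is that you spell out the final coarse-graining of the two joint-decoding POVMs to the receivers' own message indices, whereas the paper records the operator inequality $I-\Lambda_{m_1,m_2}^{B_1^n}\otimes\Gamma_{m_1,m_2}^{B_2^n}\leq (I-\Lambda_{m_1,m_2}^{B_1^n})+(I-\Gamma_{m_1,m_2}^{B_2^n})$; both encode the same observation that decoding the full message pair is at least as demanding as decoding one's own message.
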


\begin{proof}
The proof exploits the two-sender quantum simultaneous decoder from Corollary~\ref{cor:two-sender-QSD}.
We first generate a time-sharing sequence $q^n$ according to the product distribution $p_{Q^n}(q^n)$.
Let Sender~1 generate a codebook $\left\{  X_{1}^{n}\left(  m_{1}\right)
\right\}  _{m_{1}}$ independently and randomly according to the distribution
$p_{X_{1}|Q}\left(  x_{1}|q\right)  $, and let Sender~2 generate a codebook
$\left\{  X_{2}^{n}\left(  m_{2}\right)  \right\}  _{m_{2}}$ with the
distribution $p_{X_{2}|Q}\left(  x_{2}|q\right)  $. The induced ccq multiple
access channel to Receiver~1 is $x_{1},x_{2}\rightarrow\rho_{x_{1},x_{2}%
}^{B_{1}}$, and the induced channel to Receiver~2 is $x_{1},x_{2}%
\rightarrow\rho_{x_{1},x_{2}}^{B_{2}}$. Corollary~\ref{cor:two-sender-QSD} states that there exists a
simultaneous decoding POVM\ $\left\{  \Lambda_{m_{1},m_{2}}\right\}  $\ for
Receiver~1 (corresponding to the random choice of code) such that (\ref{eq:strong-int-dec-1})
holds as long as%
\begin{align*}
R_{1}  & \leq I\left(  X_{1};B_{1}|X_{2} Q\right)  _{\rho},\\
R_{2}  & \leq I\left(  X_{2};B_{1}|X_{1} Q\right)  _{\rho},\\
R_{1}+R_{2}  & \leq I\left(  X_{1}X_{2};B_{1} | Q\right)  _{\rho}.
\end{align*}
Similarly, we can invoke Corollary~\ref{cor:two-sender-QSD} to show that there is a simultaneous
decoding POVM $\left\{  \Gamma_{m_{1},m_{2}}\right\}  $ for Receiver~2 such
that (\ref{eq:strong-int-dec-2}) holds
as long as%
\begin{align*}
R_{1}  & \leq I\left(  X_{1};B_{2}|X_{2} Q\right)  _{\rho},\\
R_{2}  & \leq I\left(  X_{2};B_{2}|X_{1} Q\right)  _{\rho},\\
R_{1}+R_{2}  & \leq I\left(  X_{1}X_{2};B_{2} | Q\right)  _{\rho}.
\end{align*}
Thus, if we choose the rates as given in the statement of the theorem, then
all six of the above inequalities are satisfied, implying that the
inequality in (\ref{eq:strong-int-both}) holds.
Invoking the following operator inequality%
\[
I-\Lambda_{m_{1},m_{2}}^{B_{1}^{n}}\otimes\Gamma_{m_{1},m_{2}}^{B_{2}^{n}}\leq
I-\Lambda_{m_{1},m_{2}}^{B_{1}^{n}}+I-\Gamma_{m_{1},m_{2}}^{B_{2}^{n}},
\]
and derandomizing the expectation implies the existence of a code upon which
all parties can agree. The agreed upon code has vanishing error probability in
the asymptotic limit.
\end{proof}

\subsubsection{Strong Interference}

The simultaneous decoding inner bound from the previous section allows us to determine the capacity of a
ccqq interference channel in the setting of \textquotedblleft
strong\textquotedblright\ interference (see page 6-12 of
Ref.~\cite{el2010lecture}). The conditions for \textquotedblleft
strong\textquotedblright\ interference are that the following information
inequalities should hold for all distributions $p_{X_{1}}\left(  x_{1}\right)
$ and $p_{X_{2}}\left(  x_{2}\right)  $:%
\begin{align}
I\left(  X_{1};B_{1}|X_{2}\right)  _{\rho}  &  \leq I\left(  X_{1}%
;B_{2}|X_2\right)  _{\rho},\label{eq:SI-1}\\
I\left(  X_{2};B_{2}|X_{1}\right)  _{\rho}  &  \leq I\left(  X_{2}%
;B_{1}|X_1\right)  _{\rho}, \label{eq:SI-2}%
\end{align}
where $\rho^{X_{1}X_{2}B_{1}B_{2}}$ is a state of the form in (\ref{eq:int-code-state}).

\begin{theorem}[Strong Interference]
\label{thm:strong-in}Let a ccqq quantum interference channel as in
(\ref{eq:ccqq-int}) be given which satisfies the condition of ``strong interference'' as
in (\ref{eq:SI-1}-\ref{eq:SI-2}).
Then the capacity region of such a channel is the union of all rates $R_{1}$ and $R_{2}$ satisfying the below inequalities:
\begin{align*}
R_{1}  &  \leq I\left(  X_{1};B_{1}|X_{2} Q\right)_{\rho} ,\\
R_{2}  &  \leq  I\left(  X_{2};B_{2}|X_{1} Q\right)_{\rho},\\
R_1 + R_{2}  &  \leq \min\left\{ I\left(  X_1 X_{2};B_{1} | Q\right)_{\rho} ,
I\left(  X_1 X_{2};B_{2} | Q\right)_{\rho} \right\},
\end{align*}
where the union is over input distributions $p_Q(q) \, p_{X_{1} | Q}\left(x_{1}|q\right)\,
p_{X_{2}|Q}\left(  x_{2}|q\right)$.
\end{theorem}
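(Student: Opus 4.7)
\emph{Achievability.} My plan for the inner bound is to invoke Theorem~\ref{thm:sim-inner-bound} directly. That theorem provides, via the two-sender quantum simultaneous decoder of Corollary~\ref{cor:two-sender-QSD}, the achievability of the region with
\[
R_1 \leq \min\{I(X_1;B_1|X_2 Q)_\rho,\ I(X_1;B_2|X_2 Q)_\rho\},
\]
an analogous expression for $R_2$, and a sum-rate constraint $R_1+R_2 \leq \min\{I(X_1X_2;B_1|Q)_\rho,\ I(X_1X_2;B_2|Q)_\rho\}$ that already matches the one stated here. The strong interference conditions~\eqref{eq:SI-1}--\eqref{eq:SI-2} hold for every product distribution $p_{X_1}(x_1)\, p_{X_2}(x_2)$, and in particular for the conditional distributions $p_{X_1|Q=q}(x_1)\, p_{X_2|Q=q}(x_2)$ appearing after time-sharing; averaging over $Q$ yields $I(X_1;B_1|X_2Q)_\rho \leq I(X_1;B_2|X_2Q)_\rho$ together with the symmetric inequality for sender~2, collapsing the single-rate minima to the expressions in the theorem.

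\emph{Converse: individual bounds.} For the converse, I would follow Sato's classical strategy of bounding any achievable rate pair by the compound-MAC outer bounds, where a single receiver effectively decodes both messages. Given an $(n, R_1-\delta, R_2-\delta, \epsilon)$ code with $\epsilon \to 0$, the quantum Fano inequality at each receiver gives $H(M_i|B_i^n)_\rho \leq n\epsilon_n'$ with $\epsilon_n' \to 0$. Independence of $M_i$ from the other sender's codebook implies $I(M_i;B_i^n) \leq I(M_i;B_i^n|X_j^n)$, and data processing through the Markov chain $M_i \to X_i^n \to B_i^n$ conditioned on $X_j^n$ yields
\[
nR_i \leq I(X_i^n;B_i^n|X_j^n)_\rho + n\epsilon_n'.
\]
Single-letterization via a time-sharing variable $Q$ uniform on $\{1,\ldots,n\}$ --- using subadditivity of the von Neumann entropy to bound $H(B_i^n|X_j^n) \leq \sum_t H(B_{i,t}|X_{j,t})$ and the memoryless channel plus classical conditioning to give $H(B_i^n|X_1^n X_2^n) = \sum_t H(B_{i,t}|X_{1,t}X_{2,t})$ exactly --- produces $R_i \leq I(X_i;B_i|X_j Q)_\rho$.

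\emph{Converse: sum-rate bounds.} For the sum-rate, I combine Fano at receiver~1, $nR_1 \leq I(X_1^n;B_1^n)_\rho + n\epsilon_n'$, with a ``genie-aided'' Fano bound at receiver~2: since $M_1 \perp M_2$, $H(M_2|M_1 B_2^n) \leq H(M_2|B_2^n) \leq n\epsilon_n'$, and data processing gives $nR_2 \leq I(X_2^n;B_2^n|X_1^n)_\rho + n\epsilon_n'$. The key step is an $n$-letter strong interference inequality
\[
I(X_2^n;B_2^n|X_1^n)_\rho \leq I(X_2^n;B_1^n|X_1^n)_\rho,
\]
lifted from the single-letter hypothesis~\eqref{eq:SI-2}. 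Combined with the Fano bound on $nR_1$ and the chain rule $I(X_1^n;B_1^n) + I(X_2^n;B_1^n|X_1^n) = I(X_1^n X_2^n;B_1^n)$, this gives $n(R_1+R_2) \leq I(X_1^n X_2^n;B_1^n)_\rho + 2n\epsilon_n'$, and the standard single-letterization $I(X_1^n X_2^n;B_1^n) \leq \sum_t I(X_{1,t}X_{2,t};B_{1,t})$ yields $R_1+R_2 \leq I(X_1X_2;B_1|Q)_\rho$. The symmetric argument with the roles of receivers~1 and 2 interchanged, using \eqref{eq:SI-1} instead, produces the companion bound $R_1+R_2 \leq I(X_1X_2;B_2|Q)_\rho$.

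\emph{Main obstacle.} The principal technical hurdle is the $n$-letter strong interference inequality invoked in the sum-rate converse, a quantum analogue of the classical Costa--El~Gamal lifting lemma. Even classically this is nontrivial, because the codebook-induced input distribution $p(x_1^n)p(x_2^n)$ is product across senders but generally not i.i.d.\ across channel uses, so a naive single-letterization of the two conditional Holevo quantities does not preserve their ordering. In the ccqq setting the memoryless tensor-product structure of the channel and classical conditioning on $X_1^n, X_2^n$ should nevertheless permit the lifting via subadditivity and the quantum chain rule, but some care is required to rule out quantum correlations in the outputs $B_1^n$ and $B_2^n$ that have no classical counterpart. Every other step --- Fano, data processing, the chain rule, and single-letterization via $Q$ --- carries over verbatim from the classical Sato converse.
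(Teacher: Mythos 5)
Your proposal takes essentially the same route as the paper: achievability via the simultaneous decoding inner bound (Theorem~\ref{thm:sim-inner-bound}) with the strong-interference conditions collapsing the single-rate minima, and a Sato-style compound-MAC converse built from Fano, data processing, single-letterization, and an $n$-letter version of the strong-interference inequality. The paper's own converse is a one-sentence pointer to page~6-13 of Ref.~\cite{el2010lecture}, so your write-up is actually more explicit than what appears in print. You are also right, and more forthcoming than the paper, to isolate the $n$-letter lifting (the ccqq analogue of the Costa--El Gamal lemma) as the one nontrivial step: the classical inductive proof involves conditioning on past channel outputs, which in the quantum setting must be handled with care since one cannot condition on a quantum register directly. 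Because the inputs remain classical and the channel output factorizes into a tensor product once conditioned on $(x_1^n,x_2^n)$, the lifting does go through, but the paper does not spell this out and neither do you; a reader wishing to verify the theorem end-to-end would still need to supply that inductive argument. Everything else in your proposal (the averaging over $Q$ to preserve the strong-interference inequalities after coded time-sharing, the Fano and data-processing steps, the chain-rule manipulation yielding the sum-rate bound) is correct and matches the intended argument.
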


\begin{proof} The proof exploits the quantum simultaneous decoding inner bound from
Theorem~\ref{thm:sim-inner-bound} and the strong interference conditions in (\ref{eq:SI-1}-\ref{eq:SI-2}).
The matching outer bound follows from
similar reasoning as on page 6-13 of Ref.~\cite{el2010lecture}, though using quantum information inequalities
rather than classical ones.
\end{proof}

\subsubsection{Han-Kobayashi Achievable Rate Region}

\label{sec:HK}The following result 
provides an achievable rate region for the reliable transmission of classical
data over a \textit{ccqq} quantum interference channel (assuming
Conjecture~\ref{conj:sim-dec}\ regarding the existence of a quantum
simultaneous decoder). We should mention that this result was subsequently proved by Sen \cite{S11a} without relying on Conjecture \ref{conj:sim-dec}. The statement of the theorem generates codes
constructed from a single copy of a ccqq quantum interference channel. We can
obtain the regularization of the region by blocking the channel $k$ times and
constructing codes from the blocked channel (for any finite $k$). 
\begin{theorem}
[Achievable Rate Region for the Quantum Interference Channel]%
\label{thm:han-kobayashi} Assume Conjecture~\ref{conj:sim-dec} holds. Let $\mathcal{S}_{\theta}$ be
the set of tuples of
non-negative reals $\left(  S_{1},S_{2},T_{1},T_{2}\right)  $ such that%
\begin{align}
S_{1}  &  \leq I\left(  U_{1};B_{1}|W_{1}W_{2}\right)  _{\theta}%
,\label{eq:HK-1}\\
T_{1}  &  \leq I\left(  W_{1};B_{1}|U_{1}W_{2}\right)  _{\theta},\\
T_{2}  &  \leq I\left(  W_{2};B_{1}|U_{1}W_{1}\right)  _{\theta},\\
S_{1}+T_{1}  &  \leq I\left(  U_{1}W_{1};B_{1}|W_{2}\right)  _{\theta},\\
S_{1}+T_{2}  &  \leq I\left(  U_{1}W_{2};B_{1}|W_{1}\right)  _{\theta},\\
T_{1}+T_{2}  &  \leq I\left(  W_{1}W_{2};B_{1}|U_{1}\right)  _{\theta},\\
S_{1}+T_{1}+T_{2}  &  \leq I\left(  U_{1}W_{1}W_{2};B_{1}\right)  _{\theta},\label{eq:HK-1st-end}
\end{align}%
\begin{align}
S_{2}  &  \leq I\left(  U_{2};B_{2}|W_{1}W_{2}\right)  _{\theta}, \label{eq:HK-2nd-begin} \\
T_{1}  &  \leq I\left(  W_{1};B_{2}|U_{2}W_{2}\right)  _{\theta},\\
T_{2}  &  \leq I\left(  W_{2};B_{2}|U_{2}W_{1}\right)  _{\theta},\\
S_{2}+T_{1}  &  \leq I\left(  U_{2}W_{1};B_{2}|W_{2}\right)  _{\theta},\\
S_{2}+T_{2}  &  \leq I\left(  U_{2}W_{2};B_{2}|W_{1}\right)  _{\theta},\\
T_{1}+T_{2}  &  \leq I\left(  W_{1}W_{2};B_{2}|U_{2}\right)  _{\theta},\\
S_{2}+T_{1}+T_{2}  &  \leq I\left(  U_{2}W_{1}W_{2};B_{2}\right)  _{\theta},
\label{eq:HK-last}%
\end{align}
where $\theta$ is a state of the following form:%
\begin{multline}
\theta^{U_{1}U_{2}W_{1}W_{2}B_{1}B_{2}}\equiv \\ \sum_{u_{1},u_{2},w_{1},w_{2}%
}p_{U_{1}}\left(  u_{1}\right)  p_{U_{2}}\left(  u_{2}\right)  p_{W_{1}%
}\left(  w_{1}\right)  p_{W_{2}}\left(  w_{2}\right)  \\ \left\vert
u_{1}\right\rangle \left\langle u_{1}\right\vert ^{U_{1}}\otimes\left\vert
u_{2}\right\rangle \left\langle u_{2}\right\vert ^{U_{2}}%
\otimes\left\vert w_{1}\right\rangle \left\langle w_{1}\right\vert ^{W_{1}%
}\otimes\left\vert w_{2}\right\rangle \left\langle w_{2}\right\vert ^{W_{2}%
}\otimes
\label{eq:ipsi-EA-interference-state}\\
\rho_{f_{1}\left(  u_{1},w_{1}\right)  ,f_{2}\left(  u_{2}%
,w_{2}\right)  }^{B_{1}B_{2}},
\end{multline}
and $f_{1}:\mathcal{U}_{1}\times \mathcal{W}_{1}\rightarrow\mathcal{X}_{1}$ and $f_{2}:\mathcal{U}_{2}\times
\mathcal{W}_{2}\rightarrow\mathcal{X}_{2}\mathcal{\ }$are arbitrary functions. A rate
region is achievable if for all $\epsilon>0$ and sufficiently large $n$, there
exists a code with vanishing average error probability as given in (\ref{eq:int-channel-bound})
where $\rho_{f_{1}^{n}\left(  u_{1}^{n}\left(  i\right)  ,w_{1}^{n}\left(
k\right)  \right)  ,f_{2}^{n}\left(  u_{2}^{n}\left(  j\right)  ,w_{2}%
^{n}\left(  m\right)  \right)  }$ represents the encoded state, $i$ is a
\textquotedblleft personal\textquotedblright\ message of Sender~1, $k$ is a
\textquotedblleft common\textquotedblright\ message of Sender~1, $j$ is a
\textquotedblleft personal\textquotedblright\ message of Sender~2, $m$ is a
\textquotedblleft common\textquotedblright\ message of Sender~2, $\left\{
\Lambda_{i,k,m}\right\}  $ is the POVM\ of Receiver~1,
and $\left\{  \Gamma_{j,k,m}\right\}  $ is the POVM\ of
Receiver~2. An achievable rate region for the quantum interference channel
$x_{1},x_{2}\rightarrow\rho_{x_{1},x_{2}}$ is the set of
all rates $\left(  S_{1}+T_{1},S_{2}+T_{2}\right)  $ where $\left(
S_{1},S_{2},T_{1},T_{2}\right)  \in\mathcal{S}_{\theta}$ and $\theta$ is a
state of the form in (\ref{eq:ipsi-EA-interference-state}).
\end{theorem}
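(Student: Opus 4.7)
The plan is to apply a Han-Kobayashi style rate-splitting strategy, where each sender's message is split into a ``personal'' part (decoded only by the intended receiver) and a ``common'' part (decoded by both receivers), and then invoke the three-sender simultaneous decoding Conjecture~\ref{conj:sim-dec} at each receiver. Specifically, Sender~1 draws a personal codebook $\{U_1^n(i)\}$ and a common codebook $\{W_1^n(k)\}$ i.i.d.\ according to the product distributions $p_{U_1}^n$ and $p_{W_1}^n$, independently of one another, and likewise Sender~2 draws $\{U_2^n(j)\}$ and $\{W_2^n(m)\}$ according to $p_{U_2}^n$ and $p_{W_2}^n$. Transmission uses the symbol-wise functions $f_1, f_2$ to form the channel inputs $x_1^n = f_1^n(u_1^n(i), w_1^n(k))$ and $x_2^n = f_2^n(u_2^n(j), w_2^n(m))$.

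The key observation is that, from Receiver~1's perspective, the induced channel is a cccq multiple access channel with three classical inputs $U_1$, $W_1$, $W_2$ and one quantum output $B_1$; Receiver~1 needs to determine the triple $(i,k,m)$. Assuming Conjecture~\ref{conj:sim-dec}, there exists a POVM $\{\Lambda_{i,k,m}\}$ that succeeds with vanishing average error probability provided the rates $(S_1, T_1, T_2)$ satisfy the seven inequalities obtained by specializing the conjecture to the inputs $U_1,W_1,W_2$ and output $B_1$; these are precisely inequalities (\ref{eq:HK-1})--(\ref{eq:HK-1st-end}) evaluated on the state $\theta$. Symmetrically, Receiver~2 applies the same conjecture to the induced cccq MAC with inputs $U_2, W_1, W_2$ and output $B_2$, yielding a POVM $\{\Gamma_{j,k,m}\}$ with vanishing error under (\ref{eq:HK-2nd-begin})--(\ref{eq:HK-last}).

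To combine the two decoders into a valid interference-channel code, I would use the operator inequality
\[
I - \Lambda_{i,k,m}^{B_1^n} \otimes \Gamma_{j,k,m}^{B_2^n}
\leq (I - \Lambda_{i,k,m}^{B_1^n}) + (I - \Gamma_{j,k,m}^{B_2^n}),
\]
so that the expectation of the average joint error probability is bounded by the sum of the two marginal expected error probabilities, each of which is made arbitrarily small by the conjecture. A standard derandomization argument then produces a fixed code whose average error probability is below any prescribed $\epsilon$. Finally, the achievable individual rates to each receiver are $R_1 = S_1 + T_1$ and $R_2 = S_2 + T_2$, since Receiver~1 reliably recovers $(i,k)$ and Receiver~2 reliably recovers $(j,m)$.

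The main obstacle in executing this plan is, of course, invoking Conjecture~\ref{conj:sim-dec}; the whole argument is conditional on its validity, which is why this theorem is stated under that assumption. The remaining delicate point is verifying that the codebook-generation distributions required by the conjecture (namely, independent product distributions on $U_1$, $W_1$, $W_2$, and on $U_2$, $W_1$, $W_2$) are compatible with a single joint code satisfying both decoders' requirements simultaneously; this is ensured by choosing $U_1, U_2, W_1, W_2$ mutually independent at the single-letter level, as already encoded in the product form of $\theta$ in (\ref{eq:ipsi-EA-interference-state}). Beyond that, all steps are routine: the Hayashi-Nagaoka style error analysis is fully subsumed inside the hypothesized three-sender simultaneous decoder.
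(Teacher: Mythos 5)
Your proposal reproduces the paper's argument: the same rate-splitting code construction, the same identification of the two induced cccq multiple-access channels (inputs $U_1,W_1,W_2 \to B_1$ and $U_2,W_1,W_2 \to B_2$), the same invocation of Conjecture~\ref{conj:sim-dec} for each receiver, and the same operator inequality plus derandomization to combine the two decoders. The only difference is a harmless reordering (you apply the operator inequality before derandomizing, while the paper derandomizes first), and you additionally make explicit the compatibility of the code-generation distributions, which the paper leaves implicit.
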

\begin{figure*}
\begin{equation}
\frac{1}{L_{1}L_{2}M_{1}M_{2}}\sum_{i,j,k,m}\text{Tr}\left\{  \left(
I-\Lambda_{i,k,m} \otimes\Gamma_{j,k,m}\right)  \rho_{f_{1}^{n}\left(  u_{1}^{n}\left(  i\right)
,w_{1}^{n}\left(  k\right)  \right)  ,f_{2}^{n}\left(  u_{2}^{n}\left(
j\right)  ,w_{2}^{n}\left(  m\right)  \right)  }\right\}  \leq\epsilon,
\label{eq:int-channel-bound}%
\end{equation}
\begin{align}
\mathbb{E}\left\{  \frac{1}{L_{1}M_{1}M_{2}}\sum_{i,k,m}\text{Tr}\left\{
\left(  I-\Lambda_{i,k,m}\right)  \rho_{f_{1}^{n}\left(
u_{1}^{n}\left(  i\right)  ,w_{1}^{n}\left(  k\right)  \right)  ,f_{2}%
^{n}\left(  u_{2}^{n}\left(  j\right)  ,w_{2}^{n}\left(  m\right)  \right)
}\right\}  \right\}   &  \leq\frac{\epsilon}{2},\label{eq:HK-errors-1}\\
\mathbb{E}\left\{  \frac{1}{L_{2}M_{1}M_{2}}\sum_{j,k,m}\text{Tr}\left\{
\left(  I-\Gamma_{j,k,m}\right)  \rho_{f_{1}^{n}\left(
u_{1}^{n}\left(  i\right)  ,w_{1}^{n}\left(  k\right)  \right)  ,f_{2}%
^{n}\left(  u_{2}^{n}\left(  j\right)  ,w_{2}^{n}\left(  m\right)  \right)
}\right\}  \right\}   &  \leq\frac{\epsilon}{2}. \label{eq:HK-errors}
\end{align}
\begin{equation}
\frac{1}{L_{1}L_{2}M_{1}M_{2}}\sum_{i,j,k,m}\text{Tr}\left\{  \left[  \left(
I-\Lambda_{i,k,m}\right)  +\left(  I-\Gamma
_{j,k,m}\right)  \right]  \rho_{f_{1}^{n}\left(
u_{1}^{n}\left(  i\right)  ,w_{1}^{n}\left(  k\right)  \right)  ,f_{2}%
^{n}\left(  u_{2}^{n}\left(  j\right)  ,w_{2}^{n}\left(  m\right)  \right)
}\right\}  \leq\epsilon. \label{eq:HK-bound}
\end{equation}
\end{figure*}

\begin{figure}
[ptb]
\begin{center}
\includegraphics[
natheight=1.932900in,
natwidth=4.540300in,
width=3.5916in
]%
{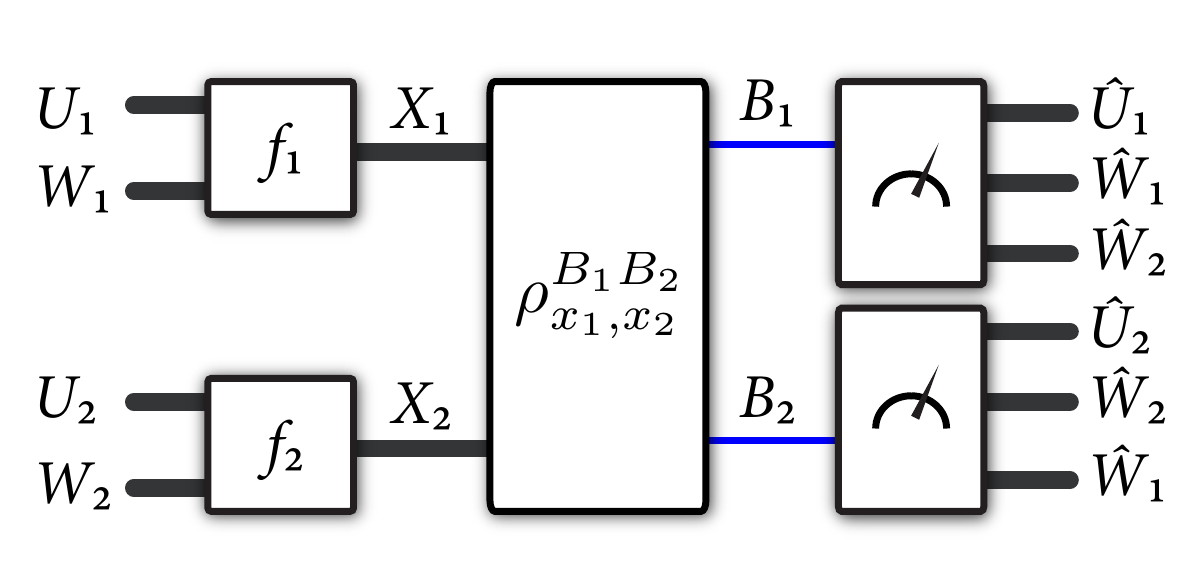}%
\caption{The Han-Kobayashi coding strategy. Sender~1 selects codewords
according to a \textquotedblleft personal\textquotedblright\ random variable
$U_{1}$ and a \textquotedblleft common\textquotedblright\ random variable
$W_{1}$. She then acts on $U_{1}$ and $W_{1}$ with some deterministic function $f_{1}$ that
outputs a variable $X_{1}$ which serves as a classical input to the
interference channel. Sender~2 uses a similar encoding. Receiver~1 performs a
measurement to decode both variables of Sender~1 and the common random
variable $W_{2}$ of Sender~2. Receiver~2 acts similarly. The advantage of
this coding strategy is that it makes use of interference in the channel by
having each receiver partially decode what the other sender is transmitting.
Theorem~\ref{thm:han-kobayashi} gives the rates that are achievable assuming that
Conjecture~\ref{conj:sim-dec}\ holds.}%
\label{fig:han-kob-code}%
\end{center}
\end{figure}
\begin{proof}
We merely need to set up how the senders select a code randomly and the
rest of the proof follows by reasoning similar to that of Han and Kobayashi
\cite{HK81}, although we require an application of
Conjecture~\ref{conj:sim-dec}. Figure~\ref{fig:han-kob-code} depicts the
Han-Kobayashi coding strategy. Sender~1
generates $2^{nS_{1}}$ \textquotedblleft personal\textquotedblright\ codewords
$\left\{  u_{1}^{n}\left(  i\right)  \right\}  _{i\in\left[  1,\ldots
,L_{1}\right]  }$ according to the distribution $p_{U_{1}^{n}}\left(
u_{1}^{n}\right)  $ and $2^{nT_{1}}$ \textquotedblleft
common\textquotedblright\ codewords $\left\{  w_{1}^{n}\left(  k\right)
\right\}  _{k\in\left[  1,\ldots,M_{1}\right]  }$ according to the
distribution $p_{W_{1}^{n}}\left(  w_{1}^{n}\right)  $. Sender~2 generates
$2^{nS_{2}}$ \textquotedblleft personal\textquotedblright\ codewords $\left\{
u_{2}^{n}\left(  j\right)  \right\}  _{j\in\left[  1,\ldots,L_{2}\right]  }$
according to the distribution $p_{U_{2}^{n}}\left(  u_{2}^{n}\right)  $ and
$2^{nT_{2}}$ \textquotedblleft common\textquotedblright\ codewords $\left\{
w_{2}^{n}\left(  m\right)  \right\}  _{m\in\left[  1,\ldots,M_{2}\right]  }$
according to the distribution $p_{W_{2}^{n}}\left(  w_{2}^{n}\right)  $.
Receiver~1 \textquotedblleft sees\textquotedblright\ a three-input multiple
access channel after tracing over Receiver~2's system, and the relevant state
for randomly selecting a code is many copies of Tr$_{B_{2}}\left\{
\theta^{U_{1}U_{2}W_{1}W_{2}B_{1}B_{2}}\right\}  $. Receiver~2
\textquotedblleft sees\textquotedblright\ a three-input multiple access
channel after tracing over Receiver~1's system, and the relevant state for
randomly selecting a code is many copies of Tr$_{B_{1}}\left\{  \theta
^{U_{1}U_{2}W_{1}W_{2}B_{1}B_{2}}\right\}  $. Observe that these states are of
the form needed to apply Conjecture~\ref{conj:sim-dec}. A direct application
of Conjecture~\ref{conj:sim-dec}\ to the state Tr$_{B_{2}}\left\{
\theta^{U_{1}U_{2}W_{1}W_{2}B_{1}B_{2}}\right\}  $ shows that there exists a
POVM\ that can distinguish the common messages of both senders and the
personal message of Sender~1 provided that (\ref{eq:HK-1}-\ref{eq:HK-1st-end}) hold.
Similarly, a direct application of
Conjecture~\ref{conj:sim-dec}\ to the state Tr$_{B_{1}}\left\{  \theta
^{U_{1}U_{2}W_{1}W_{2}B_{1}B_{2}}\right\}  $ shows that there exists a
POVM\ that can distinguish the common messages of both senders and the
personal message of Sender~2 provided
that (\ref{eq:HK-2nd-begin}-\ref{eq:HK-last}) hold. We obtain the bounds
in (\ref{eq:HK-errors-1}-\ref{eq:HK-errors}) on the expectation
of the average error probability for each code, provided that the rates
satisfy the inequalities in (\ref{eq:HK-1}-\ref{eq:HK-last}).
We then sum the two expectations of the average error probabilities together.
Since the expectation is bounded above by some arbitrarily small, positive
number $\epsilon$, there exists a particular code such that the bound in (\ref{eq:HK-bound}) holds. 
We finally apply the bound%
\[
I-\Lambda_{i,k,m}\otimes\Gamma_{j,k,m}
\leq\left(  I-\Lambda_{i,k,m}\right)  +\left(
I-\Gamma_{j,k,m}\right)  ,
\]
that holds for any two commuting positive operators each less than or equal to
the identity, to get the bound in (\ref{eq:int-channel-bound}) on the average
error probability. This demonstrates that any rate pair $\left(  S_{1}+T_{1}%
,S_{2}+T_{2}\right)  $ is achievable for the quantum interference channel (up
to Conjecture~\ref{conj:sim-dec}).
\end{proof}

Extending the strategies of the previous section and this section to the case
of a quantum interference channel with quantum inputs and quantum outputs is
straightforward. The senders have the choice to prepare density operators,
conditional on classical inputs, as input to this general quantum interference
channel, and this extra preprocessing for preparation effectively induces a
\textit{ccqq} quantum interference channel for which they are coding. Thus,
the achievable rate regions include an extra degree of freedom in the choice
of density operators at the inputs. Also, Theorems~\ref{thm:carleial} and
\ref{thm:strong-in} are no
longer optimal in the case of \textquotedblleft very strong\textquotedblright%
\ or ``strong'' interference because entanglement at the individual encoders could increase capacity for
certain interference channels~\cite{H09}.

\subsubsection{Rates achievable by successive decoding}
\label{sec:rate-succ-decoding}
	
	In Section \ref{sec:mac-succ-decoding} on the multiple access channel, we 
	saw that a successive decoding strategy can be used to achieve certain rate 
	tuples. Then, by time-sharing between the different codes achieving these 
	rates, it is possible to construct good codes for the full capacity region of the 
	multiple access channel. 
	To obtain an inner bound for the interference channel, one could try to use these codes
	for the two induced multiple access channels.
	However, this strategy is not well-adapted in this setting
	because the codebooks obtained for the two multiple access channels are not necessarily the same
	for fixed rates $R_1$ and $R_2$. 
	In addition, decoding a codebook constructed by time-sharing between 
	two codebooks $\cC_1$ and $\cC_2$ assumes that both $\cC_1$ and $\cC_2$ 
	are decodable, and these codes
	do in general depend on the properties of the channel for which 
	one is coding. For this reason, a time-sharing strategy that works for one of the induced multiple
	access channels might not work for the other one.

It is however possible to use successive decoding strategies for an interference channel in the following way. We start by considering a strategy where both receivers are asked to decode both messages, i.e., we are dealing with the compound multiple access channel. Such a strategy defines an achievable
rate region known as the ``successive decoding inner bound'' 
for the interference channel (c.f., page 6-7 of Ref.~\cite{el2010lecture}).
	Suppose that Receiver~1 starts 
	by decoding the message of Sender~2 and then the message of Sender~1, 
	and Receiver~2 does the same. 
	We can describe the decode orderings of the receivers by the two permutations
	$\pi_1 = (2,1)$ and  $\pi_2 = (2,1)$.
	In this case, we know that the random code 
	defined by picking $2^{nR_1}$ and $2^{nR_2}$ codewords independently 
	according to the product distributions $p^n_{X^n_1}$ and $p^n_{X^n_2}$ is 
	decodable on average for Receiver~1 provided $R_1 < I(X_1; B_1|X_2)$ 
	and $R_2 < I(X_2; B_1)$. 
	Moreover, it is decodable on average for 
	Receiver~2 provided $R_1 < I(X_2; B_2|X_1)$ and $R_2 < I(X_2; B_2)$. 
	Thus, the rate pairs $R_1 < \min\{I(X_1; B_1|X_2), I(X_1; B_2|X_2\})$ and 
	$R_2 <  \min\{I(X_2; B_1), I(X_2; B_2)\}$ are all achievable for the interference 
	channel. 
	Recall that Receiver~2 is actually not interested in the message 
	sent by Sender~1. The only reason to decode the message of Sender~1 is to 
	be able to decode the message of Sender~2 at a higher rate. It is thus 
	useless to require Receiver~2 to decode the message of Sender~1 after 
	decoding the message of Sender~2. 
	
	The above ordering shows that the rate pairs $R_1, R_2$ where 
	$R_1 < I(X_1; B_1|X_2)$ and $R_2 < \min\{I(X_2; B_1),$ $I(X_2; B_2)\}$ are all 
	achievable for the interference channel. 
	Naturally, we can do the same for all 
	decode orderings $\pi_1$, $\pi_2$ and we can achieve rates arbitrarily close to the following points:
        \begin{align}        	 
		P_1		&= (I(X_1; B_1|X_2), \min\{I(X_2; B_1), I(X_2; B_2) \} ), \label{eq:sd-1}\\
		P_2		&= (\min\{I(X_1; B_1|X_2), I(X_1;B_2)\}, \nonumber\\
		& \,\,\,\,\,\,\,\,\,\,\,\min\{I(X_2;B_1), I(X_2;B_2|X_1)\}), \label{eq:sd-2}\\
		P_3		&= (\min\{I(X_1;B_1), I(X_1;B_2)\}, I(X_2;B_2|X_1)),\label{eq:sd-3} \\
		P_4		&=(I(X_1; B_1), I(X_2; B_2)). \label{eq:sd-4}
        	\end{align}        	
	Of course, one can use time-sharing between these different codes for the interference channel to obtain other achievable rates. These rates are illustrated in the RHS of Figure \ref{fig:jt-succ-decoder}.

	\begin{figure}[ptb]
	\begin{center}
	\includegraphics[width=0.48\textwidth]{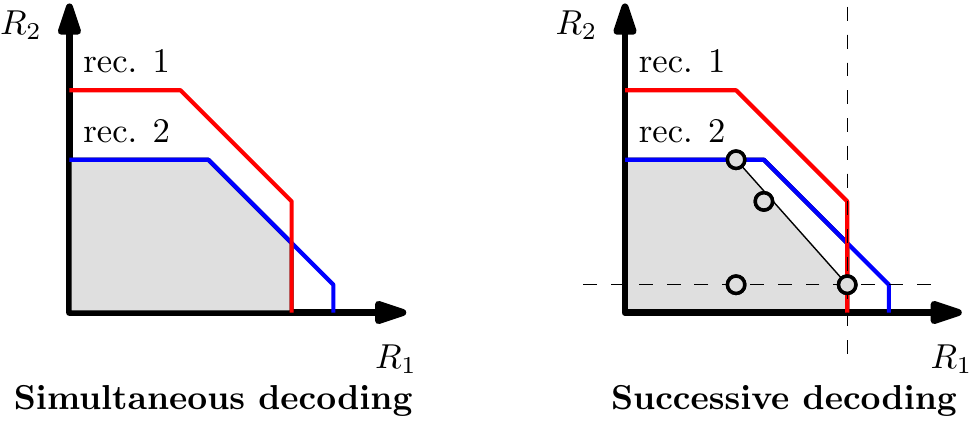}%
	\caption{These plots show achievable rates regions for the interference 
	channel for simultaneous decoding and successive decoding strategies with 
	fixed input distributions. Using a simultaneous decoding strategy, it is 
	possible to achieve the intersection of the two regions of the corresponding 
	multiple access channels. Using a successive decoding strategy, we obtain 
	four achievable rate points that correspond to the possible decoding orders 
	for the two multiple access channels. The solid red and blue lines outline the different 
	multiple access channel achievable rate regions, and the shaded gray areas 
	outline the achievable rate regions for the two different decoding strategies.
	}%
	\label{fig:jt-succ-decoder}%
	\end{center}
	\end{figure}

\textbf{Improving rates using rate-splitting.}
As can be seen in Figure \ref{fig:jt-succ-decoder}, the region defined by the convex hull of  the points \eqref{eq:sd-1}-\eqref{eq:sd-4} is, in general, smaller than the simultaneous decoding inner bound.
A natural question is whether it is possible to obtain the simultaneous decoding inner bound, or even more generally, the full Han-Kobayashi rate region using a more sophisticated successive decoding argument. There exists an attempt to answer this question
for the classical interference channel~\cite{sasoglu2008successive}. This
attempt exploits 
rate-splitting~\cite{GRUW01} and a
careful analysis of the geometrical structure of the four-dimensional region (corresponding to the two natural multiple access channels defined by the interference channel) that projects down to the two-dimensional Chong-Motani-Garg
region~\cite{CMG06}. The Chong-Motani-Garg region is known to be equivalent to the Han-Kobayashi region when
considering all possible input
distributions~\cite{CMGE08,kobayashi2007further}. The argument of Ref.~\cite{sasoglu2008successive} rests on
an assumption that the change of the code distribution dictated by applying
the rate-splitting technique at the convenience of some receiver does not
affect the other receiver's decoding ability.  Unfortunately, this assumption
does not hold in general. We explain this issue in greater detail in the following paragraphs.

Consider an input distribution $p_{X}\left(  x\right)  $ on some alphabet
$\mathcal{X}$. Let $\mathcal{C}_{0}$ be the codebook obtained by picking
$2^{nR}$ independent codewords of length $n$ distributed according to
$p_{X^{n}}\left(  x^{n}\right)  $. A split of $p_{X}\left(  x\right)  $
consists of a function $f:\mathcal{X}\times\mathcal{X}\rightarrow\mathcal{X}$
and distributions $p_{U}\left(  u\right)  $ and $p_{V}\left(  v\right)  $ such
that $f(U,V)\sim p_{X}\left(  x\right)  $ where $U\sim p_{U}\left(  u\right)
$ and $V\sim p_{V}\left(  v\right)  $ are independent~\cite{GRUW01}. The
rate-splitting technique in general refers to following coding strategy.
Generate a code $\mathcal{C}_{U}$ from the distribution $p_{U^{n}}\left(
u^{n}\right)  $ consisting of $2^{nR_{U}}$ independent codewords and a code
$\mathcal{C}_{V}$ from the distribution $p_{V^{n}}\left(  v^{n}\right)  $
consisting of $2^{nR_{V}}$ independent codewords, where $R_{U}+R_{V}=R$. The
codebook $\mathcal{C}_{\text{split}}$ is defined as $\{f^{n}(u^{n}%
,v^{n}):(u^{n},v^{n})\in\mathcal{C}_{U}\times\mathcal{C}_{V}\}$. Note that
$\mathcal{C}_{\text{split}}$ contains $2^{n(R_{U}+R_{V})}=2^{nR}$ codewords.
Furthermore, the codewords of $\mathcal{C}_{\text{split}}$ are all distributed
according to $p_{X^{n}}\left(  x^{n}\right)  $. The difference between this
codebook and $\mathcal{C}_{0}$ is that the codewords in $\mathcal{C}%
_{\text{split}}$ are \textit{not} pairwise independent because two codewords
in $\mathcal{C}_{\text{split}}$ could arise from the same $u^{n}$ and
$v_{1}^{n}\neq v_{2}^{n}$ where $u^{n}\in\mathcal{C}_{U}$ and $v_{1}^{n}%
,v_{2}^{n}\in\mathcal{C}_{V}$.

Now we describe how to choose the rates $R_{U}$ and $R_{V}$. Suppose that
$R=I(X;Y)$ where $Y$ is the output of a channel on input $X$. Then a natural
choice for $R_{U}$ and $R_{V}$ is $R_{U}=I(U;Y)$ and $R_{V}=I(V;Y|U)$ because
$I(X;Y)=I(U;Y)+I(V;Y|U)$. Observe that the values of $R_{U}$ and $R_{V}$
depend on the channel. Consider now a code for an interference channel where
$X$ is to be decoded by both receivers. Such an additional requirement arises
for example for the common messages in the Han-Kobayashi inner bound strategy.
 Let $R=I(X;Y_{1})$ and $R\leq
I(X;Y_{2})$. Using the codebook $\mathcal{C}_{0}$, both receivers are able to
decode $X$. However, when coding for a multiple access channel with output
$Y_{1}$, we might want to split $p_{X}\left(  x\right)  $ into $p_{U}\left(
u\right)  $ and $p_{V}\left(  v\right)  $ and use the codebook $\mathcal{C}%
_{\text{split}}$ for $X$ with rates $R_{U}=I(U;Y_{1})$ and $R_{V}%
=I(V;Y_{1}|U)$ instead of using $\mathcal{C}_{0}$~\cite{GRUW01}. We perform this split
because we want to get a non-corner point of the rate region for the multiple access channel with output $Y_1$
only using successive decoding.
In this case,
Receiver$~1$ can decode with small error probability. We should however keep
in mind that we are coding for an interference channel and we also want
Receiver$~2$ to decode $X$. The problem is that it is possible that
$R_{U}=I(U;Y_{1})>I(U;Y_{2})$, in which case Receiver$~2$ cannot decode $U$
and thus cannot decode $X$. In this case, the code obtained by splitting
according to the first receiver's prescription is not a good code for the second receiver and hence not a good code 
for the
interference channel.

One can however use rate-splitting to obtain potentially better rates than the four points \eqref{eq:sd-1}-\eqref{eq:sd-4} that can be achieved using a simple successive decoding strategy. In fact, splitting the two inputs of the interference channel as in the Han-Kobayashi strategy into a ``personal'' and a ``common'' part and requiring each receiver to decode both common parts induces two 3-user multiple access channels. One can naturally use all $6 \times 6$ pairs of decoding orders to obtain an achievable rate pair for the interference channel. Figure \ref{fig:jt-succ-decoder-rate-splitting} shows some rates that can be achieved using such a strategy for a classical Gaussian interference channel.

	\begin{figure*}[ptb]
	\begin{center}
	\includegraphics[width=6in]{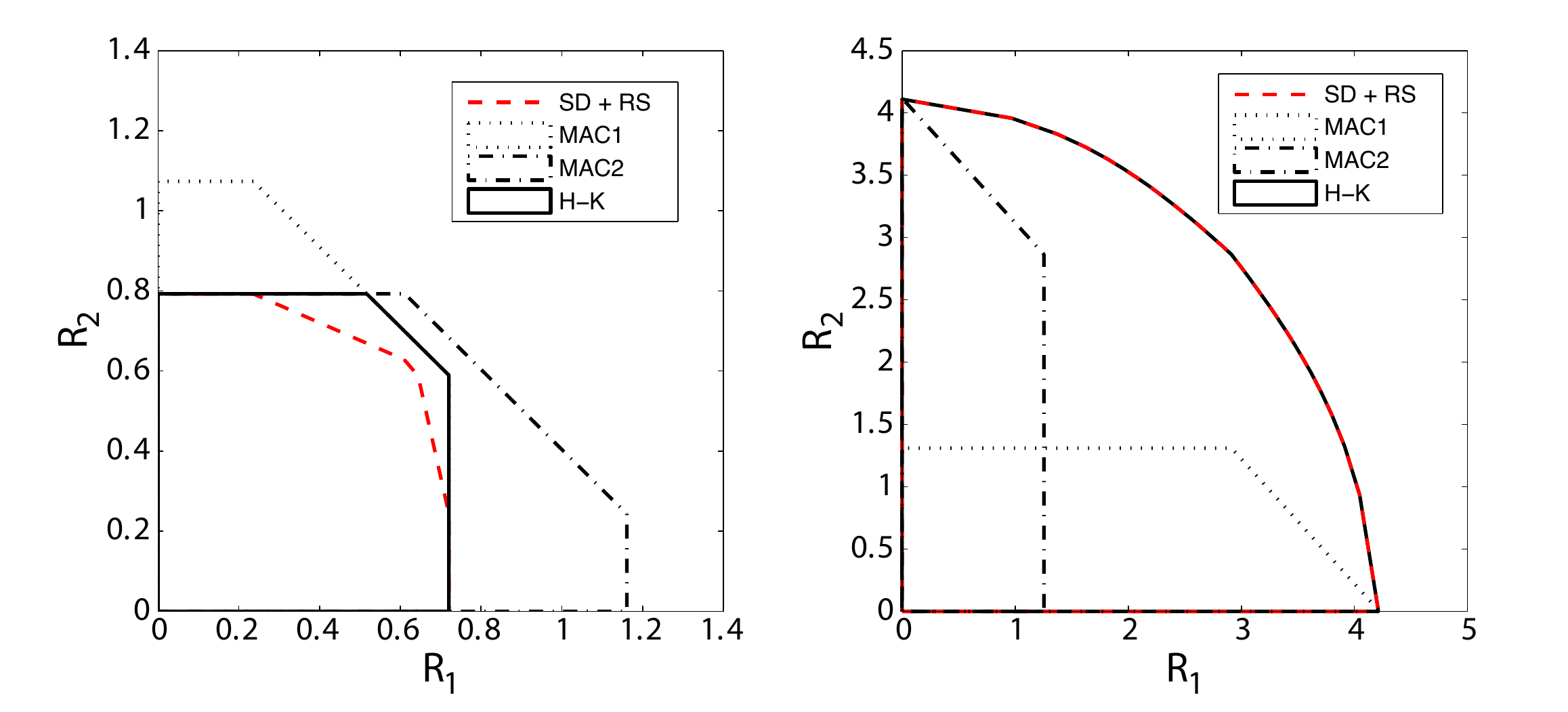}%
	\caption{These two figures plot rate pairs that the senders and receivers in a classical
	Gaussian interference channel can achieve with successive decoding and rate-splitting (SD+RS). The figures
	compare these rates with those achievable by the Han-Kobayashi (HK) coding strategy, while also plotting
	the regions corresponding to the two induced multiple access channels to each
	receiver (MAC1 and MAC2). The LHS figure
	demonstrates that, for a particular choice of signal to noise (SNR)
	 and interference to noise (INR) parameters
	(SNR1 = 1.7, SNR2 = 2, INR1 = 3.4, INR2 = 4),
	successive decoding with rate-splitting does not perform as well as the Han-Kobayashi strategy. The RHS figure
	demonstrates that, for a different choice of parameters 
	(SNR1 =  343, SNR2 = 296, INR1 = 5, INR2 = 5),
	the two strategies perform equally well.}
	\label{fig:jt-succ-decoder-rate-splitting}
	\end{center}
	\end{figure*}

Of course, it is possible to split the inputs even further, leading to two six-user multiple access channels. 
An interesting open question is to determine whether such a strategy can achieve the full Han-Kobayashi region---such a result would be important for the quantum interference channel because it would immediately lead to a way to achieve the analogous Han-Kobayashi region without employing  Conjecture~\ref{conj:sim-dec}.

\subsection{Outer Bound}
\label{sec:outer-bound}

We also give a simple outer bound for the
capacity of the quantum interference channel. This result follows
naturally from a classical result of Sato's \cite{Sato77}, where he observes
that any code for the quantum interference channel also gives codes for three
quantum multiple access channel subproblems, one for Receiver~1, another
for Receiver~2, and a third for the two receivers considered together.
Thus, if we have an outer bound on the underlying quantum
multiple access channel capacities \cite{winter2001capacity}, then we can
trivially get an outer bound on the quantum interference channel capacity. We
omit the following theorem's proof because of its similarity to Sato's proof.

\begin{theorem}
\label{thm:sato-weaker} Consider the Sato region defined as follows:
\begin{equation}
\mathcal{R}_{\text{Sato}}(\mathcal{N})\triangleq 
\bigcup_{p_Q(q)p_{1}(x_{1}|q)p_{2}(x_{2}|q)}\{(R_{1},R_{2})\}, \label{region:Gsato}%
\end{equation}
where $R_{1}$ and $R_{2}$ are rates satisfying the following
inequalities:%
\begin{align}
R_{1}  &  \leq I(X_{1};B_{1}|X_{2}Q)_{\theta},\\
R_{2}  &  \leq I(X_{2};B_{2}|X_{1}Q)_{\theta},\label{Gsato}\\
R_{1}+R_{2}  &  \leq I(X_{1}X_{2};B_{1}B_{2}|Q)_{\theta}.
\end{align}
The above entropic quantities are with respect to the following state%
\begin{multline}
\theta^{QX_{1}X_{2}B_{1}B_{2}}\equiv\sum_{q,x_{1},x_{2}}
p_Q(q)p_{1}(x_{1}|q)p_{2}(x_{2}|q) \ 
|q\rangle\langle q|^{Q}\otimes 
\\|x_{1}\rangle\langle x_{1}|^{X_{1}}\otimes|x_{2}\rangle\langle
x_{2}|^{X_{2}}\otimes\rho_{x_{1}x_{2}}^{B_{1}B_{2}}.
\end{multline}
Then the region $\mathcal{R}_{\text{Sato}}$ forms an outer bound on the
capacity region of the quantum interference channel.
\end{theorem}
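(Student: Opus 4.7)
The plan is to adapt Sato's classical argument~\cite{Sato77} to the quantum setting. Since the interference channel is ccqq (classical inputs, quantum outputs), all of the standard structural tools---chain rule, data processing, non-negativity of quantum mutual information when conditioning on a classical system, and Fano's inequality for a classical decoding output---apply without modification.

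First, I would observe that any $(n, R_1-\delta, R_2-\delta, \epsilon)$ code for the quantum interference channel induces codes for three ``augmented'' subproblems, in each of which the error probability can only decrease compared to the original code: (i) Receiver~1 is given $X_2^n$ as genie-aided side information and decodes $L$; (ii) Receiver~2 is given $X_1^n$ as genie-aided side information and decodes $M$; and (iii) a hypothetical super-receiver with joint access to $B_1^n B_2^n$ decodes the pair $(L,M)$, which is exactly a two-sender quantum MAC subproblem to which Winter's converse~\cite{winter2001capacity} applies.

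Next, I would apply Fano's inequality in each subproblem. For (i), using the independence of $L$ and $X_2^n$,
\[
n(R_1 - \delta) \;=\; H(L) \;=\; I(L;B_1^n | X_2^n) + H(L | B_1^n X_2^n) \;\leq\; I(L;B_1^n | X_2^n) + n\epsilon_n,
\]
with $\epsilon_n \to 0$ as $\epsilon\to 0$ and $n\to \infty$. The chain rule then gives $I(L;B_1^n|X_2^n) = \sum_i I(L; B_{1,i} | B_1^{i-1} X_2^n)$, and the tensor-product structure of the $n$-fold channel---conditional on $(X_{1,i}, X_{2,i})$, the output $B_{1,i}$ is independent of $(L, B_1^{i-1}, X_{2,\neq i})$---combined with data processing yields $I(L; B_{1,i} | B_1^{i-1} X_2^n) \leq I(X_{1,i}; B_{1,i} | X_{2,i})$. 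Introducing a time-sharing variable $Q$ uniform on $\{1,\dots,n\}$ and setting $X_1 := X_{1,Q}$, $X_2 := X_{2,Q}$, $B_1 := B_{1,Q}$ single-letterizes the bound to $R_1 - \delta \leq I(X_1; B_1 | X_2 Q) + \epsilon_n$. A symmetric argument handles (ii), and the same machinery applied jointly to $(L,M)$ and $(B_1^n, B_2^n)$ in subproblem (iii) gives $R_1 + R_2 - 2\delta \leq I(X_1 X_2 ; B_1 B_2 | Q) + \epsilon_n'$. Taking $\epsilon \to 0$ and $n\to\infty$ and noting that the induced input distribution factorizes as $p_Q(q)\,p_1(x_1|q)\,p_2(x_2|q)$ (since $L$ and $M$ are chosen independently) completes the proof.

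The main subtlety is the single-letterization step, specifically the inequality $I(L; B_{1,i} | B_1^{i-1} X_2^n) \leq I(X_{1,i}; B_{1,i} | X_{2,i})$, which must be justified using the Markov chain $L - (X_{1,i}, X_{2,i}) - B_{1,i}$ given the other conditioning variables. Because the channel inputs are classical and the uses are independent, this Markov reasoning is identical to the classical case---which is precisely why the authors defer to Sato's proof. The only genuinely ``quantum'' ingredient is Winter's quantum MAC converse invoked in subproblem (iii), which ensures the sum-rate bound is in terms of the quantum conditional mutual information $I(X_1 X_2; B_1 B_2 | Q)$ rather than a regularized expression.
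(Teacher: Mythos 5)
Your proof is correct and follows exactly the route the paper indicates but omits: Sato's genie-aided subproblem decomposition (give Receiver~1 the side information $X_2^n$, give Receiver~2 the side information $X_1^n$, and introduce a hypothetical super-receiver holding $B_1^n B_2^n$), followed by Fano, the chain rule, data processing, and time-sharing to single-letterize; the only distinctly quantum ingredients are that entropies are von Neumann, the chain rule and non-negativity of conditional mutual information come from strong subadditivity, and Fano is applied to the classical outcome of the decoding POVM (which upper-bounds $H(L \,|\, B_1^n X_2^n)$ by data processing). The one step worth spelling out more explicitly for a quantum reader is the single-letterization $I(L; B_{1,i} \,|\, B_1^{i-1} X_2^n) \leq I(X_{1,i}; B_{1,i} \,|\, X_{2,i})$: here $B_1^{i-1}$ is quantum, so one should note that, conditioned on $(X_1^n, X_2^n)$, the $n$ output systems are in a tensor-product state, which gives $I(L\, B_1^{i-1}\, X_{2,\neq i}; B_{1,i} \,|\, X_{1,i} X_{2,i}) = 0$ and hence the desired Markov-type inequality via two applications of the chain rule; likewise, the factorization $p(x_1,x_2|q) = p_1(x_1|q)\,p_2(x_2|q)$ of the induced single-letter input distribution follows from the independence of $L$ and $M$.
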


\section{The Connection to Unitary Gate Capacities}

\label{sec:gate-capacities}Considerable effort has been devoted to the problem
of establishing the information theoretic capacities of an \emph{interaction}
$U:C\otimes D\rightarrow C\otimes D$ between two quantum
systems~\cite{BennettHLS03,HarrowLeung05,HarrowLeung08,HarrowShor10}. One
imagines that Charlie controls the system represented by the $C$ Hilbert space
while Donna controls $D$, and that they would like to exploit $U$ to
communicate or establish correlations. (More generally, the interaction might
be modeled by a Hamiltonian, but that situation can be reduced to the unitary
case.) Since $U$ has two inputs and two outputs, this is a special case of a
quantum interference channel, and so Theorem~\ref{thm:han-kobayashi} will
yield achievable rates for classical communication over $U$ and, as we shall
see, significantly more.%

\begin{figure}
[ptb]
\begin{center}
\includegraphics[
natheight=1.932900in,
natwidth=4.459800in,
width=3.1117in
]%
{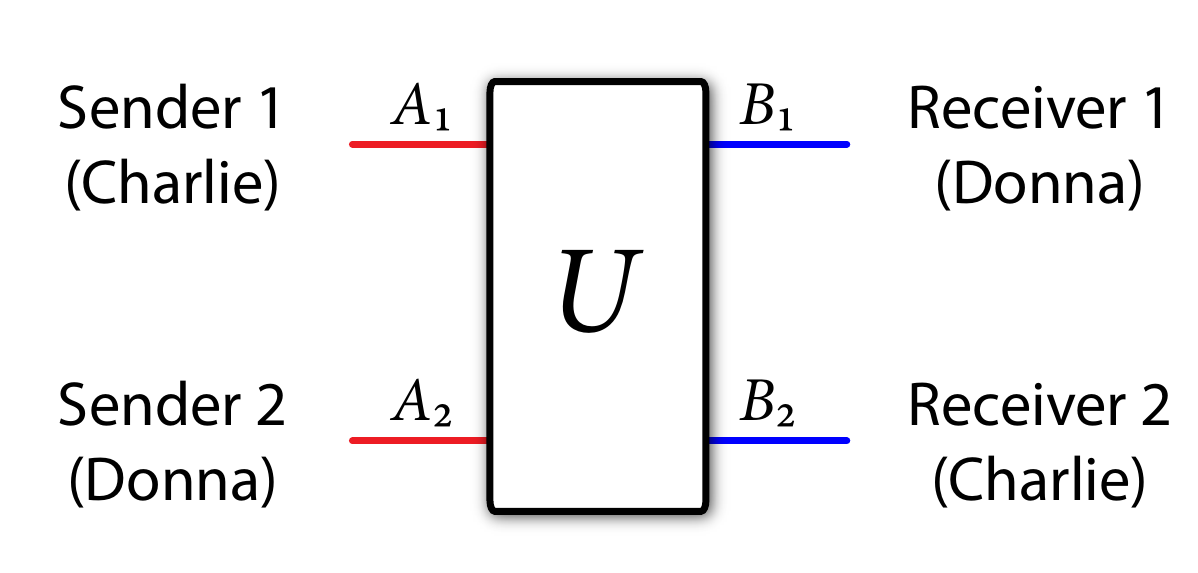}%
\caption{The connection between a quantum interference channel and a
bidirectional unitary gate. The quantum interference channel has quantum
inputs $A_{1}$ and $A_{2}$ and quantum outputs $B_{1}$ and $B_{2}$. We can
identify Sender~1 and Receiver~2 as Charlie and Sender~2 and Receiver~1 as
Donna to make a connection with the bidirectional unitary gate setting.}%
\label{fig:unitary-gate}%
\end{center}
\end{figure}
When $U$ is thought of as an interference channel (say, with quantum inputs
$A_{1}$ and $A_{2}$ and quantum outputs $B_{1}$ and $B_{2}$ as discussed at
the end of Section~\ref{sec:HK}), Charlie plays the roles of both Sender 1 and
Receiver 2, while Donna plays the roles of both Sender 2 and Receiver 1
(Figure~\ref{fig:unitary-gate}\ depicts this communication scenario).
Theorem~\ref{thm:han-kobayashi} then gives achievable rates for simultaneous
Charlie-to-Donna and Donna-to-Charlie classical communication over $U$.
Indeed, it appears to provide the first nontrivial protocol accomplishing this
task for general bidirectional channels. (Earlier protocols assumed free
shared entanglement between Charlie and Donna~\cite{BennettHLS03}.) To apply
the theorem, it suffices to identify $A_{1}=B_{2}=C$ and $A_{2}=B_{1}=D$ in
the interference channel $\mathcal{N}^{A_{1}A_{2}\rightarrow B_{1}B_{2}}%
(\rho)=U\rho U^{\dagger}$. The communication rates achievable
for the $\theta$-SWAP channel of Example~\ref{ex:theta-SWAP}, for instance, apply equally well to this setting.

The fact that Charlie and Donna are each both sender and receiver gives them
some flexibility not available for general interference channels. Most
importantly, in this \textquotedblleft bidirectional\textquotedblright%
\ setting, they are permitted to use $U$ sequentially, reinserting their
outputs into the channel in each successive round~\cite{BennettHLS03}. Codes
for standard interference channels, on the other hand, cannot take advantage
of this flexibility, and so finding the optimal trade-off between forward and
backward communication will likely require codes specifically tailored to the
bidirectional setting.

As an interference channel, $U$ is also special because the \emph{only} noise
is due to interference: the channel itself is noiseless. Because $U$ does not
leak information to an environment, communication can be made coherent at
essentially no cost. This allowed Harrow and Leung to establish the following
remarkable result, which we state informally using resource
inequalities~\cite{DHW08}. Recall that $[c\rightarrow c]$ represents a
classical bit of communication from Charlie to Donna, $[q\rightarrow q]$ one
qubit of communication from Charlie to Donna, and $[q\rightarrow qq]$ one
cobit from Charlie to Donna, that is, the isometry $\sum_{x}\alpha
_{x}|x\rangle^{C}\rightarrow\sum_{x}\alpha_{x}|x\rangle^{C}|x\rangle^{D}$.
$[c\leftarrow c]$~\cite{Harrow04}, $[q\leftarrow q]$ and $[qq\leftarrow q]$
represent the same resources but with Donna the sender and Charlie the
receiver. Finally, $[qq]$ represents a single shared ebit. For a rigorous
definition of resource inequalities, see~\cite{DHW08} and \cite{HarrowShor10}.

\begin{theorem}
[Harrow and Leung~\cite{HarrowLeung05}]For any bipartite unitary (or isometry)
$U$ and $R_{1},R_{2}\geq0$, each of the following resource inequalities is equivalent:%
\begin{align}
\langle U\rangle &  \geq R_{1}[c\rightarrow c]+R_{2}[c\leftarrow c]+E[qq],\\
\langle U\rangle &  \geq R_{1}[q\rightarrow qq]+R_{2}[qq\leftarrow q]+E[qq],\\
\langle U\rangle &  \geq\frac{R_{1}}{2}[q\rightarrow q]+\frac{R_{2}}%
{2}[q\leftarrow q]+\left(  E-\frac{R_{1}+R_{2}}{2}\right)  [qq].
\end{align}

\end{theorem}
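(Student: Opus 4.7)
The plan is to cycle through the three statements, invoking two standard tools: Harrow's cobit identity $2[q\to qq] = [q\to q] + [qq]$, which is pure resource calculus, and the coherent communication identity for noiseless channels, which is the nontrivial input and crucially exploits that $U$ is an isometry. I would first dispatch the easy direction (2) $\Rightarrow$ (1): the receiver of a cobit $\sum_x \alpha_x |x\rangle^C \to \sum_x \alpha_x |x\rangle^C |x\rangle^D$ can measure the copy in the computational basis to recover the classical label, so $[q\to qq] \geq [c\to c]$ in both directions; substituting in (2) yields (1) with the same ebit rate $E$. For the equivalence (2) $\Leftrightarrow$ (3), I would apply the cobit identity: the inequality $2[q\to qq] \geq [q\to q] + [qq]$ follows from coherent teleportation (standard teleportation with Alice's Bell measurement replaced by a Bell unitary and the two resulting qubits sent coherently), while the reverse $[q\to q] + [qq] \geq 2[q\to qq]$ follows from coherent super-dense coding. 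Applying the identity to each direction in (2) produces the qubit-plus-ebit decomposition of (3), with the ebit bookkeeping accounting for the shift by $(R_1+R_2)/2$.

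The main obstacle is the nontrivial direction (1) $\Rightarrow$ (2), which upgrades classical communication to cobit communication at the same rate while preserving the entanglement rate. The strategy is \emph{coherentification}: because $U$ is an isometry, any protocol using $U$ with local operations to achieve (1) can be purified by absorbing intermediate measurements into their Stinespring dilations, yielding a unitary circuit. One then runs this circuit in superposition. Instead of Charlie preparing a classical message $|x\rangle$ for a fixed $x$, he attaches an input register $C'$ holding $\sum_x \alpha_x |x\rangle^{C'}$ and uses it to coherently control the channel-input preparation. Donna's decoding, which in the original classical protocol ended in a computational-basis measurement of the recovered message, is replaced by a coherent copy (\textsc{cnot}) of the decoded register into a fresh ancilla $D'$. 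Since $U$ is an isometry and the whole coherentified protocol is a composition of isometries, the joint output on $C'D'$ is exactly $\sum_x \alpha_x |x\rangle^{C'}|x\rangle^{D'}$, which is a cobit, together with the $E$ ebits on other registers that the original protocol produced.

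The hardest step is verifying that the coherentification succeeds simultaneously in both directions: Charlie must coherently encode his forward message while Donna coherently decodes it, and Donna must do the symmetric thing for the backward message, without the two sub-protocols decohering one another. This is where the global unitarity of $U$ is essential---any ``classical'' intermediate data in the original bidirectional protocol, including feedback that one receiver's partial decoding might pass to help the other, is in fact carried by quantum registers that remain coherent throughout the purified protocol. Once this joint coherentification is in hand, reading off (2) is immediate, and chaining with (2) $\Leftrightarrow$ (3) closes the cycle.
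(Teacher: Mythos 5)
The paper does not actually prove this theorem; it is quoted from Harrow and Leung~\cite{HarrowLeung05} and used as a black box (the only in-text remark after the statement concerns asymptotic and catalytic caveats), so there is no in-paper argument to compare against. Taken on its own terms, your sketch correctly identifies coherentification as the crucial step for $(1) \Rightarrow (2)$ and explains the role of unitarity accurately: the whole protocol (encoding, uses of $U$, decoding) can be purified and run in superposition, and the final computational-basis measurement replaced by a coherent copy, turning classical bits into cobits without disturbing the entanglement yield.

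However, the $(2) \Leftrightarrow (3)$ step as you present it has a genuine bookkeeping gap. Applying $2[q\rightarrow qq] = [q\rightarrow q] + [qq]$ to $(2)$ gives the qubit--ebit form with ebit rate $E + \tfrac{R_1+R_2}{2}$, which is \emph{more} than the $E - \tfrac{R_1+R_2}{2}$ appearing in $(3)$; this yields $(2) \Rightarrow (3)$ comfortably, but going the other way, reconverting $\tfrac{R_1}{2}[q\rightarrow q] + \tfrac{R_1}{2}[qq]$ into $R_1[q\rightarrow qq]$ (and likewise backward) spends $\tfrac{R_1+R_2}{2}$ ebits and leaves only $E - (R_1+R_2)$, short by $R_1 + R_2$ of what $(2)$ requires. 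Your ``ebit bookkeeping accounting for the shift'' does not cancel in the direction you need. To close the loop you also need the identity $[q\rightarrow qq] = [c\rightarrow c] + [qq]$ (a cobit is worth a classical bit \emph{plus} an ebit): with it, $(3) \Rightarrow (1)$ follows by converting the qubits to cobits as above and then cashing each cobit in as a cbit together with an ebit, which exactly restores the entanglement rate to $E$. The working cycle is therefore $(1) \Rightarrow (2)$ by coherentification, then $(2) \Rightarrow (3) \Rightarrow (1)$ by resource calculus---not the $(2) \Leftrightarrow (3)$ tautology your sketch asserts.
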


Note that the inequalities need only hold in the limit of a large number of uses
of $U$ and might require the catalytic use of resources. Still, they imply
that for bidirectional channels, the codes we have designed for sending
classical data can also be used to send cobits, ebits and even qubits. In
particular, any rates of classical communication that are achievable can
automatically be upgraded to cobit communication rates. While our codes should
be effective for cobit communication, they have not been designed to generate
entanglement. While they can do so at the rate $R_{1} + R_{2}$ by virtue of
the fact that a cobit can be used to generate an ebit, that process might be
inefficient. In fact, Harrow and Leung have even exhibited a particular
channel with $C$ and $D$ each consisting of $k$ qubits for which $R_{1} +
R_{2}$ can never exceed $O(\log k)$ but for which $E$ can be larger than
$k-1$~\cite{HarrowLeung08}. For that channel, our codes would produce an
amount of entanglement exponentially smaller than optimal. Rectifying that
problem would require modifying the interference channel codes we developed in
this article to also establish shared randomness between the two receivers;
such shared randomness would automatically become entanglement in the
bidirectional unitary setting.

\section{Outlook}

Calculating the capacity of the interference channel in the classical setting
has been an open problem for many years now, and calculating the capacity of
the quantum interference channel will be at least as difficult to solve. We
have proved that a quantum simultaneous decoder exists for a multiple access
channel with two senders, and we have given some evidence that it should exist for channels
with three senders. This conjecture holds at least
in the case of a quantum multiple access channel in which certain averages of the channel outputs
commute. If this conjecture holds in the general case, it immediately implies
that the Han-Kobayashi rate region, expressed in terms of Holevo information
quantities, is an achievable rate region for the quantum interference channel. Note that even though the general conjecture is still open, the Han-Kobayashi rate region was recently shown to be achievable \cite{S11a}.

Even though Theorem~\ref{thm:han-kobayashi} is now known to hold \cite{S11a}, it would still be very interesting to prove Conjecture~\ref{conj:sim-dec}. A proof of this conjecture
would probably have important consequences for multiuser quantum
information theory since it would allow for many classical
information theory results based on simultaneous decoding
to be adapted to the quantum setting. It could also likely prove an entanglement-assisted
version of a quantum simultaneous decoder by exploiting the coding techniques
from Ref.~\cite{itit2008hsieh}, and this would in turn lead to another
interesting generalization of the Han-Kobayashi rate region where we assume
that senders share entanglement with their partner receivers. Ref.~\cite{XW11} made progress
in this direction by proving the existence of a quantum simultaneous decoder
for an entanglement-assisted quantum multiple access channel with two senders,
though the three-sender case is still open.


Also, just as there are many different capacities for a single-sender
single-receiver quantum channel, we would expect that there are many
interesting capacities that we could study for a quantum interference channel.
In fact, we initially attempted to use some of the well-known decoupling
techniques for the case of quantum information transmission over the quantum
interference channel~\cite{qcap2008first,arx2006anura}, but we were not able
to achieve non-trivial rates.

Another important question to consider for the \textit{quantum} interference
channel is as follows:\ Is there anything that quantum mechanics can offer to
improve upon the Han-Kobayashi achievable rate region? Quantum effects might
play some unexpected role for the quantum interference channel and allow us to
achieve a rate region that is superior to the well-known Han-Kobayashi rate region.

Finally, it could be that quantum simultaneous decoding is not necessary in
order to achieve the Han-Kobayashi region. In fact, our first attempt at the
proof of Theorem~\ref{thm:han-kobayashi}\ was to quantize the successive
decoding method from Ref.~\cite{sasoglu2008successive}, by exploiting the
coding techniques from Refs.~\cite{winter2001capacity,cmp2005dev} tailored for
classical communication. But we found an issue with the technique in
Ref.~\cite{sasoglu2008successive} even for the classical interference channel
because rate-splitting at the convenience of one receiver affects the other
receiver's decoding abilities. Thus, it remains open to determine if a
successive decoding strategy can achieve the Han-Kobayashi rate region.

We acknowledge discussions with Charlie Bennett, Kamil Br\'{a}dler, Nilanjana
Datta, Fr\'{e}d\'{e}ric Dupuis, Saikat Guha, Aram Harrow, Min-Hsiu Hsieh,
Debbie Leung, Will Matthews, Marco Piani, Eren \c{S}a\c{s}o\u{g}lu, Graeme Smith, John Smolin,
Jon Tyson, Mai Vu, Andreas Winter, and Shen Chen Xu. P.~Hayden
acknowledges support from the Canada Research Chairs program, the Perimeter
Institute, CIFAR, FQRNT's INTRIQ, MITACS, NSERC, ONR through grant
N000140811249, and QuantumWorks. M.~M.~Wilde acknowledges support from the
MDEIE (Qu\'{e}bec) PSR-SIIRI international collaboration grant.
I.~Savov acknowledges support from FQRNT and NSERC. %

\appendix

\section{Typical Sequences and Typical Subspaces}

\label{sec:typ-review}Consider a density operator $\rho$ with the following spectral decomposition:%
\[
\rho=\sum_{x}p_{X}\left(  x\right)  \left\vert x\right\rangle \left\langle
x\right\vert .
\]
The weakly typical subspace is defined as the span of all vectors such that
the sample entropy $\overline{H}\left(  x^{n}\right)  $ of their classical
label is close to the true entropy $H\left(  X\right)  $ of the distribution
$p_{X}\left(  x\right)  $ \cite{book2000mikeandike,W11}:%
\[
T_{\delta}^{X^{n}}\equiv\text{span}\left\{  \left\vert x^{n}\right\rangle
:\left\vert \overline{H}\left(  x^{n}\right)  -H\left(  X\right)  \right\vert
\leq\delta\right\}  ,
\]
where%
\begin{align*}
\overline{H}\left(  x^{n}\right)    & \equiv-\frac{1}{n}\log\left(  p_{X^{n}%
}\left(  x^{n}\right)  \right)  ,\\
H\left(  X\right)    & \equiv-\sum_{x}p_{X}\left(  x\right)  \log p_{X}\left(
x\right)  .
\end{align*}
The projector $\Pi_{\rho,\delta}^{n}$\ onto the typical subspace of $\rho$ is
defined as%
\[
\Pi_{\rho,\delta}^{n}\equiv\sum_{x^{n}\in T_{\delta}^{X^{n}}}\left\vert
x^{n}\right\rangle \left\langle x^{n}\right\vert ,
\]
where we have \textquotedblleft overloaded\textquotedblright\ the symbol
$T_{\delta}^{X^{n}}$ to refer also to the set of $\delta$-typical sequences:%
\[
T_{\delta}^{X^{n}}\equiv\left\{  x^{n}:\left\vert \overline{H}\left(
x^{n}\right)  -H\left(  X\right)  \right\vert \leq\delta\right\}  .
\]
The three important properties of the typical projector are as follows:%
\begin{align*}
\text{Tr}\left\{  \Pi_{\rho,\delta}^{n}\rho^{\otimes n}\right\}    &
\geq1-\epsilon,\\
\text{Tr}\left\{  \Pi_{\rho,\delta}^{n}\right\}    & \leq2^{n\left[  H\left(
X\right)  +\delta\right]  },\\
2^{-n\left[  H\left(  X\right)  +\delta\right]  }\Pi_{\rho,\delta}^{n}  &
\leq\Pi_{\rho,\delta}^{n}\rho^{\otimes n}\Pi_{\rho,\delta}^{n}\leq2^{-n\left[
H\left(  X\right)  -\delta\right]  }\Pi_{\rho,\delta}^{n},
\end{align*}
where the first property holds for arbitrary $\epsilon,\delta>0$ and
sufficiently large $n$.

Consider an ensemble $\left\{  p_{X}\left(  x\right)  ,\rho_{x}\right\}
_{x\in\mathcal{X}}$ of states. Suppose that each state $\rho_{x}$ has the
following spectral decomposition:%
\[
\rho_{x}=\sum_{y}p_{Y|X}\left(  y|x\right)  \left\vert y_{x}\right\rangle
\left\langle y_{x}\right\vert .
\]
Consider a density operator $\rho_{x^{n}}$ which is conditional on a classical
sequence $x^{n}\equiv x_{1}\cdots x_{n}$:%
\[
\rho_{x^{n}}\equiv\rho_{x_{1}}\otimes\cdots\otimes\rho_{x_{n}}.
\]
We define the weak conditionally typical subspace as the span of vectors
(conditional on the sequence $x^{n}$) such that the sample conditional entropy
$\overline{H}\left(  y^{n}|x^{n}\right)  $ of their classical labels is close
to the true conditional entropy $H\left(  Y|X\right)  $ of the distribution
$p_{Y|X}\left(  y|x\right)  p_{X}\left(  x\right)  $ \cite{book2000mikeandike,W11}:%
\[
T_{\delta}^{Y^{n}|x^{n}}\equiv\text{span}\left\{  \left\vert y_{x^{n}}%
^{n}\right\rangle :\left\vert \overline{H}\left(  y^{n}|x^{n}\right)
-H\left(  Y|X\right)  \right\vert \leq\delta\right\}  ,
\]
where%
\begin{align*}
\overline{H}\left(  y^{n}|x^{n}\right)    & \equiv-\frac{1}{n}\log\left(
p_{Y^{n}|X^{n}}\left(  y^{n}|x^{n}\right)  \right)  ,\\
H\left(  Y|X\right)    & \equiv-\sum_{x}p_{X}\left(  x\right)  \sum_{y}%
p_{Y|X}\left(  y|x\right)  \log p_{Y|X}\left(  y|x\right)  .
\end{align*}
The projector $\Pi_{\rho_{x^{n}},\delta}$ onto the weak conditionally typical
subspace of $\rho_{x^{n}}$ is as follows:%
\[
\Pi_{\rho_{x^{n}},\delta}\equiv\sum_{y^{n}\in T_{\delta}^{Y^{n}|x^{n}}%
}\left\vert y_{x^{n}}^{n}\right\rangle \left\langle y_{x^{n}}^{n}\right\vert ,
\]
where we have again overloaded the symbol $T_{\delta}^{Y^{n}|x^{n}}$ to refer
to the set of weak conditionally typical sequences:%
\[
T_{\delta}^{Y^{n}|x^{n}}\equiv\left\{  y^{n}:\left\vert \overline{H}\left(
y^{n}|x^{n}\right)  -H\left(  Y|X\right)  \right\vert \leq\delta\right\}  .
\]
The three important properties of the weak conditionally typical projector are
as follows:%
\begin{align*}
\mathbb{E}_{X^{n}}\left\{  \text{Tr}\left\{  \Pi_{\rho_{X^{n}},\delta}%
\rho_{X^{n}}\right\}  \right\}    & \geq1-\epsilon,\\
\text{Tr}\left\{  \Pi_{\rho_{x^{n}},\delta}\right\}    & \leq2^{n\left[
H\left(  Y|X\right)  +\delta\right]  },\\
2^{-n\left[  H\left(  Y|X\right)  +\delta\right]  }\ \Pi_{\rho_{x^{n}}%
,\delta}  & \leq\Pi_{\rho_{x^{n}},\delta}\ \rho_{x^{n}}\ \Pi_{\rho_{x^{n}%
},\delta} \\ & \leq2^{-n\left[  H\left(  Y|X\right)  -\delta\right]  }\ \Pi
_{\rho_{x^{n}},\delta},
\end{align*}
where the first property holds for arbitrary $\epsilon,\delta>0$ and
sufficiently large $n$, and the expectation is with respect to the
distribution $p_{X^{n}}\left(  x^{n}\right)  $.

\section{Gentle Operator Lemma}

\label{sec:useful-lemmas}

\begin{lemma}[Gentle Operator Lemma for Ensembles \cite{itit1999winter,ON07,W11}]\label{lem:gentle-operator}
Given an ensemble $\left\{  p_{X}\left(  x\right)  ,\rho_{x}\right\}  $ with
expected density operator $\rho\equiv\sum_{x}p_{X}\left(  x\right)  \rho_{x}$,
suppose that an operator $\Lambda$ such that $I\geq\Lambda\geq0$ succeeds
with high probability on the state $\rho$:%
\[
\text{Tr}\left\{  \Lambda\rho\right\}  \geq1-\epsilon.
\]
Then the subnormalized state $\sqrt{\Lambda}\rho_{x}\sqrt{\Lambda}$ is close
in expected trace distance to the original state $\rho_{x}$:%
\[
\mathbb{E}_{X}\left\{  \left\Vert \sqrt{\Lambda}\rho_{X}\sqrt{\Lambda}%
-\rho_{X}\right\Vert _{1}\right\}  \leq2\sqrt{\epsilon}.
\]

\end{lemma}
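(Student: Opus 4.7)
The plan is to reduce the ensemble statement to the single-state Gentle Operator Lemma, which asserts that if $0 \leq \Lambda \leq I$ and $\text{Tr}\{\Lambda \sigma\} \geq 1 - \delta$ for any state $\sigma$, then $\|\sqrt{\Lambda} \sigma \sqrt{\Lambda} - \sigma\|_1 \leq 2\sqrt{\delta}$. The single-state version is standard and can be proved by expanding $\sigma = \sum_i \lambda_i |\psi_i\rangle\langle\psi_i|$, using the triangle inequality to reduce to the pure-state case, and then bounding $\|\sqrt{\Lambda}|\psi\rangle\langle\psi|\sqrt{\Lambda} - |\psi\rangle\langle\psi|\|_1$ via the relation between trace distance and fidelity together with the inequality $\langle \psi | \sqrt{\Lambda} | \psi \rangle \geq \langle \psi | \Lambda | \psi \rangle$ (which holds whenever $\Lambda \leq I$).

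Given the single-state lemma, the ensemble version follows from two observations. First, by linearity of trace, the hypothesis $\text{Tr}\{\Lambda \rho\} \geq 1 - \epsilon$ becomes
\[
\mathbb{E}_X \left[\text{Tr}\{\Lambda \rho_X\}\right] = \sum_x p_X(x)\, \text{Tr}\{\Lambda \rho_x\} \geq 1 - \epsilon,
\]
so defining $\epsilon_x \equiv 1 - \text{Tr}\{\Lambda \rho_x\} \geq 0$ we have $\mathbb{E}_X[\epsilon_X] \leq \epsilon$. Second, applying the single-state Gentle Operator Lemma individually to each $\rho_x$ yields
\[
\|\sqrt{\Lambda}\rho_x \sqrt{\Lambda} - \rho_x\|_1 \leq 2\sqrt{\epsilon_x}.
\]

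Taking the expectation over $X$ and then invoking Jensen's inequality using the concavity of $\sqrt{\cdot}$ gives
\[
\mathbb{E}_X\left\{\|\sqrt{\Lambda}\rho_X \sqrt{\Lambda} - \rho_X\|_1\right\} \leq 2\, \mathbb{E}_X[\sqrt{\epsilon_X}] \leq 2\sqrt{\mathbb{E}_X[\epsilon_X]} \leq 2\sqrt{\epsilon},
\]
which is the desired bound. There is no real obstacle; the only subtle step is remembering to apply Jensen's inequality at the end in order to convert the expectation of the square roots into the square root of the expectation, which is what allows the $\epsilon_x$'s to be replaced by the single averaged parameter $\epsilon$ appearing in the hypothesis.
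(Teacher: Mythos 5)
Your proof is correct. The paper states this lemma without proof, citing the external references \cite{itit1999winter,ON07,W11}, so there is no in-paper argument to compare against; but your reduction is exactly the standard one used in those sources. The key observation—that the single-state bound $\left\Vert \sqrt{\Lambda}\rho_{x}\sqrt{\Lambda}-\rho_{x}\right\Vert _{1}\leq2\sqrt{\epsilon_{x}}$ combined with concavity of the square root (Jensen) converts $\mathbb{E}_{X}\left[\sqrt{\epsilon_{X}}\right]\leq\sqrt{\mathbb{E}_{X}\left[\epsilon_{X}\right]}\leq\sqrt{\epsilon}$—is precisely what makes the ensemble version work with the same constant as the single-state version, and you have identified it cleanly.
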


\bibliographystyle{plain}
\bibliography{interferenceChannel}

\end{document}